\newenvironment{myitemize}
{\vspace{-1mm}\begin{list}{$\bullet$}%
    {\setlength{\itemsep}{-3pt}}%
    {\setlength{\topsep}{0pt}}%
    {\setlength{\partopsep}{0pt}}%
    {\setlength{\parsep}{0pt}}%
}
{\end{list}%
}
\newcommand{\floor} [1] {\lfloor #1 \rfloor}
\newcommand{\ket} [1] {\vert #1 \rangle}
\newcommand{\bra} [1] {\langle #1 \vert}
\newcommand{\braket}[2]{\langle #1 | #2 \rangle}
\newcommand{\smfrac}[2]{\mbox{$\frac{#1}{#2}$}}
\newtheorem{thm}{Theorem}
\newtheorem{lemma}{Lemma}
\newtheorem{defn}{Definition}
\newtheorem{cor}{Corollary}
\newtheorem{claim}{Claim}
\newtheorem{proposition}{Proposition}
\newcommand{\abs}[1]{\mid\! #1\! \mid}
\newcommand{\sgn}{\mathrm{sgn}}
\renewcommand{\L}{{\cal L}}
\newcommand{\C}{{\cal C}}
\newcommand{\Q}{{\cal Q}}
\newcommand{\Real}{\mathbb R}
\newcommand{\eps}{\varepsilon}
\newcommand{\A}{\mathcal{A}}
\newcommand{\B}{\mathcal{B}}
\newcommand{\V}{\mathcal{V}}
\newcommand{\norm}[1]{|\!|#1|\!|}
\newcommand{\Star}{{\displaystyle\boldsymbol\star}}
\newcommand{\pub}{\mathrm{pub}}
\newcommand{\ent}{\mathrm{ent}}
\newcommand{\id}{\mathbbm 1}
\newcommand{\conv}{\mathsf{conv}}
\newcommand{\aff}{\mathsf{aff}}
\newcommand{\aitch}{{\mathcal{H}}}
\newcommand{\ay}{\mathcal{A}}
\newcommand{\bee}{\mathcal{B}}
\newcommand{\ee}{\mathcal{E}}
\newcommand{\eks}{\mathcal{X}}
\newcommand{\wye}{\mathcal{Y}}
\newcommand{\cls}{\nu}
\newcommand{\qls}{\gamma_2}
\newcommand{\cmarg}{\nu}
\newcommand{\qmarg}{\gamma_2}
\newcommand{\cquasi}{\tilde{\nu}}
\newcommand{\qquasi}{\tilde{\gamma}_2}
\newcommand{\enlever}[1]{}
\newcommand{\remove}[1]{}
\title{ The communication complexity of non-signaling distributions}
\author{Julien Degorre
	\thanks{Centre for Quantum Technologies, Singapore}
\and
Marc Kaplan
	\thanks{DIRO, Universit\'e de Montr\'eal}
\and
Sophie Laplante
	\thanks{LRI, Universit\'e Paris-Sud XI}
\and
J\'er\'emie Roland
	\thanks{NEC Laboratories America}
}
\date{}
\begin{document}

\maketitle

\begin{abstract}
We study a model of communication complexity that
encompasses many well-studied problems, including
classical and quantum communication complexity,
the complexity of simulating distributions arising from
bipartite measurements of shared quantum states, and
XOR games.  In this model, Alice gets an input $x$, Bob
gets an input $y$, and their goal is to each produce an output
$a,b$ distributed according to some pre-specified
joint distribution
$p(a,b|x,y)$.
Our results apply to any non-signaling distribution,
that is, those where Alice's marginal distribution does not
depend on Bob's input, and vice versa.

By taking a geometric view of the non-signaling distributions, we introduce
a simple new technique based on affine combinations of
lower-complexity distributions, and we give the first general technique
to apply to all these settings, with elementary proofs and
very intuitive interpretations.
Specifically, we introduce two complexity measures, one which
gives lower bounds on classical communication, and one for quantum communication.
These measures
can be expressed as convex optimization problems.
We show that the dual formulations have
a striking interpretation, since they
coincide with maximum violations of Bell and Tsirelson inequalities.
The dual expressions are closely related to the winning
probability of XOR games.
Despite their apparent simplicity, these lower bounds
subsume many known communication complexity
lower bound methods, most notably
the recent lower bounds of Linial and Shraibman for the
special case of Boolean functions.

We show that as in the case of Boolean functions, the gap
between the quantum and classical lower bounds is at most linear in the
size of the support of the distribution, and does not depend on the
size of the inputs.  This translates into a bound on
the gap between maximal Bell and Tsirelson inequality violations, which
was previously known only for the case of distributions with Boolean
outcomes and uniform marginals.  It also allows us to show that for
some distributions, information theoretic methods are necessary
to prove strong lower bounds.

Finally, we give an exponential upper bound on quantum and classical
communication complexity in the simultaneous messages model, for
any non-signaling distribution. One consequence of this is a simple proof that any quantum distribution can be approximated with a constant number of bits of communication.
\end{abstract}

{\section{Introduction}}

Communication complexity of Boolean functions has a long
and rich past, stemming from the paper of Yao in 1979~\cite{yao79},
whose motivation was to study the area of VLSI circuits.
In the years that followed, tremendous
progress has been made in developing a rich array of
lower bound techniques for various models of communication
complexity (see e.g.~\cite{KN97}).

From the physics side, the question of studying how much
communication is needed
to simulate distributions arising from physical phenomena,
such as measuring bipartite quantum states, was posed in 1992
by Maudlin, a philosopher of science, who wanted to quantify
the non-locality inherent to these systems~\cite{maudlin92}.
Maudlin, and the authors who followed~\cite{bct99,steiner00,tb03,cgmp05,dlr07}
(some independently of his work, and of each other)
progressively improved upper bounds on simulating
correlations of the 2 qubit singlet state.
In a recent breakthrough, Regev and Toner~\cite{rt07}
proved that two bits of communication
suffice to simulate the correlations arising from two-outcome
measurements of arbitrary-dimension bipartite quantum states.
In the more general case of non-binary outcomes, Shi and Zhu gave a protocol to approximate quantum distributions within constant error, using constant communication~\cite{shi05}. No non-trivial lower bounds are known for this problem.

In this paper, we consider the more general framework
of simulating non-signaling distributions.
These are distributions of the form $p(a,b|x,y)$,
where Alice gets input $x$ and produces an output $a$, and
Bob gets input $y$ and outputs $b$.
The non-signaling condition is a fundamental property of
bipartite physical systems, which
states that the players gain no information on the
other player's input.
In particular, distributions arising from
quantum measurements on shared bipartite states are
non-signaling, and Boolean functions
may be reduced  to extremal non-signaling distributions{ with Boolean outcomes and uniform marginals}.

Outside of the realm of Boolean functions, a very limited number of
tools are available to analyze the communication complexity of
distributed tasks, especially for quantum distributions with non-uniform
marginals.  In such cases, the distributions live in a larger-dimensional space and cannot be cast as communication matrices, so standard techniques do not apply.
The structure of non-signaling distributions has been the object of much study
in the quantum information community, yet outside the case of distributions
with Boolean inputs or outcomes~\cite{jones05,barrettpironio05}, or
with uniform marginal distributions,
much remains to be understood.

We introduce a new method to study
all non-signaling distributions, including the
case of non-Boolean outcomes and non-uniform marginals.
Our starting point is the observation that non-signaling distributions
coincide with affine (instead of convex) combinations of distributions that do
not require any communication, called local distributions.
With this elegant geometric formulation in mind, we show how to relate communication to
non-locality, where we measure non-locality by how far, in terms of its ``best'' affine representation,
a distribution is from the
convex set of local distributions.
Although they are formulated, and proven,
in quite a different way, our lower bounds turn out to subsume
Linial and Shraibman's nuclear and factorization norm lower bounds~\cite{ls07},
in the restricted case of Boolean functions.
Similarly, our upper bounds extend the upper bounds
of Shi and Zhu for
approximating quantum distributions~\cite{shi05}
to all non-signaling distributions (in particular distributions obtained by protocols using entanglement \emph{and} quantum communication).

Our complexity measures can be expressed as
convex optimization problems.
We may consider dual expressions, and
these turn out to correspond precisely to maximal Bell
inequality violations in the case of classical communication,
and Tsirelson inequality violations for quantum communication.
This confirms the long-held physics
intuition that large Bell inequality violations
should lead to large lower bounds on communication complexity.

We also show that there cannot be a large gap between the classical and quantum expressions.
This was previously known only in the case of distributions with
Boolean outcomes and
uniform marginals, and followed by Tsirelson's theorem and
Grothendieck's inequality, neither of which are known to
extend beyond this special case.  This also shows that our
method, as was already the case for Linial and Shraibman's
bounds, cannot hope to prove large gaps between classical and
quantum communication complexity.  While this is a negative
result, it also sheds some light on the relationship
between the Linial and Shraibman family of lower bound techniques,
and the information theoretic methods, such as the recent
subdistribution bound~\cite{jkn08}, one of the
few lower bound techniques not known to follow
from Linial and Shraibman.
We give an example of a
problem~\cite{bct99} for which rectangle size gives
an exponentially better lower bound than our method.

\vspace{2mm}

\paragraph{Summary of results}
The paper is organized as follows.  In Section~\ref{sec:preliminaries},
we give the required definitions and models of communication complexity
and characterizations of the classes of
distributions we consider.

In Section~\ref{sec:lower-bounds},
we prove our lower bound on classical and quantum
communication ({\bf Theorem~\ref{cor:lower-bound-quasi}}), and show that
it coincides with Linial and Shraibman's method in the special
case of Boolean functions
({\bf Theorems~\ref{lemma:cmargvscquasi} and~\ref{lem:epsilon-alpha}}).
Our lower bounds are convex optimization programs (linear programs in
the classical case), and in Section~\ref{sec:dual},
we show that the dual programs have a natural interpretation
in quantum information, as they coincide with Bell (or Tsirelson)
inequality violations ({\bf Theorem~\ref{thm:lp-bell}}).
We give a dual expression which also has a natural
interpretation, as the maximum winning probability of
an associated XOR game ({\bf Corollary~\ref{cor:LS-game}}).
The primal form turns out to be the multiplicative inverse
of the maximum winning probability of the associated XOR game,
where all inputs have the same winning probability.

In Section~\ref{sec:gamma-vs-nu}, we compare the two
methods and show that
the quantum and classical lower bound expressions can
differ by at most a factor that is linear in the number of outcomes
({\bf Theorem~\ref{thm:nu-gamma2}}).
When viewed as maximum Bell inequality violations, our
results imply that if Alice and Bob each have $k$ possible outcomes,
then the largest Bell inequality violation for quantum distributions is
at most $O(k^2)$.

Finally, in Section~\ref{sec:smp}, we give upper bounds
on simultaneous messages complexity in terms of our lower bound expression
({\bf Theorem~\ref{thm:smp}}).  We use
fingerprinting methods~\cite{bcwdw01, yaofinger03, shi05, gkr06}
to give very simple proofs that classical communication with
shared randomness, or quantum communication with shared entanglement,
can be simulated in the simultaneous messages model,
with exponential blowup in communication, and in particular that
any quantum distribution can be approximated with
constant communication.

\paragraph{Related work}

The use of affine combinations for non-signaling distributions
has roots in the quantum logic community, where quantum non-locality
has been studied within the setting of more general probability
theories~\cite{foulisrandall81, randallfoulis81, klayrandallfoulis87, wilce92}.
Until recently, this
line of work was largely unknown in the quantum information theory
community~\cite{barrett07, bblw}.

The structure of the non-signaling polytope has been the object
of much study.  A complete characterization of the vertices has
been obtained in some, but not all cases:
for two players, the case of binary inputs~\cite{barrettlinden05},
and the case of
binary outputs~\cite{barrettpironio05,jones05} are known, and for
$n$ players, the case of Boolean inputs and outputs is known~\cite{barrettpironio05}.

The work on simulating quantum distributions has focused mainly
on providing upper bounds, and most results apply to simulating
the correlations only.
In particular, Toner and Bacon show that projective measurements on a maximally entangled qubit pair may be simulated using one bit of communication~\cite{tb03}, and Regev and Toner extend this result by showing that the correlations arising from binary measurements on any entangled state may be simulated using two bits of communication only~\cite{rt07}.
A few results address the simulation of quantum distributions with
non-uniform marginals.  Bacon and Toner give an upper bound of 2
bits for non-maximally entangled
qubit pairs~\cite{tb03}.
Shi and Zhu~\cite{shi05} show a constant upper bound for approximating
any quantum distribution (including the marginals) to within a constant.

Pironio gives a general lower bound technique based on Bell-like
inequalities~\cite{pironio03}.
There are a few ad hoc lower bounds on
simulating quantum distributions, including a linear lower
bound for a distribution based on Deutsch-Jozsa's problem~\cite{bct99},
and a recent lower bound of Gavinsky~\cite{gavinsky09}.

The $\qls$ method was first introduced as a
measure of the complexity of matrices~\cite{lmss07}.
It was shown to be a lower bound on communication complexity~\cite{ls07},
and to generalize many previously known methods.
Lee \textit{et al.} use it to establish direct product theorems and
relate the dual norm of $\qls$
to the value of XOR games  \cite{lss08}. Lee and Shraibman~\cite{les07}
use a multidimensional generalization of
a related quantity $\mu$ (where the norm-1 ball consists of cylinder
intersections)
to prove a lower bound in the multiparty number-on-the-forehead-model,
for the disjointness function.


Since the first publication of this work, several extensions and improvements have been made to the upper bounds on Bell inequality violations of Section~\ref{sec:gamma-vs-nu}, and related lower bounds on the possible violations have been proved
~\cite{JPPVW09a,JPPVW10,JP10,BRSdW10}.

\section{Preliminaries}\label{sec:preliminaries}

In this paper, we extend the framework of communication complexity
to non-signaling distributions.  This framework encompasses
the standard models of communication complexity of Boolean functions
but also total and partial
non-Boolean functions and relations, as well as
distributions arising from the measurements of bipartite
quantum states.  Most results we present also
extend to the multipartite
setting.

\subsection{Definitions of the distribution classes}
Throughout this article, we consider bipartite conditional distributions $p(a,b|x,y)$
where $x\in \eks, y\in \wye$ are the inputs of the players,
and they are required to each produce an outcome
$a\in \ay, b\in \bee$, distributed according to $p(a,b|x,y)$. We will focus on so-called non-signaling distributions, where the marginal distribution of a given player's outcome does not depend on the other player's input. These include as a special case different classes of distributions, which we define in the following subsections.

\subsubsection{Local distributions}
In the quantum information literature, the distributions that
can be simulated with shared randomness and no communication
(also called a local hidden variable model) are called
local distributions.
\noindent
\begin{defn}\label{def:non-signaling}
{\em Local deterministic distributions}
are of the form $p(a,b|x,y)=\delta_{a=\lambda_A(x)} \cdot \delta_{b=\lambda_B(y)}$ where
$\lambda_A:\eks\rightarrow \ay$ and $\lambda_B:\wye \rightarrow \bee$,
and $\delta$ is the Kronecker delta.
A distribution is {\em local} if it can be written as
a convex combination of local deterministic distributions.
\end{defn}

We index by $\Lambda$ the set of local deterministic distributions
$\{{\mathbf p}^{\lambda}\}_{\lambda\in \Lambda}$ and denote
by $\L$ the set of local distributions.

\subsubsection{Quantum distributions}
Of particular interest in the study of quantum non-locality
are the distributions arising from measuring bipartite quantum states. We will use the following definition:
\noindent
\begin{defn}\label{def:quantum-distribution}
A distribution $\mathbf{p}$ is {\em quantum} if
there exists a bipartite quantum state $\ket{\psi}$ in a Hilbert space $\aitch=\aitch_A\otimes\aitch_B$ and
measurement operators $\{E_a(x):a\in \ay,x\in \eks\}$ acting on $\aitch_A$ and $\{E_b(y):b\in \bee, y\in \wye\}$ acting on $\aitch_B$,
such that $p(a,b|x,y)=\bra{\psi}E_a(x)\otimes E_b(y)\ket{\psi}$, with the measurement operators satisfying
\begin{enumerate}
 \item $E_a(x)^\dagger=E_a(x)$ and $E_b(y)^\dagger=E_b(y)$,
 \item $E_a(x)\cdot E_{a'}(x)=\delta_{aa'}E_a(x)$ and $E_b(y)\cdot E_{b'}(y)=\delta_{bb'}E_b(y)$,
 \item $\sum_aE_a(x)=\id_A$ and $\sum_bE_b(x)=\id_B$, where $\id_A$ and $\id_B$ are the identity operators on $\aitch_A$ and $\aitch_B$, respectively.
\end{enumerate}
\end{defn}


We denote by $\Q$ the set of all quantum distributions.

\subsubsection{Non-signaling distributions}

Non-signaling, a fundamental postulate of physics, states that no
observation on part of a system can instantaneously
affect a remote part of the system, or similarly, that
no signal can travel instantaneously.
For a bipartite probability distribution $p(a,b|x,y)$ describing observations on two distant physical systems, this means that no choice of measurement $y$ on Bob's side can affect the marginal distribution of the observed outcome $a$ on Alice's side, and vice versa.
Mathematically, non-signaling (also called causality)
is defined as follows.

\noindent
\begin{defn}[Non-signaling distributions]
A bipartite, conditional distribution $\mathbf p$ is non-signaling if
\begin{eqnarray*}
	\forall a,x,y,y',& \sum_b p(a,b|x,y) = \sum_b p(a,b|x,y') ,\\
	\forall b,x, x',y,& \sum_a p(a,b|x,y) = \sum_a p(a,b|x',y).
\end{eqnarray*}
\end{defn}

For any non-signaling distribution, the marginal
distribution on Alice's output
$p(a|x,y) = \sum_b p(a,b|x,y)$ does not
depend on $y$, so we write $p(a|x)$, and similarly
$p(b|y)$ for the marginal distribution on Bob's output.
We denote by $\C$ the set of all non-signaling distributions.

In the case of binary outcomes, that is, $\ay=\bee=\{\pm1\}$,
it is known that a non-signaling distribution is uniquely determined
by the (expected)
correlations, defined as $C(x,y)= E(a\cdot b|x,y)$, and the (expected)
marginals, defined as $M_A(x)=E(a|x), M_B(y)=E(b|y)$.

\noindent
\begin{proposition}\label{prop:representation}
For any functions $C:\eks\times \wye \rightarrow [-1,1] $,
$M_A:\eks\rightarrow  [-1,1]$,
$M_B:\wye\rightarrow [-1,1]$, satisfying
$1+ a\cdot b\; C(x,y) + a M_A(x) + b M_B(y)\geq 0$ $\forall (x,y)\in \eks\times \wye$ and $a,b\in\{\pm1\}$,
there is a unique non-signaling distribution
$\mathbf{p}$ such that $\forall\ x,y, E(a\cdot b|x,y)=C(x,y)$ and $E(a|x)=M_A(x)$
and $E(b|y)=M_B(y)$, where $a,b$ are distributed according to $\mathbf{p}$.
\end{proposition}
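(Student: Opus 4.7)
My plan is to directly exhibit the distribution by Fourier expansion on $\{\pm 1\}^2$ and then verify each required property.

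Since $a^2=b^2=1$, any function of $(a,b)\in\{\pm1\}^2$ is a linear combination of the four characters $1,a,b,ab$. So for each fixed $(x,y)$ I can write any real-valued function on $\{\pm1\}^2$ uniquely as
\[
p(a,b|x,y) \;=\; \tfrac{1}{4}\bigl(\alpha_0(x,y) + \alpha_1(x,y)\,a + \alpha_2(x,y)\,b + \alpha_3(x,y)\,ab\bigr),
\]
and the coefficients are recovered by $\alpha_0=\sum_{a,b}p(a,b|x,y)$, $\alpha_1=\sum_{a,b}a\,p(a,b|x,y)$, etc. This already gives uniqueness: once $\alpha_0,\alpha_1,\alpha_2,\alpha_3$ are fixed by normalization, the two marginals, and the correlation, the distribution is determined.

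For existence, I would define
\[
p(a,b|x,y) \;:=\; \tfrac{1}{4}\bigl(1 + a\,M_A(x) + b\,M_B(y) + ab\,C(x,y)\bigr).
\]
Then I would check the four properties in turn: (i) non-negativity, which is exactly the hypothesis $1+ab\,C(x,y)+a\,M_A(x)+b\,M_B(y)\geq 0$; (ii) normalization, using $\sum_{a,b\in\{\pm 1\}} 1=4$ and $\sum_{a,b}a=\sum_{a,b}b=\sum_{a,b}ab=0$; (iii) the marginals of $\mathbf{p}$: summing over $b\in\{\pm1\}$ kills the $b$ and $ab$ terms, giving $\sum_b p(a,b|x,y)=\tfrac12(1+a\,M_A(x))$, which is independent of $y$ and yields $E(a|x)=M_A(x)$, and symmetrically for Bob; this simultaneously establishes the non-signaling conditions and that the marginals match $M_A,M_B$; (iv) the correlation: $E(ab|x,y)=\sum_{a,b}ab\,p(a,b|x,y)=C(x,y)$, again because the other three characters have zero sum.

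There is no real obstacle: the only nontrivial input is the positivity hypothesis, which is engineered precisely so that step (i) goes through, and every other step is an orthogonality computation for the four characters $1,a,b,ab$ on $\{\pm1\}^2$. The same orthogonality argument simultaneously proves uniqueness, since the map from $(C,M_A,M_B)$ to $(\alpha_3,\alpha_1,\alpha_2)$ (with $\alpha_0\equiv 1$ forced by normalization) is a bijection.
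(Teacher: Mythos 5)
Your proof is correct and is essentially the paper's argument in different clothing: the paper writes the same relation as a $4\times 4$ full-rank linear system (whose matrix is precisely the character matrix of $\{\pm1\}^2$) and inverts it to get $p(a,b|x,y)=\tfrac14(1+ab\,C+aM_A+bM_B)$, whereas you recover the same formula via orthogonality of the characters $1,a,b,ab$. Your version is somewhat more explicit in spelling out positivity, normalization, the non-signaling check, and uniqueness via injectivity of the Fourier transform, but the underlying computation is identical.
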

\begin{proof}
Fix $x,y$.  $C, M_A, M_B$ are obtained from $\mathbf p$
by the following full rank system of equations.
$$ \left( \begin{array}{rrrr}
		1 & -1 & -1  & 1\\
		1 & 1 & -1  & -1\\
		1 & -1 & 1  & -1\\
		1 & 1 & 1  & 1
	\end{array}
 \right)
 \left( \begin{array}{r}
		p(+1,+1|x,y)\\
		p(+1,-1|x,y)\\
		p(-1,+1|x,y)\\
		p(-1,-1|x,y)
	\end{array}
 \right)
= \left( \begin{array}{c}
		C(x,y)\\
		M_{A}(x)\\
		M_{B}(y)\\
		1
	\end{array}
 \right).
$$
Computing the inverse yields
$p(a,b|x,y) = \frac{1}{4}(1+ a\cdot b\; C(x,y) + a M_A(x) + b M_B(y))$.
\end{proof}

We will write ${\mathbf p}=(C, M_A, M_B)$ and use both notations interchangeably when considering distributions over binary outcomes.
We also denote by $\C_0$ the set of non-signaling distributions
with uniform marginals, that is, ${\mathbf p}=(C, 0, 0)$,
and write $C\in \C_0$, omitting the marginals when
there is no ambiguity.


Since local and quantum distributions are non-signaling, we use similar
notation for local and quantum distributions where
binary outcomes are concerned.  In the case of local distributions,
since the vertices of the polytope are deterministic strategies,
correlations and marginals can be written using $\pm1$ vectors.
Let $\conv(A)$ denote the convex hull of $A$.

\begin{proposition}
$\L = \conv(\{ (u^Tv,u,v) : u\in \{\pm1\}^\eks, v \in \{\pm1\}^\wye\})$.
\end{proposition}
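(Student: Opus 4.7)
The plan is to combine Definition \ref{def:non-signaling} with the bijective linear representation of binary-outcome non-signaling distributions given by Proposition \ref{prop:representation}. Since $\L$ is \emph{defined} as the convex hull of the local deterministic distributions $\Lambda$, it suffices to show that the image of $\Lambda$ under the map $\mathbf{p}\mapsto (C,M_A,M_B)$ equals $\{(uv^T,u,v):u\in\{\pm1\}^\eks,v\in\{\pm1\}^\wye\}$, and then to observe that this map commutes with convex combinations.

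First, I would fix a local deterministic distribution $\mathbf{p}^\lambda$ given by $\lambda_A:\eks\to\{\pm1\}$ and $\lambda_B:\wye\to\{\pm1\}$, so that $p^\lambda(a,b|x,y)=\delta_{a=\lambda_A(x)}\delta_{b=\lambda_B(y)}$. A one-line computation gives the marginals and correlation:
\begin{align*}
M_A(x) &= \sum_{a}a\,p^\lambda(a|x) = \lambda_A(x),\\
M_B(y) &= \sum_{b}b\,p^\lambda(b|y) = \lambda_B(y),\\
C(x,y) &= \sum_{a,b} ab\,p^\lambda(a,b|x,y) = \lambda_A(x)\lambda_B(y).
\end{align*}
Identifying $u(x):=\lambda_A(x)$ and $v(y):=\lambda_B(y)$, this is precisely the triple $(uv^T,u,v)$ (where $uv^T$ denotes the rank-one function $(x,y)\mapsto u(x)v(y)$, matching the notation $u^Tv$ used in the statement). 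Conversely, for any choice of $u\in\{\pm1\}^\eks$ and $v\in\{\pm1\}^\wye$, setting $\lambda_A=u$ and $\lambda_B=v$ yields a local deterministic distribution whose representation is $(uv^T,u,v)$. Hence the image of $\Lambda$ is exactly the set in the right-hand side of the proposition.

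Next, I would invoke Proposition \ref{prop:representation}: the correspondence $\mathbf{p}\leftrightarrow(C,M_A,M_B)$ is a linear bijection on the affine space of (binary-outcome) non-signaling distributions, since the explicit inverse $p(a,b|x,y)=\tfrac{1}{4}(1+ab\,C(x,y)+aM_A(x)+bM_B(y))$ is affine in $(C,M_A,M_B)$, and conversely the expectations defining $C,M_A,M_B$ are linear in $\mathbf{p}$. A linear bijection commutes with taking convex hulls, so the image of $\L=\conv(\Lambda)$ equals $\conv$ of the image of $\Lambda$, which by the previous paragraph is $\conv(\{(uv^T,u,v):u\in\{\pm1\}^\eks,v\in\{\pm1\}^\wye\})$.

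There is no real obstacle here; the statement is essentially a direct translation of the definition of $\L$ into the $(C,M_A,M_B)$ coordinates. The only point that deserves any care is making sure the resulting triples $(uv^T,u,v)$ actually correspond to valid non-signaling distributions under Proposition \ref{prop:representation}, i.e.\ that the positivity condition $1+ab\,C(x,y)+aM_A(x)+bM_B(y)\geq 0$ holds; this is immediate since $1+ab\,u(x)v(y)+au(x)+bv(y)=(1+au(x))(1+bv(y))\geq 0$ for $a,b,u(x),v(y)\in\{\pm1\}$.
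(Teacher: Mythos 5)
Your proof is correct, and since the paper states this proposition without proof (treating it as an immediate consequence of Definition~\ref{def:non-signaling} and the $(C,M_A,M_B)$ coordinates from Proposition~\ref{prop:representation}), what you have written is precisely the calculation the authors leave implicit: compute the image of each deterministic $\mathbf{p}^\lambda$ under the representation map, note it is $(u^Tv,u,v)$ with $u=\lambda_A$, $v=\lambda_B$, and use that the affine bijection preserves convex hulls. The closing positivity check via $(1+au(x))(1+bv(y))\geq 0$ is a nice touch, confirming that each $(u^Tv,u,v)$ really is a valid distribution under Proposition~\ref{prop:representation}.
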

We also denote by~$\L_0$ the set of
local correlations over binary outcomes with uniform marginals
and we let $\Q_0$ be the
set of all quantum correlations.

\subsubsection{Boolean functions}

There is a natural way to map a Boolean function $f: \eks \times \wye \rightarrow \{\pm1\}$ to a non-signaling distribution $p_f(a,b|x,y)$ over binary outcomes $a,b\in\{\pm1\}$, as follows:
\begin{defn}\label{def:boolean-functions}
For a function $f:\eks\times \wye \rightarrow \{-1, 1\}$,
denote $\mathbf{p}_f$ the distribution defined by
$p_f(a,b|x,y) = \frac{1}{2}$ if $f(x,y)=a\cdot b$ and
0 otherwise. Equivalently, $\mathbf{p}_f=(C_f,0,0)$ where $C_f(x,y)=f(x,y)$.
\end{defn}
By stipulating that the product of the players' outputs equals the value of the function, we see that the distribution has the same communication complexity as the function (up to an additional bit of communication, for Bob
to output $f(x,y)$).
As we shall see in Section~\ref{subsec:characterization-classical},
it so happens that the distributions associated to Boolean functions are extremal
points of the non-signaling polytope.

In the case of randomized communication complexity, a protocol
that simulates a Boolean function with error probability $\epsilon$
corresponds to simulating correlations $C'$ scaled down by a factor at most
$1-2\epsilon$, that is, $\forall x,y, \sgn(C'(x,y))=C_f(x,y)$ and
$\abs{C'(x,y)} \geq 1-2\epsilon$.  While we will not consider these
cases in full detail, non-Boolean functions, partial functions and some
classes of relations may be handled in a similar fashion, hence
our techniques can be used to show lower bounds in these settings
as well.

\subsection{Characterizations and relations among the distribution classes}

\subsubsection{Non-signaling distributions}
\label{subsec:characterization-classical}

The quantum information literature reveals a great deal of
insight into the structure of the
classical, quantum, and non-signaling distributions.
It is well known that $\L$ and $\C$ are polytopes.
While the extremal points of $\L$ are simply the local deterministic distributions, the non-signaling polytope $\C$ has a more complex structure~\cite{jones05, barrettpironio05}.
 In the case of $\C_0$, it is the convex hull of the distributions
obtained from Boolean functions.
\begin{proposition}
$\C_0 = \conv(\{ (C_f,0,0) : C_f\in \{\pm 1\}^{\eks\times\wye} \})$.
\end{proposition}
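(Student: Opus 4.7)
The plan is to prove the two inclusions separately. The inclusion $\supseteq$ is immediate: for any $f:\eks\times\wye\to\{\pm 1\}$, the correlation $C_f$ takes values in $\{\pm 1\}\subset[-1,1]$, and by Proposition~\ref{prop:representation} the triple $(C_f,0,0)$ defines a valid non-signaling distribution in $\C_0$. Since $\C_0$ is convex (being cut out by linear equalities and the inequalities $|C(x,y)|\leq 1$), it contains the convex hull.

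For the harder inclusion $\subseteq$, I would first use Proposition~\ref{prop:representation} to identify $\C_0$ with the set of correlation functions $C:\eks\times\wye\to[-1,1]$. Indeed, with $M_A=M_B=0$ the positivity constraint $1+ab\, C(x,y)+aM_A(x)+bM_B(y)\geq 0$ collapses to $1\pm C(x,y)\geq 0$, i.e.\ $|C(x,y)|\leq 1$. Thus the problem reduces to showing that the hypercube $[-1,1]^{\eks\times\wye}$ is the convex hull of its $\pm 1$ vertices, an instance of the standard fact that a product of intervals is the convex hull of its extreme points.

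To make this concrete, given any $C\in[-1,1]^{\eks\times\wye}$, I would construct the convex combination explicitly via a product distribution on Boolean functions: set
\[
\mu(f)\;=\;\prod_{(x,y)\in\eks\times\wye}\frac{1+f(x,y)\,C(x,y)}{2},
\]
so that under $\mu$ the values $f(x,y)$ are independent and $\Pr[f(x,y)=+1]=(1+C(x,y))/2$. Linearity of expectation then gives $\sum_f \mu(f)\,C_f(x,y)=C(x,y)$ for every $(x,y)$, hence $C=\sum_f \mu(f)\,C_f$ as a convex combination. To lift this from correlations to full distributions, I average the explicit formula from Proposition~\ref{prop:representation}: $\sum_f \mu(f)\cdot\tfrac14(1+ab\,C_f(x,y))=\tfrac14(1+ab\,C(x,y))=p(a,b|x,y)$, so $(C,0,0)=\sum_f\mu(f)(C_f,0,0)$.

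I do not expect any serious obstacle here; the only subtlety is the conceptual one of recognizing that having $C(x,y)\in[-1,1]$ pointwise does not by itself give a single convex decomposition into $\pm 1$-valued functions, and that the product distribution $\mu$ is the natural fix. If one were to worry about infinite $\eks\times\wye$, the argument would require invoking Carath\'eodory or a compactness/measure-theoretic version of the product construction, but in the communication complexity setting the input sets are finite and the product above is well defined.
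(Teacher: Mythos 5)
The paper states this proposition without proof, so there is no paper argument to compare against; it is treated as a standard fact about the non-signaling polytope with uniform marginals. Your proof is correct and is the natural one: Proposition~\ref{prop:representation} with $M_A=M_B=0$ identifies $\C_0$ linearly with the cube $[-1,1]^{\eks\times\wye}$, the $\supseteq$ direction follows from convexity of $\C_0$, and for $\subseteq$ your product measure $\mu(f)=\prod_{x,y}\frac{1+f(x,y)C(x,y)}{2}$ gives an explicit convex decomposition of any point of the cube into $\pm1$ vertices, which pushes forward under the linear bijection to a decomposition of $(C,0,0)$ into the distributions $(C_f,0,0)$. You also correctly flag the one conceptual point worth flagging, namely that pointwise $|C(x,y)|\le 1$ does not by itself hand you a convex decomposition and that the product measure (or, equivalently, an induction on coordinates or Carath\'eodory) is needed; the finiteness of $\eks\times\wye$ in the communication setting makes this unproblematic.
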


We show that $\C$ is the affine hull of the local polytope
(restricted to the positive orthant since all probabilities $p(a,b|x,y)$ must be positive). We give a simple proof for the case of binary outcomes but this carries over to the general case.
This was shown independently of us, on a few occasions
in different communities~\cite{randallfoulis81, foulisrandall81,
klayrandallfoulis87, wilce92, barrett07}.
\begin{thm}
\label{thm:qlhv}
$\C = \aff^+(\L)$,
where $\aff^+(\L)$ is the restriction to the positive orthant of the affine hull of $\L$, and $\dim\C=\dim\L=|\eks|\times|\wye|+|\eks|+|\wye|$.
\end{thm}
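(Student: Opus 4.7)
The plan is first to verify the inclusion $\aff^+(\L) \subseteq \C$ directly, and then to establish equality by a dimension argument. Since $\L \subseteq \C$ gives $\aff(\L) \subseteq \aff(\C)$, it suffices to show that both affine hulls coincide with the full ambient space $\Real^{|\eks||\wye|+|\eks|+|\wye|}$, after which intersecting with the positivity constraints on $p(a,b|x,y)$ yields $\aff^+(\L) = \C$.

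For $\aff^+(\L) \subseteq \C$, I would first note that conditioning on the shared hidden variable shows every element of $\L$ is non-signaling. Because the non-signaling constraints are linear equalities, they are preserved under arbitrary affine combinations, and restricting such a combination to the positive orthant produces a bona fide non-signaling distribution. For $\dim\C = |\eks||\wye|+|\eks|+|\wye|$: Proposition~\ref{prop:representation} identifies $\C$ with the set of triples $(C, M_A, M_B) \in \Real^{|\eks||\wye|+|\eks|+|\wye|}$ satisfying $1 + ab\,C(x,y) + aM_A(x) + bM_B(y) \geq 0$ for every $a,b \in \{\pm 1\}$ and $(x,y) \in \eks\times\wye$. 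The uniform distribution $(C, M_A, M_B) = (0, 0, 0)$ satisfies each of these inequalities strictly, so $\C$ has nonempty interior in the parameter space and $\aff(\C)$ is the full ambient space.

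The main obstacle is the reverse dimension bound $\dim\L = |\eks||\wye|+|\eks|+|\wye|$, since the local deterministic distributions form only a discrete set of $2^{|\eks|+|\wye|}$ points $(uv^T, u, v)$. I would take $u_0 = \mathbf{1}\in\{\pm 1\}^{|\eks|}$ and $v_0 = \mathbf{1}\in\{\pm 1\}^{|\wye|}$ as a basepoint and form three families of tangent vectors (differences of elements of $\L$) by bitwise flipping:
\begin{align*}
(u_0 v_0^T, u_0, v_0) - ((u_0 - 2e_i)v_0^T,\, u_0 - 2e_i,\, v_0) &= (2 e_i \mathbf{1}^T,\, 2 e_i,\, 0),\\
(u_0 v_0^T, u_0, v_0) - (u_0(v_0 - 2e_j)^T,\, u_0,\, v_0 - 2 e_j) &= (2 \mathbf{1} e_j^T,\, 0,\, 2 e_j),\\
(u_0 v_0^T, u_0, v_0) - ((u_0-2e_i)(v_0-2e_j)^T,\, u_0-2e_i,\, v_0-2e_j) &= (2 e_i \mathbf{1}^T + 2 \mathbf{1} e_j^T - 4 e_i e_j^T,\, 2 e_i,\, 2 e_j).
\end{align*}
Subtracting the first two families from the third isolates $(-4 e_i e_j^T, 0, 0)$ for every $(i,j)$, and these span the correlation component $\Real^{|\eks|\times|\wye|}$. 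Combined with the first two families this then recovers $(0, e_i, 0)$ and $(0, 0, e_j)$, giving a spanning set for the full $(|\eks||\wye|+|\eks|+|\wye|)$-dimensional ambient space.

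Assembling the pieces, $\aff(\L) = \aff(\C)$, so intersecting with the positivity inequalities that cut out $\C$ inside its affine hull yields $\aff^+(\L) = \C$. For non-binary outcomes the same template applies: parameterize $\C$ by the appropriate marginals and joint quantities, check that the uniform distribution lies in the interior, and perturb a fixed reference deterministic assignment $(\lambda_A, \lambda_B)$ one coordinate at a time to construct a spanning family of tangent directions.
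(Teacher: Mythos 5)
Your proof is correct, and it follows the same overall blueprint as the paper's: establish $\aff^+(\L)\subseteq\C$ by linearity of the non-signaling constraints, and then show that $\aff(\L)$ already fills the entire $(|\eks||\wye|+|\eks|+|\wye|)$-dimensional parameter space $(C,M_A,M_B)$, which forces $\aff(\L)=\aff(\C)$. Where you diverge is in how you produce the spanning set. The paper exhibits an explicit \emph{randomized} local basis: the distributions $\mathbf{p}_{\sigma\pi}=(C_{\sigma\pi},0,0)$ with $C_{\sigma\pi}(x,y)=\delta_{x=\sigma}\delta_{y=\pi}$, plus $\mathbf{p}_{\sigma\cdot}$ and $\mathbf{p}_{\cdot\pi}$ carrying a single marginal coordinate each; one then has to verify that each of these is local, which takes a small (if routine) argument involving a shared coin. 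You instead stay entirely inside the vertex set of $\L$ and compute tangent vectors as differences of deterministic strategies $(uv^T,u,v)$ under single-bit flips of $u$ and $v$. This has the advantage that locality of each generator is immediate, at the cost of a short algebraic reduction to isolate the $(-4e_ie_j^T,0,0)$, $(0,e_i,0)$, $(0,0,e_j)$ directions. You also add an explicit justification (strict interiority of the uniform distribution) that $\aff(\C)$ is the full ambient space, which the paper leaves implicit; this is a nice touch and is needed for the dimension claim $\dim\C=\dim\L$. Your closing remark about the non-binary case is a correct sketch, consistent with the paper's statement that the binary proof ``carries over to the general case,'' though like the paper you do not carry it out. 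Overall: correct, same structure, slightly more elementary and self-contained spanning argument.
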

\begin{proof}
 We show that $\aff(\C)=\aff(\L)$. The theorem then follows by restricting to the positive orthant, and using the fact that $\C=\aff^+(\C)$.

[{\bf$\aff(\L)\subseteq \aff(\C)$}] Since any local distribution satisfies the (linear) non-signaling constraints in Def.~\ref{def:non-signaling}, this is also true for any affine combination of local distributions.

[{\bf$\aff(\C)\subseteq \aff(\L)$}] For any $(\sigma,\pi)\in \eks\times \wye$, we define the distribution $\mathbf{p}_{\sigma\pi}=(C_{\sigma\pi},u_{\sigma\pi},v_{\sigma\pi})$ with correlations $C_{\sigma\pi}(x,y)=\delta_{x=\sigma}\delta_{y=\pi}$ and marginals $u_{\sigma\pi}(x)=0,v_{\sigma\pi}(y)=0$.
Similarly, we define for any $\sigma\in \eks$ the distribution $\mathbf{p}_{\sigma\cdot}=(C_{\sigma\cdot},u_{\sigma\cdot},v_{\sigma\cdot})$ with $C_{\sigma\cdot}(x,y)=0,u_{\sigma\cdot}(x)=\delta_{x=\sigma},v_{\sigma\cdot}(y)=0$, and for any $\pi\in \wye$ the distribution $\mathbf{p}_{\cdot\pi}=(C_{\cdot\pi},u_{\cdot\pi},v_{\cdot\pi})$ with $C_{\cdot\pi}(x,y)=0,u_{\cdot\pi}(x)=0,v_{\cdot\pi}(y)=\delta_{y=\pi}$. It is straightforward to check that these $|\eks|\times|\wye|+|\eks|+|\wye|$ distributions are local, and that they constitute a basis for the vector space embedding $\aff(\C)$, which consists of vectors of the form $(C,u,v)$.
\end{proof}

This implies that while local distributions are \emph{convex} combinations of local deterministic distributions ${\mathbf p}^\lambda\in\Lambda$, non-signaling distributions are \emph{affine} combinations of these distributions.
\begin{cor}[Affine model]
A distribution ${\mathbf p} {\in}\C$ if and only if
$\,\exists q_\lambda\in\Real $ with ${\mathbf p} =\sum_{\lambda \in \Lambda} q_\lambda {\mathbf p} ^\lambda$.
\end{cor}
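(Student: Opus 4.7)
The plan is to derive this corollary directly from Theorem~\ref{thm:qlhv}, which already establishes $\C = \aff^+(\L)$. The only additional content here is that we may replace affine combinations of arbitrary local distributions by affine combinations of the (finitely many) local \emph{deterministic} distributions $\{\mathbf{p}^\lambda\}_{\lambda \in \Lambda}$.

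The key observation is that by Definition~\ref{def:non-signaling}, $\L = \conv(\Lambda)$, so the affine hull of $\L$ coincides with the affine hull of $\Lambda$: every local distribution is already an affine combination of local deterministic ones, hence any affine combination of local distributions can be rewritten as a (finite) affine combination of elements of $\Lambda$. Explicitly, if $\mathbf{p} \in \aff(\L)$, write $\mathbf{p} = \sum_i \alpha_i \mathbf{p}_i$ with $\sum_i \alpha_i = 1$ and each $\mathbf{p}_i = \sum_\lambda \beta_{i,\lambda}\, \mathbf{p}^\lambda$ a convex combination of deterministic strategies; then
\[
\mathbf{p} = \sum_\lambda \Bigl(\sum_i \alpha_i \beta_{i,\lambda}\Bigr)\, \mathbf{p}^\lambda =: \sum_\lambda q_\lambda\, \mathbf{p}^\lambda,
\]
and $\sum_\lambda q_\lambda = \sum_i \alpha_i \sum_\lambda \beta_{i,\lambda} = \sum_i \alpha_i = 1$, so this is a bona fide affine combination.

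For the forward direction, given $\mathbf{p} \in \C$, Theorem~\ref{thm:qlhv} places $\mathbf{p}$ in $\aff^+(\L)$, and the argument above then produces the required real coefficients $q_\lambda$. For the converse, suppose $\mathbf{p} = \sum_\lambda q_\lambda \mathbf{p}^\lambda$ is a genuine conditional probability distribution (so $\mathbf{p}$ already lies in the positive orthant by hypothesis). Each $\mathbf{p}^\lambda$ is non-signaling, and the non-signaling constraints of the definition are linear in $\mathbf{p}$, so they are preserved under arbitrary real linear (in particular affine) combinations; hence $\mathbf{p} \in \C$.

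There is essentially no obstacle here: the entire content has already been done in Theorem~\ref{thm:qlhv}, and the corollary is just the restatement of "affine hull of $\L$" as "affine combination of the extreme points of $\L$", together with the trivial observation that linear constraints are closed under affine combinations. The only small point worth being careful about is not to confuse the affine hull (coefficients summing to $1$, possibly negative) with the linear hull, and to note that positivity of $\mathbf{p}$ is a hypothesis built into the statement $\mathbf{p} \in \C$ rather than something derived from the coefficients $q_\lambda$.
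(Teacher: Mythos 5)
Your proposal is correct and takes essentially the same route as the paper, which states the corollary as an immediate consequence of Theorem~\ref{thm:qlhv} without further argument: you simply make explicit the routine identification $\aff(\L)=\aff(\Lambda)$ (since $\L=\conv(\Lambda)$) and the observation that normalization of $\mathbf{p}$ forces $\sum_\lambda q_\lambda=1$ while positivity is built into ``$\mathbf{p}$ is a distribution''.
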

Note that since ${\mathbf p}$ is a distribution,
this implies $\sum_{\lambda \in \Lambda} q_\lambda =1$.
Since weights in an affine combination may be negative, but still
sum up to one, this may be interpreted as a \emph{quasi-mixture}
of local distributions, some distributions being used with
possibly ``negative probability''.
Surprisingly this is not a new notion;
see for example Groenewold~\cite{groenewold85} who gave an
affine model for quantum distributions; or a discussion of
``negative probability'' by Feynman~\cite{feynman86}.

\subsubsection{Quantum distributions}
\label{subsec:characterization-quantum}

The following fundamental theorem of Tsirelson relates measurements on quantum states to
the inner product of vectors.
\begin{thm}[\cite{tsi85}]\label{thm:tsirelson} Let $\mathbb{S}_n$ be the set of unit vectors in $\mathbb{R}^{n}$, and $\mathcal{H}^d$ be a $d$-dimensional Hilbert space.
\noindent
\begin{enumerate}
\item If $(C,M_A,M_B)\in\Q$ is a probability distribution obtained by performing binary measurements on a quantum state $\ket{\psi} \in \mathcal{H}^d \otimes \mathcal{H}^d$, then there exists vectors $\vec{a}(x),\vec{b}(y)\in\mathbb{S}_{2d^2}$ such that $C(x,y)=\vec{a}(x)\cdot\vec{b}(y)$.
\item If $\vec{a}(x),\vec{b}(y)$ are unit vectors in $\mathbb{S}_n$, then there exists a probability distribution $(C,0,0)\in\Q$ obtained by performing binary measurements on a maximally entangled state $\ket{\psi} \in \mathcal{H}^{2^{\floor{n/2}}} \otimes \mathcal{H}^{2^{\floor{ n/2}}}$ such that $C(x,y)=\vec{a}(x)\cdot\vec{b}(y)$.
\end{enumerate}
\end{thm}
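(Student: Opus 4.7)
The plan is to prove the two directions separately by standard GNS-type constructions that translate between operator algebras and inner-product spaces.

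For the first direction, start from the binary measurement operators given in Definition~\ref{def:quantum-distribution} and pass to the associated $\pm 1$ observables $A(x) = E_{+1}(x) - E_{-1}(x)$ and $B(y) = E_{+1}(y) - E_{-1}(y)$. The properties in Definition~\ref{def:quantum-distribution} imply that these are Hermitian and square to $\id$, and one has $C(x,y) = \bra{\psi} A(x)\otimes B(y) \ket{\psi}$ (expanding over outcomes $\pm 1$). Now identify $\mathcal{H}^d \otimes \mathcal{H}^d \cong \mathbb{C}^{d^2}$ with $\mathbb{R}^{2d^2}$ by the real inner product $\langle u,v\rangle_{\mathbb{R}} = \operatorname{Re}\braket{u}{v}$, and set $\vec{a}(x) = (A(x)\otimes \id)\ket{\psi}$ and $\vec{b}(y) = (\id \otimes B(y))\ket{\psi}$. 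Then $\|\vec{a}(x)\|^2 = \bra{\psi}A(x)^2 \otimes \id\ket{\psi}=1$ and likewise for $\vec{b}(y)$, so both lie in $\mathbb{S}_{2d^2}$. Because $A(x)\otimes B(y)$ is Hermitian, its expectation is real, so $\vec{a}(x)\cdot\vec{b}(y) = \operatorname{Re}\bra{\psi}A(x)\otimes B(y)\ket{\psi} = C(x,y)$, as required.

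For the second direction, given unit vectors $\vec{a}(x),\vec{b}(y)\in\mathbb{S}_n$, I would construct a family $\Gamma_1,\ldots,\Gamma_n$ of Hermitian operators on $\mathcal{H}^{2^{\floor{n/2}}}$ satisfying the real Clifford relations $\Gamma_i^2 = \id$ and $\Gamma_i\Gamma_j + \Gamma_j\Gamma_i = 0$ for $i\ne j$. This is the standard Jordan--Wigner / Brauer--Weyl representation built from tensor products of the Pauli matrices $\sigma_x,\sigma_y,\sigma_z$. Define
\[
A(x) = \sum_{i=1}^{n} a_i(x)\,\Gamma_i, \qquad B(y) = \sum_{i=1}^{n} b_i(y)\,\Gamma_i^{T},
\]
where the transpose is taken in the computational basis. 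Expanding $A(x)^2$ and using anticommutation together with $\|\vec{a}(x)\|=1$ gives $A(x)^2 = \id$, so $A(x)$ is a $\pm 1$ observable with projectors $E_{\pm 1}(x) = \tfrac{1}{2}(\id \pm A(x))$; similarly for $B(y)$. Now take $\ket{\psi}$ to be the maximally entangled state on $\mathcal{H}^{2^{\floor{n/2}}} \otimes \mathcal{H}^{2^{\floor{n/2}}}$ and use the standard identity $\bra{\psi} M\otimes N\ket{\psi} = \tfrac{1}{d}\operatorname{tr}(MN^{T})$ to obtain
\[
\bra{\psi}A(x)\otimes B(y)\ket{\psi} \;=\; \tfrac{1}{d}\sum_{i,j} a_i(x)b_j(y)\operatorname{tr}(\Gamma_i\Gamma_j)
\;=\; \vec{a}(x)\cdot\vec{b}(y),
\]
where $\operatorname{tr}(\Gamma_i\Gamma_j) = d\,\delta_{ij}$: the off-diagonal traces vanish by cyclicity combined with anticommutation, while $\Gamma_i^2 = \id$ gives the diagonal value. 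The marginals are zero because $\operatorname{tr}(\Gamma_i) = 0$ (using $\operatorname{tr}(\Gamma_i) = \operatorname{tr}(\Gamma_j^2\Gamma_i) = -\operatorname{tr}(\Gamma_i\Gamma_j^2) = -\operatorname{tr}(\Gamma_i)$), so the resulting distribution is indeed of the form $(C,0,0)\in\Q$.

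The main technical obstacle is the second direction: it rests on the existence of a faithful representation of the Clifford algebra $C\ell_n$ on the stated Hilbert space, and tracking the minimal dimension requires invoking (or reproving) the Brauer--Weyl classification of its irreducible representations. The first direction, by contrast, is essentially a direct verification once the right vectors are written down, and no nontrivial representation theory is needed. A small bookkeeping point throughout is keeping the real-part conventions of the $\mathbb{C}\to\mathbb{R}$ identification consistent with the Hermitian observables, so that the inner products computed on the vector side genuinely match the expectations on the operator side.
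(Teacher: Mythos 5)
Your proof is correct, and it follows the standard route for Tsirelson's theorem; note that the paper simply cites this result to~\cite{tsi85} and does not supply a proof of its own, so there is no in-paper argument to compare against. Your first direction (realify the GNS-type vectors $\vec a(x)=(A(x)\otimes\id)\ket\psi$, $\vec b(y)=(\id\otimes B(y))\ket\psi$ and use that $\bra\psi A(x)\otimes B(y)\ket\psi$ is real for Hermitian $A\otimes B$) is exactly the usual easy direction. Your second direction (Clifford generators $\Gamma_i$ with $\{\Gamma_i,\Gamma_j\}=2\delta_{ij}\id$, observables $A(x)=\sum_i a_i(x)\Gamma_i$, $B(y)=\sum_i b_i(y)\Gamma_i^T$, the maximally entangled state, and the identity $\bra\psi M\otimes N\ket\psi=\tfrac1d\tr(MN^T)$) is also the standard construction, and your trace computations of $\tr(\Gamma_i)=0$ and $\tr(\Gamma_i\Gamma_j)=d\,\delta_{ij}$ are correct (assuming $n\ge2$ so that a distinct index $j$ exists for the anticommutation trick).

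One genuine discrepancy you should flag: your construction produces a maximally entangled state on $\mathcal H^{2^{\floor{n/2}}}\otimes\mathcal H^{2^{\floor{n/2}}}$ (i.e.\ $\floor{n/2}$ qubits per party, since the irreducible Clifford representation has dimension $2^{\floor{n/2}}$), whereas the statement as printed in the paper reads $\mathcal H^{\floor{n/2}}\otimes\mathcal H^{\floor{n/2}}$ with $\mathcal H^d$ a $d$-dimensional space. The paper's exponent is a typo; the correct local dimension is exponential in $n$, exactly as in your construction. It is worth noting this explicitly, since a reader comparing your proof to the printed statement would otherwise think your representation is too large.
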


\begin{cor}
$\Q_0 = \{ C : C(x,y)= \vec{a}(x)\cdot\vec{b}(y),  \norm{\vec{a}(x)}=\norm{\vec{b}(y)} = 1 \,\forall x,y\}$.
\end{cor}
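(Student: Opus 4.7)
The plan is to derive this Corollary as essentially immediate from the two directions of Theorem~\ref{thm:tsirelson}, once we correctly unpack the definition of $\Q_0$ (the set of $C$ such that $(C,0,0)\in\Q$, i.e.\ quantum correlations for binary outcomes with uniform marginals).

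For the inclusion $\Q_0\subseteq\{C : C(x,y)=\vec{a}(x)\cdot\vec{b}(y),\ \|\vec{a}(x)\|=\|\vec{b}(y)\|=1\}$, I would start from $C\in\Q_0$, so that $(C,0,0)$ is realized by projective binary measurements $\{E_{\pm 1}(x)\}$, $\{E_{\pm 1}(y)\}$ on some bipartite state $\ket{\psi}$. Apply part~1 of Theorem~\ref{thm:tsirelson} to obtain unit vectors $\vec{a}(x),\vec{b}(y)$ with $C(x,y)=\vec{a}(x)\cdot\vec{b}(y)$. (In writing it out one should note that although Definition~\ref{def:quantum-distribution} allows arbitrary Hilbert spaces via the commutativity condition, for binary outcomes one can replace the measurements by Hermitian operators $A(x)=E_{+1}(x)-E_{-1}(x)$ with $A(x)^2=\id$, which is the setting in which Tsirelson's construction applies; the finite-dimensional restriction in the statement of the theorem is harmless for correlations since the map $C(x,y)=\bra{\psi}A(x)B(y)\ket{\psi}$ already depends only on the Gram structure of the vectors obtained from applying $A(x)\otimes\id$ and $\id\otimes B(y)$ to $\ket{\psi}$.)

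For the converse inclusion, take any collection of unit vectors $\vec{a}(x),\vec{b}(y)\in\mathbb{S}_n$. Part~2 of Theorem~\ref{thm:tsirelson} provides a distribution $(C,0,0)\in\Q$ achieved by binary measurements on a maximally entangled state with $C(x,y)=\vec{a}(x)\cdot\vec{b}(y)$. Hence $C\in\Q_0$, which completes the equivalence.

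The only mildly subtle point is reconciling the Hilbert-space conventions: our $\Q$ is defined via commuting measurement operators rather than a tensor product decomposition, whereas Theorem~\ref{thm:tsirelson} is stated for finite-dimensional tensor-product realizations. I expect this to be the main (but ultimately minor) obstacle, and I would address it by remarking that in the binary-outcome setting the two formulations coincide: Tsirelson's original argument uses only the anticommutation/squared-identity structure of $A(x),B(y)$ and their commutation across parties, which is exactly what Definition~\ref{def:quantum-distribution} guarantees; conversely, given the vectors one can build finite-dimensional Pauli-type observables acting on a maximally entangled state, yielding a realization that trivially fits Definition~\ref{def:quantum-distribution}.
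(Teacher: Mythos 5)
Your proof is correct and takes exactly the approach the paper intends: the corollary is stated without an explicit argument precisely because it is meant to follow from the two directions of Theorem~\ref{thm:tsirelson}, as you do. You also rightly flag the one subtlety the paper glosses over — that $\Q$ is defined via commuting operators on a possibly infinite-dimensional space while the theorem is stated for finite-dimensional tensor products — and your resolution (forming $A(x)=E_{+1}(x)-E_{-1}(x)$, $B(y)=E_{+1}(y)-E_{-1}(y)$, taking $\vec a(x)=A(x)\ket\psi$, $\vec b(y)=B(y)\ket\psi$, and restricting to their finite-dimensional real span) is the right way to close that gap.
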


Clearly, $\L\subseteq \Q \subseteq \C$. As first noted by Tsirelson, Grothendieck's inequality~\cite{Grothendieck1953} implies the following statement.
\noindent
\begin{proposition}[\cite{tsi85}]
\label{prop:grothendieck}
$\L_0 \subseteq \Q_0 \subseteq K_G \L_0$, where $K_G$ is Grothendieck's constant.
\end{proposition}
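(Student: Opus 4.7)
The plan is to prove the two inclusions separately; the first is essentially by inspection, and the second is exactly where Grothendieck's inequality enters.

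For $\L_0 \subseteq \Q_0$, I would observe that the extreme points of $\L_0$ are the deterministic correlations $C(x,y) = u_x v_y$ with $u \in \{\pm 1\}^{\eks}$, $v \in \{\pm 1\}^{\wye}$, and these are trivially quantum: take any state $\ket{\psi}$ and binary observables $A(x) = u_x\, \id$, $B(y) = v_y\, \id$, which yields $\bra{\psi} A(x) B(y) \ket{\psi} = u_x v_y$ with uniform marginals. Since $\Q_0$ is closed under convex combinations (mix quantum strategies via an auxiliary classical register, or via an extra entangled qubit), it contains the convex hull of all such deterministic correlations, which is exactly $\L_0$.

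For $\Q_0 \subseteq K_G \L_0$, the idea is to chain together two ingredients already at our disposal. First, the corollary to Tsirelson's theorem gives a concrete representation of any $C \in \Q_0$ as an inner product: $C(x,y) = \vec{a}(x) \cdot \vec{b}(y)$ with unit vectors $\vec{a}(x), \vec{b}(y)$ in some real Hilbert space. Second, Grothendieck's inequality, stated in its geometric (convex-hull) form, asserts that every such matrix of inner products of unit vectors lies in $K_G$ times the convex hull of the rank-one sign matrices $\{(u_x v_y)_{x,y}: u \in \{\pm 1\}^{\eks}, v \in \{\pm 1\}^{\wye}\}$. Since this convex hull is precisely $\L_0$, we conclude $C/K_G \in \L_0$, i.e., $C \in K_G \L_0$.

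There is no genuine obstacle here: both ingredients (Tsirelson's theorem and Grothendieck's inequality) are used as black boxes, and the argument amounts to recognizing that $\L_0$ and $\Q_0$ are, by the two theorems, literally the two sides of Grothendieck's inequality, with the ratio between them bounded by $K_G$.
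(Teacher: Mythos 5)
Your treatment of the second inclusion is correct and matches what the paper intends: the paper cites only Grothendieck's constant, with the vector characterization of $\Q_0$ coming from Tsirelson's theorem, so chaining the two black boxes is exactly the right move. The first inclusion, however, rests on a false claim. Taking $A(x) = u_x\,\id$ and $B(y) = v_y\,\id$ does yield correlation $\bra{\psi}A(x)B(y)\ket{\psi} = u_x v_y$, but \emph{not} uniform marginals: $\bra{\psi}A(x)\ket{\psi} = u_x \in \{\pm 1\}$, so each player's output is deterministic and maximally biased. What you have built is the local extreme point $(u^Tv, u, v) \in \L \subseteq \Q$, which lies outside the uniform-marginal slice and hence does not, by itself, witness $u^T v \in \Q_0$.

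The repair is easy and any of the following closes the gap: (i) symmetrize classically, $(u^Tv,0,0) = \frac{1}{2}(u^Tv,u,v) + \frac{1}{2}(u^Tv,-u,-v) \in \L \subseteq \Q$, so $u^Tv \in \Q_0$ by definition; (ii) instantiate the Corollary characterization $\Q_0 = \{C : C(x,y) = \vec{a}(x)\cdot\vec{b}(y),\ \norm{\vec{a}(x)} = \norm{\vec{b}(y)} = 1\}$ with the one-dimensional unit vectors $\vec a(x) = (u_x)$, $\vec b(y) = (v_y)$; or (iii) use the same auxiliary entangled qubit you already invoke for convexity to produce a shared uniformly random bit and have both players flip their output when it is $1$ --- this preserves the product $ab$ while forcing uniform marginals. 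The paper itself sidesteps the issue by asserting $\L \subseteq \Q$ outright and restricting to uniform marginals, which is in effect option (i).
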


\subsection {Models of communication complexity}
We consider the following model of communication complexity of
non-signaling distributions $\mathbf{p}$.  Alice gets input
$x$, Bob gets input $y$, and after exchanging bits or
qubits, Alice has to output $a$ and Bob $b$ so
that the joint distribution is $p(a,b|x,y)$.
$R_0(\mathbf{p})$
denotes the communication complexity
of simulating $\mathbf{p}$ exactly, using private randomness
and classical communication.
$Q_0(\mathbf{p})$
denotes the communication complexity
of simulating $\mathbf{p}$ exactly, using
quantum communication.
We use superscripts ``$\pub$'' and ``$\ent$'' in the case where
the players share
random bits or quantum entanglement.
For $R_\epsilon(\mathbf{p})$, we are only required to
simulate some distribution $\mathbf{p}'$ such that $\delta(\mathbf{p},\mathbf{p}')\leq\epsilon$, where $\delta(\mathbf{p},\mathbf{p}')=\max\{|p(\mathcal{E}|x,y)-p'(\mathcal{E}|x,y)|:x,y\in\eks\times\wye,\mathcal{E}\subseteq\ay\times\bee\}$ is the total variation distance (or statistical distance) between two distributions.

For distributions with binary outcomes, we write $R_\epsilon(C,M_A,M_B)$ and $Q_\epsilon(C,M_A,M_B)$.
In the case of Boolean functions, $R_\epsilon(C)=R_\epsilon(C,0,0)$
corresponds to the usual notion of computing $f$
with probability at least $1-\epsilon$,
where $C$ is the $\pm 1$ communication matrix of $f$.
From the point of view of communication,
distributions with uniform marginals are the easiest to
simulate.  Suppose we have a protocol that simulates
correlations $C$ with arbitrary marginals.  By using just an additional
shared random bit, both players can flip their outcome
whenever the shared random bit is 1.  Since each players' marginal
outcome is now an even coin flip, this protocol
simulates  the distribution $(C, 0,0)$.
\begin{proposition}
\label{prop:uniform-marginals}
For any Boolean non-signaling distribution $(C, M_A, M_B)$,
we have $R_\epsilon^{\pub}(C, 0, 0)\leq R_\epsilon^{\pub}(C, M_A, M_B)$ and $Q_\epsilon^{\ent}(C, 0, 0)\leq Q_\epsilon^{\ent}(C, M_A, M_B)$.
\end{proposition}

\section{Lower bounds for non-signaling distributions}\label{sec:lower-bounds}

In this section we prove our main theorem, a lower bound on quantum and
classical communication complexity for non-signaling distributions, based
on their affine representations.

Let us define the following quantities, which as we will see may be considered as extensions of the $\cls$ and $\qls$ quantities of~\cite{ls07} (defined in Section~\ref{sec:fact-norms}) to distributions.

\begin{defn}\label{def:lp-quasi-probas}

\begin{itemize}
\item $\cquasi(\mathbf{p}) =
\min \{ \sum_i  \abs{q_i} :
	\exists \mathbf{p}_i\in\L,q_i\in \Real, \mathbf{p} = \sum_i q_i \mathbf{p}_i \} $,
\item $\qquasi(\mathbf{p}) =
\min \{ \sum_i\abs{q_i} :
	\exists \mathbf{p}_i\in \Q, q_i\in \Real, \mathbf{p} = \sum_i q_i \mathbf{p}_i \} $,
\item $\cquasi^\epsilon(\mathbf{p}) = \min \{ \cquasi(\mathbf{p}') :
	\delta(\mathbf{p},\mathbf{p}') \leq \epsilon \} $,
\item $\qquasi^\epsilon(\mathbf{p}) = \min \{ \qquasi(\mathbf{p}') :
	 \delta(\mathbf{p},\mathbf{p}')\leq \epsilon \} $.
\end{itemize}
\end{defn}

Notice that $\sum_i q_i \mathbf{p}_i=\mathbf{p}$ implies in particular $\sum_i q_i=1$.
The quantities $\cquasi(\mathbf{p})$ and $\qquasi(\mathbf{p})$ show how well $\mathbf{p}$ may be represented as an affine combination of local or quantum distributions, a \emph{good} affine combination being one where the sum of absolute values of coefficients $q_i$ is as low as possible.
Figure~\ref{fig:decomposition} represents the decomposition of a
distribution into an affine combination of local distributions.
For a local distribution, we may take positive coefficients $q_i$, and therefore obtain the minimum possible value $\cquasi(\mathbf{p})=1$, and similarly for quantum distributions, so that

\begin{lemma}\label{lem:charact-locality-quanticity}
 $\mathbf{p}\in\L\Longleftrightarrow \cquasi(\mathbf{p})=1$, and
 $\mathbf{p}\in\Q\Longleftrightarrow \qquasi(\mathbf{p})=1$.
\end{lemma}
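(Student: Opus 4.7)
The proof reduces to two elementary observations, valid for any non-signaling $\mathbf{p}$.

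The first observation yields the lower bound $\cquasi(\mathbf{p}),\qquasi(\mathbf{p})\ge 1$. In any affine decomposition $\mathbf{p}=\sum_i q_i\mathbf{p}_i$, I would sum both sides over the outcomes $(a,b)$ at an arbitrary fixed input pair $(x,y)$; since $\mathbf{p}$ and each $\mathbf{p}_i$ are probability distributions, this forces the normalization $\sum_i q_i=1$, and then the triangle inequality gives $\sum_i|q_i|\ge 1$. The ``easy'' direction of both equivalences is then immediate: when $\mathbf{p}\in\L$ (resp.\ $\mathbf{p}\in\Q$), the trivial decomposition $\mathbf{p}=1\cdot\mathbf{p}$ is admissible and witnesses $\cquasi(\mathbf{p})\le 1$ (resp.\ $\qquasi(\mathbf{p})\le 1$), so equality holds.

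For the converse in the classical case, I would use that $\L$ is a polytope whose extreme points are the local deterministic distributions $\{\mathbf{p}^\lambda:\lambda\in\Lambda\}$: any decomposition of $\mathbf{p}$ over $\L$ may be refined to one over this finite set, and the infimum defining $\cquasi(\mathbf{p})$ becomes the optimum of a finite linear program, hence attained. Writing $q_\lambda=q_\lambda^+-q_\lambda^-$ with $q_\lambda^\pm\ge 0$, the simultaneous constraints $\sum_\lambda q_\lambda=1$ and $\sum_\lambda|q_\lambda|=1$ at the optimum force $\sum_\lambda q_\lambda^-=0$, i.e.\ every $q_\lambda\ge 0$. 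Thus $\mathbf{p}$ is a convex combination of local deterministic distributions and $\mathbf{p}\in\L$.

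The quantum converse uses the same splitting, but the main obstacle is that $\Q$ is convex rather than a polytope, so the infimum need not be realized by a finite support decomposition. I would proceed by a limit argument: if $\qquasi(\mathbf{p})=1$, then for each $\epsilon>0$ there is a decomposition with $\sum_iq_i=1$ and $\sum_i|q_i|\le 1+\epsilon$, giving $\sum_iq_i^-\le \epsilon/2$. Regrouping the positive and negative parts and invoking the convexity of $\Q$, I obtain a representation $\mathbf{p}=(1+\beta_\epsilon)\mathbf{r}_\epsilon-\beta_\epsilon\mathbf{q}_\epsilon$ with $\mathbf{r}_\epsilon,\mathbf{q}_\epsilon\in\Q$ and $\beta_\epsilon\le\epsilon/2$, so that $\mathbf{r}_\epsilon\to \mathbf{p}$ as $\epsilon\to 0$. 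The final step is closedness of $\Q$, which follows from the commuting-observables formulation adopted in Section~\ref{subsec:characterization-quantum}: writing $\Q=\bigcap_n\Q^n$ realizes $\Q$ as an intersection of the SDP-defined, hence closed, sets $\Q^n$, so the limit point $\mathbf{p}$ lies in $\Q$.
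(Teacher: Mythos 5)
Your proof is correct and, in fact, more rigorous than what the paper offers. The paper's own justification is just the one-sentence argument preceding the lemma: normalization forces $\sum_i q_i=1$, so $\sum_i|q_i|\ge 1$, and the trivial decomposition $\mathbf{p}=1\cdot\mathbf{p}$ gives equality when $\mathbf{p}$ is local or quantum. The converse direction is left implicit — the paper silently assumes the minimum in Definition~\ref{def:lp-quasi-probas} is attained, at which point the argument is immediate ($\sum q_i=1$ and $\sum|q_i|=1$ force $q_i\ge 0$, and convexity of $\L$ or $\Q$ closes the loop).

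What you add is the justification that the ``min'' is legitimate. For $\cquasi$, your reduction to the finite LP over the extreme points $\{\mathbf{p}^\lambda\}$ is exactly right and is the clean way to see attainment. For $\qquasi$, your concern is genuine: $\Q$ is not a polytope, and the infimum over decompositions need not be attained a priori, so replacing the direct argument by a limit argument and invoking closedness is the honest route. The regrouping into $\mathbf{p}=(1+\beta_\epsilon)\mathbf{r}_\epsilon-\beta_\epsilon\mathbf{q}_\epsilon$ with $\beta_\epsilon\to 0$ is correct, and $\mathbf{r}_\epsilon\to\mathbf{p}$ follows since $\mathbf{r}_\epsilon-\mathbf{p}=\frac{\beta_\epsilon}{1+\beta_\epsilon}(\mathbf{q}_\epsilon-\mathbf{p})$ with $\mathbf{q}_\epsilon,\mathbf{p}$ ranging over a bounded set. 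Closedness of $\Q$ in the commuting-operator formulation is indeed what makes the NPA characterization $\Q=\bigcap_n\Q^n$ hold (and is precisely why the paper adopts that definition), so your appeal to it is sound. In short: same underlying idea as the paper, but you have supplied the attainment/closure argument the paper elides, which is a worthwhile clarification.
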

In other words, the set of local distributions $\L$ form the unit sphere of $\cquasi$, and similarly the set of quantum distributions $\Q$ form the unit sphere of $\qquasi$.
In the binary case, observe that by Proposition~\ref{prop:uniform-marginals}, we have $\qquasi(C) \leq \qquasi(C,u,v)$
and $\cquasi(C) \leq \cquasi(C,u,v)$.
By Proposition~\ref{prop:grothendieck},
$\qquasi(C) \leq \cquasi(C) \leq K_G \qquasi(C)$. Similar properties hold for the approximate versions $\cquasi^\epsilon(C)$ and $\qquasi^\epsilon(C)$.

Our main theorem gives a lower bound on communication complexity in terms
of the quantities $\cquasi$ and $\qquasi$.

\begin{thm}\label{cor:lower-bound-quasi}
For any non-signaling distribution $\mathbf{p}$ and correlation matrix $C$,
\begin{enumerate}
\item $R_0^{\pub}(\mathbf{p}) \geq \log(\cquasi(\mathbf{p}))-1$,
and $R_\epsilon^{\pub}(\mathbf{p})\geq \log(\cquasi^{\epsilon}(\mathbf{p}))-1$.
\item $Q_0^{\ent}(\mathbf{p}) \geq \frac{1}{2}\log(\qquasi(\mathbf{p}))-1$,
and $Q_\epsilon^{\ent}(\mathbf{p})\geq \frac{1}{2}\log(\qquasi^{\epsilon}(\mathbf{p}))-1$.
\item $Q_0^{\ent}(C) \geq \log(\qquasi(C))$,
and $Q_\epsilon^{\ent}(C)\geq \log(\qquasi^{\epsilon}(C))$.
\end{enumerate}
\end{thm}

The proof, minus the details, goes as follows.
Assume that there is a $t$ bit protocol for $\mathbf{p}$.
We derive a noisy, local distribution from $\mathbf{p}$ as follows (Lemma~\ref{lm:decreased-correlations-pi}).
Simulate the protocol, but instead of communicating, guess
a transcript.  If both players agree that this was the correct
transcript, then they output according to $\mathbf{p}$.  This occurs with
probability $2^{-t}$.  Otherwise, output something random.
The resulting distribution is $p'= 2^{-t}\mathbf{p} + (1-2^{-t})\mathbf{q}$ where
$\mathbf{q}$ is some random noise.  But $\mathbf{p}'$ and $\mathbf{q}$ are local, so this
gives an affine representation of $\mathbf{p} = 2^t \mathbf{p}' - 2^t (1-2^{-t})\mathbf{q}$,
showing that $\cquasi(\mathbf{p}) \leq 2^{t+1}-1$.
The rest of this section is devoted to the details.  The only complication
arises from handling arbitrary marginal distributions and
setting up the distribution they should output from when they disagree with
the random transcript.
However, the proof is straightforward, as above, when the marginals are uniform,
which is the case for Boolean functions.

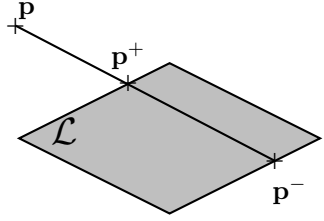
\begin{figure}[h]
\begin{pspicture}(\textwidth,3)
\pspolygon[fillcolor=lightgray, fillstyle=solid](5,1)(7,2)(9,1)(7,0)
\rput (5.6,1.1) {\Large $\mathcal L$}
\psline (4.95,2.5)(8.4,.7)
\rput(6.45, 1.73){\large +}
\rput(8.4,.7){\large +}
\rput(4.95,2.5){\large +}
\rput (5.1,2.7){$\mathbf p$}
\rput(6.45, 2.1){$\mathbf p^+$}
\rput(8.6,.3){$\mathbf p^-$}
\end{pspicture}
\caption{\label{fig:decomposition}$\mathbf p$ is an affine combination of $\mathbf p^+$ and $\mathbf p^-$}
\end{figure}

\subsection{Producing a noisy local distribution {}{from a communication protocol}}

We first show that if a distribution $\mathbf{p}$ may be simulated with $t$ bits of
communication (or $q$ qubits of quantum communication), then there is a noisy version of
this distribution that is local (or quantum).
\begin{lemma}
\label{lm:decreased-correlations-pi}
Let $\mathbf p$ be a non-signaling distribution over $\mathcal A \times \mathcal B$
with input set $\mathcal X \times \mathcal Y$.
\begin{enumerate}
\item Assume that $R_0^{\pub}(\mathbf p) \leq t$, then
there exist two marginal distributions $p_A(a|x)$ and $p_B(b|y)$ such that the distribution
$
p_l(a,b|x,y)=\frac{1}{2^t}p(a,b|x,y)+(1-\frac{1}{2^t})p_A(a|x)p_B(b|y)
$
is local.
\item Assume that $Q_0^{\ent}(\mathbf p)\leq q$, then
there exist two marginal distributions $p_A(a|x)$ and $p_B(b|y)$ such that the distribution
$
p_l(a,b|x,y)=\frac{1}{2^{2q}}p(a,b|x,y)+(1-\frac{1}{2^{2q}})p_A(a|x)p_B(b|y)
$
is quantum.
\item Assume that $\mathbf p=(C,0,0)$ and $Q_0^{\ent}(C) \leq q$, then $C / {2^q} \in \Q_0$.
\end{enumerate}
\end{lemma}
\begin{proof}
We assume that the length of the transcript is exactly t bits for
each execution of the protocol, adding dummy bits if necessary.
We now fix some notations. In the original protocol, the players pick a random string
$\lambda$ and exchange some communication
whose transcript is denoted $T(x,y,\lambda)$.
Alice then outputs some value $a$ according to a probability distribution
$p_P(a|x, \lambda, T)$. Similarly,
Bob outputs some value $b$ according to a probability distribution
$p_P(b|y, \lambda, T)$.

From Alice's point of view, on input $x$ and shared randomness $\lambda$,
only a subset of the set of all $t$-bit transcripts can be produced:
the transcripts $S\in\{0,1\}^t$ for which there exists a $y$ such that $S=T(x,y,\lambda)$.
We will call these transcripts the set of valid transcripts for $(x,\lambda)$.
The set of valid transcripts for Bob is defined similarly. We denote these sets
respectively $U_{x,\lambda}$ and $V_{y,\lambda}$.

We now define a local protocol for the distribution $p_l(a,b|x,y)$:
\begin{itemize}
\item As in the original protocol, Alice and Bob initially share some random string $\lambda$.
\item Using additional shared randomness, Alice and Bob choose a transcript $T$ uniformly at random in $\{0,1\}^t$.
\item If $T$ is a valid transcript for $(x,\lambda)$, she outputs $a$
according to the distribution $p_P(a|x,\lambda,T)$.
If it is not, Alice outputs $a$ according to a distribution $p_A(a|x)$
which we will define later.
\item Bob does the same. We will also define the distribution $p_B(b|y)$ later.
\end{itemize}

Let $\mu$ be the distribution over the randomness and the $t$-bit strings in the
local protocol.
By definition, the distribution produced by this protocol is
\begin{eqnarray*}
 p_l(a,b|x,y)&=&
 \sum_\lambda \mu(\lambda) \left[
 \sum_{T\in U_{x,\lambda} \cap V_{y,\lambda}} \mu(T) p_P(a|x,\lambda, T) p_P(b|y,\lambda, T) +
p_B(b|y) \sum_{T\in U_{x,\lambda} \cap \bar V_{y,\lambda}} \mu(T) p_P(a|x,\lambda, T) \right.\\
&+&\left. p_A(a|x) \sum_{T\in \bar U_{x,\lambda} \cap V_{y,\lambda}} \mu(T) p_P(b|y,\lambda, T) +
p_B(b|y) p_A(a|x) \sum_{T\in \bar U_{x,\lambda} \cap \bar V_{y,\lambda}} \mu(T) \right]
\end{eqnarray*}

We now analyze each term separately. For fixed inputs $x,y$ and
shared randomness $\lambda$, there is only one transcript which is valid for both Alice and Bob, and when they use this transcript for each $\lambda$, they output according to the distribution $\mathbf p$.
Therefore, we have
$$ \sum_\lambda \mu(\lambda)
 \sum_{T\in U_{x,\lambda} \cap V_{y,\lambda}} \mu(T) p_P(a|x,\lambda, T) p_P(b|y,\lambda, T) = \frac 1 {2^t} p(a,b|x,y).$$

Let $A_x$ be the event that Alice's transcript is valid for $x$ (over random $\lambda,T$), and $\bar A_x$ its negation (similarly $B_y$ and $\bar B_y$ for Bob).
We denote $$p_P(a|x, A_x \cap \bar B_y)= \frac {\sum_\lambda \mu(\lambda)
\sum_{T \in U_{x,\lambda} \cap \bar V_{y,\lambda}} \mu(T) p_P(a|x, \lambda, T)}
{\mu(A_x \cap \bar B_y)},$$
where, by definition, we have
$\mu(A_x\cap \bar B_y)=\sum_\lambda \mu(\lambda) \sum_{T\in U_{x,\lambda} \cap \bar V_{y,\lambda}} \mu(T)$.
We will show that this distribution is independent of $y$ and that the corresponding distribution $p_P(b|y, \bar A_x \cap B_y)$ for Bob is independent of $x$.
Using these distributions, we may write $p_l(a,b|x,y)$ as
\begin{eqnarray*}
 p_l(a,b|x,y)&=&
 \frac 1 {2^t} p(a,b|x,y) + \mu(A_x \cap \bar B_y) p_B(b|y)  p_P(a|x,A_x\cap \bar B_y)\\
&+& \mu(\bar A_x \cap B_y) p_A(a|x) p_P(b|x,\bar A_x\cap B_y) + \mu(\bar A_x \cap \bar B_y) p_B(b|y) p_A(a|x)
\end{eqnarray*}
Summing over $b$, and using the fact that $\mathbf p_l$ and $\mathbf p$
are non-signaling, we have
\begin{eqnarray*}
 p_l(a|x)&=&\frac{1}{2^t}p(a|x)+\mu(A_x\cap \bar{B}_y) p_P(a|x, A_x\cap \bar{B}_y)\\
&+&\mu(\bar{A}_x\cap {B}_y) p_A(a|x)+\mu(\bar{A}_x\cap \bar{B}_y)p_A(a|x)\\
&=&\frac{1}{2^t}p(a|x)+\mu(A_x \cap \bar{B}_y) p_P(a|x, A_x \cap \bar{B}_y)+\mu(\bar{A}_x)p_A(a|x),
\end{eqnarray*}
Note that by definition, $\mu(A_x)=\sum_\lambda \mu(\lambda) \sum_{T\in U_{x,\lambda}} \mu(T)$ is independent of $y$, therefore so is $\mu(A_x \cap \bar B_y) = \mu(A_x) - \mu(A_x \cap B_y)= \mu(A_x) -\frac 1 {2^t}$.
From the expression for $p_l(a|x)$, we can conclude that $p_P(a|x,A_x \cap \bar B_y)$
is independent of $y$ and can be evaluated by Alice
(and similarly for the analogue distribution for Bob). We now set
\begin{eqnarray*}
p_A(a|x)&=&p_P(a|x,A_x\cap \bar{B}_y)\\
p_B(b|y)&=&p_P(b|y, \bar{A}_x\cap {B}_y).
\end{eqnarray*}
Therefore, the final distribution obtained from the local protocol may be written as
\begin{eqnarray*}
 p_l(a,b|x,y)&=&\frac{1}{2^t}p(a,b|x,y)+\mu(A_x\cap \bar{B}_y)p_A(a|x)p_B(b|y)\\
&+& \mu(\bar{A}_x\cap {B}_y)p_A(a|x)p_B(b|y)+\mu(\bar{A}_x\cap\bar{B}_y)p_A(a|x)p_B(b|y)\\
&=&\frac{1}{2^t}p(ab|xy)+(1-\frac{1}{2^t})p_A(a|x)p_B(b|y).
\end{eqnarray*}

For quantum protocols, we first simulate quantum communication using shared entanglement and teleportation, which uses 2 bits of classical communication for each qubit. Starting with this protocol using $2q$ bits of classical communication, we may use the same idea as in the classical case, that is choosing a random $2q$-bit string interpreted as the transcript, and replacing the players' respective outputs by independent random outputs chosen according to $p_A$ and $p_B$ if the random transcript does not match the bits they would have sent in the original protocol.

In the case of binary outputs with uniform marginals, that is, $\mathbf{p}=(C,0,0)$, we may improve the exponent of the scaling-down coefficient $2^{2q}$ by a factor of $2$ using a more involved analysis and a variation of a result by~\cite{kremer, yao, ls07} (the proof is given in Appendix~\ref{appendix:quantum-communication} for completeness).
\begin{lemma}[\cite{kremer, yao, ls07}]
\label{lemma:quantum-communication}
Let $(C, M_A, M_B)$ be a distribution simulated by a
quantum protocol with shared entanglement
using $q_A$ qubits of communication from Alice to Bob and $q_B$ qubits from Bob to Alice.
There exist vectors $\vec{a}(x),\vec{b}(y)$ with
$\norm{\vec{a}(x)}\leq 2^{q_B}$ and $\norm{\vec{b}(y)} \leq 2^{q_A}$ such that
$C(x,y) = \vec{a}(x)\cdot\vec{b}(y)$.
\end{lemma}
The fact that $C/2^q \in\Q_0$ then follows from Theorem~\ref{thm:tsirelson} part~2.
\end{proof}

\subsection{Deriving {}{an affine model and} the lower bound from the {noisy distribution}}

In this section we show that using Lemma~\ref{lm:decreased-correlations-pi}, an explicit affine model can be derived from a (classical or quantum) communication protocol for $\mathbf{p}$,
which gives us a lower bound technique for communication
complexity in terms of how ``good'' the affine model is.
We now are ready to complete the proof of Theorem~\ref{cor:lower-bound-quasi}.

\begin{proof}[Proof of Theorem~\ref{cor:lower-bound-quasi}]
We give a proof for the classical case, the quantum case follows the same lines.
Let $c$ be the number of bits exchanged. From Lemma~\ref{lm:decreased-correlations-pi}, we know that
there exists marginal distributions $p_A(a|x)$ and $p_B(b|y)$ such that
$p_l(a,b|x,y)=\frac{1}{2^t}p(a,b|x,y)+(1-\frac{1}{2^t})p_A(a|x)p_B(b|y)$ is local.
This gives an affine model for $p(a,b|x,y)$, as the following combination of two local distributions:
$$p(a,b|x,y)=2^t p_l(a,b|x,y) + (1-2^t) p_A(a|x) p_B(b|y).$$
Then $\cquasi(\mathbf{p}) \leq 2^{t+1} -1$.

In the case of binary outputs with uniform marginals, $\mathbf{p}_l=( C/ 2^t,0,0)$, and Lemma~\ref{lm:decreased-correlations-pi} implies that $ C/2^t \in \L_0$.
By following the local protocol for $ C/2^t$ and letting Alice flip her output, we also get a local protocol for $- C/2^t$, so $-C/2^t \in \L_0$ as well. Notice that we may build an affine model for $C$ as a combination of $C/2^t$ and $-C/2^t$:
 $$C
= \frac{1}{2}(2^t+1)\frac{C}{2^t} - \frac{1}{2}(2^t-1)\frac{C}{2^t}.$$
Then, $\cquasi(C) \leq {2^t}$.
\end{proof}
\subsection{Factorization norm and related measures}\label{sec:fact-norms}

In the special case of distributions over binary variables with
uniform marginals, the quantities~$\cquasi$ and~$\qquasi$
become equivalent to the original quantities defined in~\cite{lmss07,ls07}
(at least for the interesting case of non-local correlations,
that is correlations with non-zero communication complexity).
When the marginals are uniform we omit them and
write $\cquasi(C)$ and $\qquasi(C)$.
The following are reformulations as Minkowski functionals
of the definitions appearing in~\cite{lmss07,ls07}.

\begin{defn}
\begin{itemize}
 \item $\cls(C) = \min \{ \Lambda>0 :\frac{1}{\Lambda}C \in \L_0\} $,
 \item $\qls(C) = \min \{ \Lambda>0 :\frac{1}{\Lambda}C \in \Q_0\} $,
 \item $\cls^\alpha(C) = \min \{ \cls(C'): 1\leq C(x,y)C'(x,y)\leq \alpha,\ \forall x,y\in\eks\times\wye\} $,
 \item $\qls^\alpha(C) = \min \{ \qls(C'): 1\leq C(x,y)C'(x,y)\leq \alpha,\ \forall x,y\in\eks\times\wye\} $.
\end{itemize}
\end{defn}

\begin{thm}
\label{lemma:cmargvscquasi}
For any correlation matrix $C:\eks\times \wye \rightarrow [-1,1]$,
\begin{enumerate}
\item $\cquasi(C) =1$ iff $\cls(C)\leq 1$, and
$\qquasi(C) =1$ iff $\qls(C)\leq 1$,
\item $\cquasi(C)>1\Longrightarrow \cmarg(C)=\cquasi(C)$,
\item $\qquasi(C)>1\Longrightarrow \qmarg(C)=\qquasi(C)$.
\end{enumerate}
\end{thm}

\begin{proof}
The first item follows by definition of $\cls$ and $\qls$.
For the next items, we give the proof for $\cls$, and the proof for
$\qls$ is similar. The key to the proof is that
if $C\in\L_0$, then $-C\in\L_0$ (it suffices for one of the players to flip his output).

[$\cquasi(C)\leq \cmarg(C)$]
If $\cquasi(C)>1$, then $\Lambda=\cmarg(C) >  1$.  Let
$C^+=\frac{C}{\Lambda}$ and $C^-=-\frac{C}{\Lambda}$. By definition of $\cmarg(C)$, both $C^+$ and $C^-$ are in $\L_0$. Furthermore, let $q_+=\frac{1+\Lambda}{2}\geq 0$ and $q_-=\frac{1-\Lambda}{2}\leq  0$. Since $C=q_+C^++q_-C^-$, this determines an affine model for $C$ with $|q_+|+|q_-|=\Lambda$.

[$\cquasi(C)\geq \cmarg(C)$] Let $\Lambda=\cquasi(C)$. By definition of $\cquasi(C)$, there exists $C_i$ and $q_i$ such that $C=\sum_i q_i C_i$ and $\Lambda=\sum_i|q_i|$. Let $\tilde{C}_i=\sgn(q_i)C_i$ and $p_i=\frac{|q_i|}{\Lambda}$. Then, $\frac{C}{\Lambda}=\sum_i p_i \tilde{C}_i$ and therefore $\frac{1}{\Lambda}C\in \L_0$ since $\tilde{C}_i\in \L_0$.
\end{proof}

In the special case of sign matrices (corresponding to Boolean functions, as shown above), we also have the following correspondence between $\cquasi^\epsilon,\qquasi^\epsilon$, and $\cls^\alpha,\qls^\alpha$.
\begin{thm}
 \label{lem:epsilon-alpha}
Let $0\leq\epsilon< 1/2$ and $\alpha=\frac{1}{1-2\epsilon}$.
For any sign matrix $C:\eks\times \wye \rightarrow \{-1,1\}$,
\begin{enumerate}
\item $\cquasi^\epsilon(C)>1\Longrightarrow \cls^\alpha(C)=\frac{\cquasi^\epsilon(C)}{1-2\epsilon}$,
\item $\qquasi^\epsilon(C)>1\Longrightarrow \qls^\alpha(C)=\frac{\qquasi^\epsilon(C)}{1-2\epsilon}$.
\end{enumerate}
\end{thm}
\begin{proof}
 We give the proof for $\cls^\alpha$, the proof for $\qls^\alpha$ is similar.

[$\cls^\alpha(C)\leq\frac{\cquasi^\epsilon(C)}{1-2\epsilon}$]
By definition of $\cquasi^\epsilon(C)$, there exists a correlation matrix $C'$ such that $\cquasi(C')=\cquasi^\epsilon(C)$ and $|C(x,y)-C'(x,y)|\leq 2\epsilon$ for all $x,y\in\eks\times\wye$. Since $C$ is a sign matrix, and $C'$ is a correlation matrix,
$\sgn(C'(x,y))=C(x,y)$ and $1-2\epsilon \leq |C'(x,y)|\leq 1$.
Hence
$1\leq C(x,y)\frac{C'(x,y)}{1-2\epsilon}\leq\frac{1}{1-2\epsilon}= \alpha$.
This implies that $\cls^\alpha(C)\leq\cls(\frac{C'}{1-2\epsilon})=
\frac{\cls(C')}{1-2\epsilon}=\frac{\cquasi(C')}{1-2\epsilon}$,
where we used the fact that $\cls(C')=\cquasi(C')$ since $\cquasi(C')>1$.

[$\cls^\alpha(C)\geq\frac{\cquasi^\epsilon(C)}{1-2\epsilon}$]
By definition of $\cls^\alpha(C)$, there exists a
(not necessarily correlation) matrix $C'$ such that
$\cls(C')=\cls^\alpha(C)$ and $1\leq C(x,y)C'(x,y)\leq \alpha$
for all $x,y$.
Since $C$ is a sign matrix, this implies $\sgn(C'(x,y))=C(x,y)$
and $1-2\epsilon \leq |\frac{C'(x,y)}{\alpha}|\leq  1$.
Therefore, $|C(x,y)-\frac{C'(x,y)}{\alpha}|\leq 2\epsilon$ for all $x,y$.
This implies that $\cquasi^\epsilon(C)\leq\cquasi(\frac{C'}{\alpha})
=\cls(\frac{C'}{\alpha})=(1-2\epsilon)\cls(C')$,
where we have used the fact that $\cquasi(\frac{C'}{\alpha})
=\cls(\frac{C'}{\alpha})$ since $\cquasi(\frac{C'}{\alpha})\geq\cquasi^\epsilon(C)>1$.
\end{proof}

\noindent {\bf Discussion.}
Just as the special case $\cls(C)$, $\cquasi(\mathbf{p})$ may be expressed as a linear program. However, while $\qls(C)$ could be expressed as a semidefinite program, this may not be true in general for $\qquasi(\mathbf{p})$ (even though it can still be studied by SDP relaxation, as shown in~\cite{npa08,DLTW}).

Lemmas~\ref{lemma:cmargvscquasi} and~\ref{lem:epsilon-alpha} establish that
Corollary~\ref{cor:lower-bound-quasi}
is a generalization of Linial and Shraibman's factorization norm lower bound technique.
Note that Linial and Shraibman use $\qls^\alpha$ to derive a lower bound not only on the quantum communication complexity $Q_\epsilon^{\ent}$, but also on the classical complexity $R_\epsilon^{\pub}$. In the case of binary outcomes with uniform marginals (which includes Boolean functions, studied by Linial and Shraibman, as a special case), we obtain a similar result by combining our bound for $Q_\epsilon^{\ent}(C)$ with the fact that $Q_\epsilon^{\ent}(C)\leq \lceil{\smfrac{1}{2}R_\epsilon^{\pub}(C)}\rceil$, which follows from superdense coding. This implies $R_\epsilon^{\pub}(C) \geq 2\log(\qls^\epsilon(C))-1$. In the general case, however, we can only prove that $R_\epsilon^{\pub}(\mathbf{p}) \geq \log(\qls^\epsilon(\mathbf{p}))-1$.
This may be due to the fact that the result holds in the much more general
setting of non-signaling distributions with arbitrary outcomes and marginals.

Because of Proposition~\ref{prop:grothendieck}, we know that
$\cls(C) \leq K_G \qls(C)$ for correlations.
Note also that although $\qls$ and $\cls$ are matrix norms, this fails
to be the case for $\qquasi$ and $\cquasi$, even in the case of correlations.  Nevertheless, it is still possible to formulate dual quantities,
which turn out to have sufficient structure, as we show in the next section.

\section{Duality, Bell inequalities, and XOR games}\label{sec:dual}

In their primal formulation, the $\qquasi$ and $\cquasi$ methods are
difficult to apply since they are formulated as a minimization problem.
Transposing to the dual space not only turns the method into
a maximization problem; we show it also has a very natural, well-understood
interpretation since it coincides with maximal violations
of Bell and Tsirelson inequalities.
This is particularly relevant to physics, since it formalizes in very precise terms
the intuition that distributions with large Bell inequality
violations should require more communication to simulate.

Recall that for any norm $\norm{\cdot}$ on a vector space $V$, the dual norm  is
$\norm{B}^*= \max_{v\in V:\norm{v}\leq 1} B(v)$, where $B$ is a linear functional on $V$.

\subsection{Bell and Tsirelson inequalities}

Bell inequalities were first introduced by Bell~\cite{bell64}, as bounds on the correlations that could be achieved by any \emph{local} physical theory. He showed that
quantum correlations could violate these inequalities and therefore exhibited
non-locality. Tsirelson later proved that quantum correlations should also respect
some bound (known as the Tsirelson bound), giving a first example of a
``Tsirelson-like'' inequality for quantum distributions~\cite{tsirelson80}.

Since the set of non-signaling distributions $\C$ lies in an affine space $\aff(\C)$, we may consider the isomorphic dual space of linear functionals over this space.
The dual quantity $\cquasi^*$ (technically not a dual norm since $\cquasi$ itself is not a norm in the general case) is the maximum value of a linear
functional in the dual space on local distributions, and $\qquasi^*$ is
the maximum value of a linear functional on quantum distributions.
These are exactly what is captured by the Bell and Tsirelson inequalities.
\begin{defn}[Bell and Tsirelson inequalities]
\label{defn:bell}
Let $B:\aff(\C)\mapsto\Real$ be a linear functional on the (affine hull of the) set of non-signaling distributions,
$B(\mathbf{p})=\sum_{a,b,x,y} B_{abxy} p(a,b|x,y)$.
Define $\cquasi^*(B)=\max_{\mathbf{p}\in\L}|B(\mathbf{p})|$ and $\qquasi^*(B)=\max_{\mathbf{p}\in\Q}|B(\mathbf{p})|$. 
A Bell inequality is a linear inequality satisfied by any local distribution:
$$B(\mathbf{p})\leq \cquasi^*(B)\ (\forall\ \mathbf{p}\in\L),$$
and a Tsirelson inequality is a linear inequality satisfied by any quantum distribution: $$B(\mathbf{p})\leq \qquasi^*(B)\ (\forall\ \mathbf{p}\in\Q).$$
\end{defn}

By linearity (Proposition~\ref{prop:representation})
Bell inequalities are often expressed as linear functionals
over the correlations in the case of binary outputs and uniform marginals.

Finally, $\qquasi$ and $\cquasi$
amount to finding a maximum violation of a
(normalized) Bell or Tsirelson inequality.
\begin{thm}\label{thm:lp-bell} For any distribution $\mathbf{p}\in \mathcal{C}$,
\begin{enumerate}
 \item \label{item:1}$\cquasi(\mathbf{p})=\max \{ B(\mathbf{p}): \forall \mathbf{p}'\in \L,\  \abs{B(\mathbf{p}')}\leq 1 \}$, and
\item \label{item:2}$\qquasi(\mathbf{p})=\max \{ B(\mathbf{p}): \forall \mathbf{p}'\in \Q,\ \abs{B(\mathbf{p}')} \leq 1 \}$,
\end{enumerate}
where the maximization is over linear functionals $B:\aff(\C)\mapsto\Real$.
\end{thm}

\begin{proof}
The proof of item~\ref{item:1} follows by LP duality from the definition of $\cquasi$.
Nevertheless, we give an alternative proof that
can be easily adapted to prove item~\ref{item:2} (it suffices to replace $\cquasi$ by $\qquasi$ and $\L$ by $\Q$).
The key idea of the proof is to use the convex conjugate of $\cquasi$ (written $\cquasi^\Star$)  which is
closely related to the dual expression (written $\cquasi^*$) , and apply it twice.

We first recall basic facts about convex conjugate functions (See~\cite{Boyd2004} for full details).
For a function $f: \Real^n \rightarrow \Real$, the convex conjugate
function  $f^\Star: \Real^n \rightarrow \Real$ is defined as:
$$f^\Star(y)=\sup_{x \in {\rm dom}(f)} (y^T x - f(x)),$$
where ${\rm dom}(f)$ denotes the domain of $f$.
It is known that $f^{\Star\Star}=f$ provided that $f$ is convex and closed {\it i.e.},~its epigraph is closed.

By grouping negative and positive terms together, it is easy to see that
$\cquasi(\mathbf p)=\min\{ k^+ + k^- : k^+,k^-\in\Real^+,\exists \mathbf p^+, \mathbf p^- \in \L,\mathbf p=k^+ \mathbf p^+ - k^- \mathbf p^-\}$. We consider $\cquasi$ as a function over $\aff(\L)$. Then, it is straightforward to verify that $\cquasi$ is convex, and since its domain $\aff(\L)$ is closed, $\cquasi$ is also a closed function.

We then have by definition
\begin{eqnarray*}
\cquasi^\Star(B) &=& \max_{\mathbf p \in \aff(\L)} (B(\mathbf p) - \cquasi(\mathbf p)),\\
&=& \max_{\mathbf p_1, \mathbf p_2 \in \L, k_1 - k_2 =1}
( B(k_1 \mathbf p_1 - k_2 \mathbf p_2) - (k_1 + k_2)),\\
&=&  \max_{\mathbf p_1, \mathbf p_2 \in \L, k_1 - k_2 =1}
( k_1 (B(\mathbf p_1)-1) - k_2 (B(\mathbf p_2) + 1) ).
\end{eqnarray*}
Therefore,
\begin{eqnarray*}
\cquasi^\Star(B)&=
\begin{cases}
 \max_{\mathbf p \in \L} \abs{B(\mathbf p)}-1 & \text { if } \max_{\mathbf p \in \L} \abs{B(\mathbf p)} \leq 1,\\
+\infty & \text{ otherwise}.
\end{cases}
\end{eqnarray*}
Taking the convex conjugate a second time, we obtain
\begin{align*}
 \cquasi^{\Star\Star}(\mathbf p)=\max_B\ (B(\mathbf p)-\cquasi^\Star(B)).
\end{align*}
From the expression for $\cquasi^\Star(B)$ above, it is clear that the maximum is achieved for a linear functional $B$ such that $\max_{\mathbf p \in \L} \abs{B(\mathbf p)} \leq 1$. Let the maximum be achieved by a linear functional $\bar{B}$, and let us consider $\bar{B}_{\max}=\max_{\mathbf p \in \L} \bar{B}(\mathbf p)$ and $\bar{B}_{\min}=\min_{\mathbf p \in \L} \bar{B}(\mathbf p)$. We show that we can assume without loss of generality that $\abs{\bar{B}_{\min}}\leq\bar{B}_{\max} = 1$. Indeed, we must have $\abs{\bar{B}_{\min}}\leq \bar{B}_{\max}$, otherwise $\bar{B}$ could not achieve the maximum since $-\bar{B}$ would yield a larger value. This implies that $\cquasi^\Star(\bar{B})=\bar{B}_{\max}-1$ and $\cquasi^{\Star\Star}(\mathbf p)=\bar{B}(\mathbf p)-\bar{B}_{\max}+1$. Then, the maximum is also achieved by the linear functional $\bar{B}'(\mathbf p)=\bar{B}(\mathbf p)-\bar{B}_{\max}+1$, which satisfies $\bar{B}'_{\max}=1$ and therefore $\cquasi^\Star(\bar{B}')=0$. From the expression for $\cquasi^{\Star\Star}$, we therefore obtain
$\cquasi^{\Star\Star}(\mathbf p) = \cquasi(\mathbf{p})=\max \{ B(\mathbf{p}): \forall \mathbf{p}'\in \L,\  |B(\mathbf{p}')|\leq 1 \}$.

\end{proof}

\subsection{XOR games and Bell inequalities for correlations}
In the special case of XOR games, there is a close connection between
winning probability and Bell inequalities, which we make explicit in this section.

In an XOR game, Alice is given some input $x$ and Bob is given an input $y$,
and they should output $a=\pm 1$ and $b=\pm 1$. They win if  $a\cdot b$ equals some
$\pm1$ function $G(x,y)$. Since they are not allowed to communicate, their strategy may be represented as a local correlation matrix $S\in\L_0$. We consider the distributional
version of this game, where $\mu$ is a distribution on the inputs.
The winning bias given some strategy $S$ with respect to $\mu$
is
$\epsilon_\mu(G{\parallel}S) = \sum_{x,y} \mu(x,y) G(x,y) S(x,y)$,
and $\epsilon_\mu^{\pub}(G) = \max_{S\in \L_0} \epsilon_\mu(G{\parallel}S)$
is the maximum winning bias of any local (classical) strategy
(for convenience, we consider the bias instead of game value $\omega_\mu^{\pub}(G)=(1+\epsilon_\mu^{\pub}(G))/2$).  We define  $\epsilon_\mu^\ent(G)$ similarly for quantum strategies.
When the input distribution is not fixed, we define the game biases
as $\epsilon^{\pub}(G)=\min_\mu\epsilon_\mu^{\pub}(G)$
and $\epsilon^\ent(G)=\min_\mu\epsilon^\ent_\mu(G)$.
\begin{lemma}
There is a bijection between XOR games $(G,\mu)$ and normalized correlation Bell inequalities.
\end{lemma}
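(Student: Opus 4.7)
The plan is to exhibit explicit maps in both directions and verify they are mutual inverses. I read ``normalized correlation Bell inequality'' as a linear functional $B(C)=\sum_{x,y}B_{xy}C(x,y)$ on correlation matrices whose coefficients form a signed measure of total variation one, i.e. $\sum_{x,y}|B_{xy}|=1$. This is the natural normalization for the upcoming identification, because it makes the Bell value on any strategy coincide exactly with the XOR-game bias.

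For the forward map, given an XOR game $(G,\mu)$ I would set $B_{G,\mu}$ to be the linear functional with coefficients $(B_{G,\mu})_{xy}=\mu(x,y)G(x,y)$. Since $|G(x,y)|=1$, the $\ell_1$-norm of the coefficients is $\sum_{x,y}\mu(x,y)=1$, so $B_{G,\mu}$ is indeed a normalized correlation Bell inequality. Moreover, by construction, for any strategy $S\in\L_0$ we have
\[
B_{G,\mu}(S)=\sum_{x,y}\mu(x,y)G(x,y)S(x,y)=\epsilon_\mu(G\parallel S),
\]
so the classical/quantum game biases $\epsilon_\mu^{\pub}(G),\epsilon_\mu^{\ent}(G)$ equal $\max_{S\in\L_0}B_{G,\mu}(S)$ and $\max_{S\in\Q_0}B_{G,\mu}(S)$ respectively.

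For the backward map, given a normalized $B$ I would define $\mu_B(x,y)=|B_{xy}|$ and $G_B(x,y)=\sgn(B_{xy})$, with the convention $G_B(x,y)=+1$ whenever $B_{xy}=0$. The normalization condition $\sum|B_{xy}|=1$ says precisely that $\mu_B$ is a probability distribution, and $G_B$ is $\pm 1$-valued by construction. Then $(B_{G_B,\mu_B})_{xy}=|B_{xy}|\,\sgn(B_{xy})=B_{xy}$, so composing backward-then-forward gives the identity; forward-then-backward recovers $(G,\mu)$ at every input where $\mu(x,y)>0$.

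The only subtle point is the behaviour at null inputs: at $(x,y)$ with $\mu(x,y)=0$ the game bias does not depend on $G(x,y)$, and the Bell coefficient $B_{xy}$ vanishes as well, so these inputs carry no information in either picture. Fixing the convention $G=+1$ on $\mu$-null inputs (equivalently, identifying games that agree $\mu$-almost everywhere) removes the ambiguity and makes the two maps honest inverses, establishing the bijection. I do not expect a major obstacle beyond being careful about this null-input convention and pinning down the precise meaning of ``normalized''.
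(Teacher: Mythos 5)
Your proof is correct and follows essentially the same route as the paper's: map $(G,\mu)$ to the Hadamard product $G\circ\mu$, and conversely read off $\sgn(B_{xy})$ and $|B_{xy}|/\sum|B_{xy}|$ from a Bell functional. The only difference is that you explicitly flag the ambiguity of $G$ on $\mu$-null inputs and fix a convention, a point the paper glosses over; this is a sensible tidying-up rather than a new idea.
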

\begin{proof}
For a given XOR game $G$, and a local strategy $C$, its winning probability, or more
simply its bias, can be written as a linear equation, which we write
$G{{\circ}}\mu\,(C)= \epsilon_\mu(G{\parallel}C) $
where ${\circ}$ is the Hadamard (entrywise) product.
This can be seen as a linear functional over the space of strategies.
By Definition~\ref{defn:bell}, $\cls^*(G{\circ} \mu)=\epsilon_\mu^{\pub}(G)$, and
$\epsilon_\mu(G{\parallel}C) \leq\epsilon_\mu^{\pub}(G)$ is a Bell inequality satisfied by any local correlation matrix $C$.
Similarly, when the players are allowed to use entanglement, we get
 a Tsirelson inequality on quantum correlations,
$ \epsilon_\mu(G{\parallel}C) \leq \epsilon^\ent_\mu(G)$
(the quantum bias is also equivalent to a dual
norm $\epsilon^\ent_\mu(G)=\qls^*(G{\circ}\mu)$).

Conversely, consider a general linear functional
$B(C)=\sum_{x,y}B_{xy}C(x,y)$ on $\aff(\C_0)$, defining a correlation Bell inequality
$B(C)\leq \cls^*(B)\ \forall\ C\in \L_0$. Dividing this Bell inequality by $N=\sum_{x,y}|B_{xy}|$, we see that it determines
an XOR game specified by a sign matrix $G(x,y)=\sgn(B_{xy})$ and an input distribution
$\mu_{xy}=\frac{|B_{xy}|}{N}$, and having a game bias $\epsilon_\mu^{\pub}(G)=\frac{\cls^*(B)}{N}$.
\end{proof}
By Theorem~\ref{thm:lp-bell} and the previous bijection
(see also Lee \textit{et al.}~\cite{lss08}):
\begin{cor}\label{cor:LS-game}
\begin{enumerate}
\item $\cls(C)=\max_{\mu,G}\frac{\epsilon_\mu(G{\parallel}C)}{\epsilon_\mu^{\pub}(G)}$,
\item $\cls(C)\geq\frac{1}{\epsilon^{\pub}(C)}$.
\end{enumerate}
\end{cor}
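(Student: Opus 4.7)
The plan is to deduce both parts from Theorem~\ref{thm:lp-bell} via the bijection between linear functionals on $\aff(\C_0)$ and XOR games established in the preceding lemma.

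For part~1, I would first specialize Theorem~\ref{thm:lp-bell} to $\mathbf{p}=(C,0,0)$. Since the marginal components of the distribution are zero, any marginal component of a dual functional $B$ contributes nothing to the objective $B(C,0,0)$ while only tightening the constraint, so we may restrict attention to functionals of the form $B(C') = \sum_{x,y} B_{xy} C'(x,y)$ on correlation matrices. Moreover, $\L_0$ is symmetric under sign flips (either player may flip their output), so the condition $|B(C')|\leq 1$ for all $C'\in\L_0$ collapses to $\max_{C'\in\L_0} B(C') \leq 1$, i.e., $\cls^*(B)\leq 1$. This yields
\[
\cls(C) \;=\; \max\{\, B(C) : \cls^*(B) \leq 1\,\}.
\]
Now apply the bijection: write $B = N\cdot(G{\circ}\mu)$ with $N=\sum_{x,y}|B_{xy}|$, $G(x,y)=\sgn(B_{xy})$, and $\mu(x,y)=|B_{xy}|/N$. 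Under this substitution, $B(C) = N\,\epsilon_\mu(G\|C)$ and $\cls^*(B) = N\,\epsilon_\mu^{\pub}(G)$, so
\[
\cls(C) \;=\; \max_{N>0,\,G,\,\mu}\bigl\{\, N\,\epsilon_\mu(G\|C) : N\,\epsilon_\mu^{\pub}(G)\leq 1\,\bigr\}.
\]
For each fixed $(G,\mu)$, the objective is linear in $N$ while the constraint is tight at $N = 1/\epsilon_\mu^{\pub}(G)$ (here $\epsilon_\mu^{\pub}(G)>0$ since $\cls^*$ is positive on nonzero functionals), giving the claimed formula $\cls(C) = \max_{G,\mu} \epsilon_\mu(G\|C)/\epsilon_\mu^{\pub}(G)$.

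For part~2, simply specialize part~1 to the XOR game with $G=C$. Since $C$ is a sign matrix, $C(x,y)^2=1$, hence $\epsilon_\mu(C\|C) = \sum_{x,y}\mu(x,y)\,C(x,y)^2 = 1$ for every input distribution $\mu$. Therefore
\[
\cls(C) \;\geq\; \frac{\epsilon_\mu(C\|C)}{\epsilon_\mu^{\pub}(C)} \;=\; \frac{1}{\epsilon_\mu^{\pub}(C)}
\]
for all $\mu$, and taking the supremum over $\mu$ gives $\cls(C) \geq 1/\min_\mu \epsilon_\mu^{\pub}(C) = 1/\epsilon^{\pub}(C)$.

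The only mildly delicate step is the reduction from a general dual functional on $\aff(\C)$ to one supported on the correlation part: this uses that $(C,0,0)$ pairs trivially with the marginal components, and that $\L_0$ is sign-symmetric. Once these observations are made, the rest is pure bookkeeping through the game/functional bijection.
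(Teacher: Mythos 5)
Your proposal is correct and follows essentially the same route as the paper: specialize the dual characterization from Theorem~\ref{thm:lp-bell} to $\mathbf{p}=(C,0,0)$, rewrite as $\cls(C)=\max_B B(C)/\cls^*(B)$, translate through the functional--game bijection, and set $G=C$ for part~2. The only thing worth flagging is the justification for restricting to correlation-only functionals: ``only tightening the constraint'' is not quite the right reason, since adding a marginal component to $B$ can loosen as well as tighten the constraint $|B(\mathbf{p}')|\le 1$ depending on signs. The reduction is without loss for a different reason: if $B$ is feasible, the functional $B'$ obtained by zeroing the marginal components is also feasible, because for any $(C',M_A',M_B')\in\L$ the sign-flipped $(C',-M_A',-M_B')$ is also local, hence $(C',0,0)\in\L$ by convexity, and $B'(\mathbf{p}')=B(C',0,0)$ is therefore bounded by~$1$; since $B'$ has the same objective value as $B$, the restriction is harmless. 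This is the same sign-flip symmetry you invoke in the next sentence for $\L_0$, just applied one step earlier.
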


The second part follows by letting $G=C$.
Even though playing correlations $C$ for a game $G=C$ allows us to win with probability one, there are cases where some other game $G\neq C$ yields a larger ratio. In these cases, we have $\cls(C)>\frac{1}{\epsilon^{\pub}(C)}$ so that $\cls$ gives a stronger lower bound for communication complexity than the game value (which has been shown to be equivalent to the discrepancy method~\cite{lss08}).
Similar properties hold for the quantum values, in particular, we have $\qls(C)\geq\frac{1}{\epsilon^{\ent}(C)}$.

We can characterize when the inequality is tight.
Let $\epsilon^{\pub}_=(C)=
	\max_{S\in\L_0} \{ \beta : \forall x,y,  C(x,y)S(x,y){=}\beta\} $,
that is, we only consider strategies that win the game with equal
bias with respect to all distributions.
For the sake of comparison, the game bias may also be expressed as~\cite{vonneumann28}:
$$\epsilon^{\pub}(C) =\max_{S\in \L_0} \{ \beta : \forall x,y,  C(x,y)S(x,y){\geq}\beta\} = \max_{S\in \L_0} \min_{x,y}  C(x,y)S(x,y).$$
\begin{lemma}
 ${\cls(C)} = \frac{1}{\epsilon^{\pub}_=(C)} $.

\end{lemma}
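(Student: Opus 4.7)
The plan is to use the special structure of sign matrices to collapse the constraint $C(x,y)S(x,y){=}\beta$ into a very rigid form, after which $\cls(C)$ appears essentially by definition. Because $C$ is a $\pm 1$ sign matrix we have $C(x,y)^{2}=1$, so multiplying both sides of $C(x,y)S(x,y){=}\beta$ by $C(x,y)$ shows that the uniform-bias condition is equivalent to $S(x,y)=\beta\, C(x,y)$ for every $x,y$. In other words, the only strategies $S$ that achieve a constant bias $\beta$ against $C$ are the scalar multiples $S = \beta C$.

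Hence I would rewrite
\[
\epsilon^{\pub}_=(C) \;=\; \max\{\,\beta : \beta C \in \L_0\,\}.
\]
The maximum is over real $\beta$, but since $S=0$ (an even coin flip) is in $\L_0$, the optimum satisfies $\beta\geq 0$, and, assuming $C\notin\L_0$ is non-trivial (the trivial case $\cls(C)\leq 1$ is immediate), strictly $\beta>0$. Now I would substitute $\Lambda = 1/\beta$; maximising $\beta>0$ subject to $\beta C \in \L_0$ is the same as minimising $\Lambda>0$ subject to $C/\Lambda \in \L_0$, which is by definition $\cls(C)$. Therefore
\[
\epsilon^{\pub}_=(C) \;=\; \frac{1}{\min\{\Lambda>0 : C/\Lambda \in \L_0\}} \;=\; \frac{1}{\cls(C)},
\]
which is exactly the claimed identity.

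There is essentially no obstacle; the only thing worth being careful about is the edge case $\beta=0$ (which just says $\cls(C)$ might formally be $+\infty$, but the Minkowski-functional definition already handles that) and checking that the supremum over $\beta$ is actually attained, which holds because $\L_0$ is a closed polytope. Compared with the earlier inequality $\cls(C)\geq 1/\epsilon^{\pub}(C)$, the slack there came from the fact that $\epsilon^{\pub}(C)$ allowed arbitrary strategies $S$ with only the inequality $C(x,y)S(x,y)\geq\beta$, and such an $S$ need not be proportional to $C$; restricting to equality kills exactly that slack and yields the tight relation.
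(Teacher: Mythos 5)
Your proof is correct, and it is the natural argument: since $C$ is a $\pm 1$ sign matrix, the equality constraint $C(x,y)S(x,y)=\beta$ forces $S=\beta C$, so $\epsilon^{\pub}_=(C)=\max\{\beta : \beta C\in\L_0\}$, and taking reciprocals recovers the Minkowski-functional definition of $\cls$. The paper states this lemma without proof, evidently viewing it as immediate; your write-up supplies exactly the short calculation that was omitted, and your remarks on attainment (closedness of $\L_0$) and on where the slack in $\cls(C)\geq 1/\epsilon^{\pub}(C)$ comes from are both accurate.
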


We can also relate the game value to $\cls^\alpha(C)$, as it was shown in~\cite{lss08} that for $\alpha\to\infty$, $\cls^\infty(C)$ is exactly the inverse of the game bias $\frac{1}{\epsilon^{\pub}(C)}$. We show that this holds as soon as $\alpha=\frac{1}{1-2\epsilon}$ is large enough for $C$ to be local up to an error $\epsilon$, completing the picture given in Lemma~\ref{lem:epsilon-alpha}.
\begin{lemma}\label{lem:gamma2-infinity}
Let $0\leq\epsilon< 1/2$ and $\alpha=\frac{1}{1-2\epsilon}$.
For any sign matrix $C:\eks\times \wye \rightarrow \{-1,1\}$,
\begin{enumerate}
 \item $\cquasi^\epsilon(C)=1
\Longleftrightarrow \epsilon \geq 1 - \omega^{\pub}(C)
\Longleftrightarrow\alpha\geq\frac{1}{\epsilon^{\pub}(C)}
\Longleftrightarrow\cls^\alpha(C)=\cls^\infty(C)=\frac{1}{\epsilon^{\pub}(C)}$,
\item $\qquasi^\epsilon(C)=1
\Longleftrightarrow \epsilon \geq 1 - \omega^\ent(C)
\Longleftrightarrow\alpha\geq\frac{1}{\epsilon^\ent(C)}
\Longleftrightarrow\qls^\alpha(C)=\qls^\infty(C)=\frac{1}{\epsilon^\ent(C)}$.
\end{enumerate}
\end{lemma}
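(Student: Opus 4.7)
The plan is to verify the four-way equivalence in Part 1 by unwinding the definitions of $\cquasi^\epsilon$ and $\cls^\alpha$ in the sign-matrix setting, then combine them with the already-known identity $\cls^\infty(C) = 1/\epsilon^{\pub}(C)$ from~\cite{lss08}. Part 2 is handled identically after replacing $\L_0,\cls,\epsilon^{\pub}$ by $\Q_0,\qls,\epsilon^\ent$, so the only real work is Part 1.

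First I would translate $\cquasi^\epsilon(C)=1$ into a geometric statement about correlation matrices. By Proposition~\ref{prop:representation}, a binary non-signaling distribution with uniform marginals has $p(a,b|x,y)=\frac14(1+ab\,C(x,y))$, so the total variation distance between two such distributions equals $\max_{x,y}|C(x,y)-C'(x,y)|/2$. Combined with Lemma~\ref{lem:charact-locality-quanticity}, the condition $\cquasi^\epsilon(C)=1$ becomes: there exists $C'\in\L_0$ with $|C(x,y)-C'(x,y)|\leq 2\epsilon$ for every $x,y$. Since $C$ is a sign matrix and $|C'(x,y)|\leq 1$, this is in turn equivalent to the existence of $S\in\L_0$ with $C(x,y)\,S(x,y)\geq 1-2\epsilon$ for all $x,y$, that is $\epsilon^{\pub}(C)\geq 1-2\epsilon$, using the min-max characterization $\epsilon^{\pub}(C)=\max_{S\in\L_0}\min_{x,y}C(x,y)S(x,y)$ recalled just before the statement. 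Rewriting $1-2\epsilon\leq\epsilon^{\pub}(C)$ as $\epsilon\geq(1-\epsilon^{\pub}(C))/2=1-\omega^{\pub}(C)$ gives the first equivalence, and the substitution $\alpha=1/(1-2\epsilon)$ gives $\alpha\geq 1/\epsilon^{\pub}(C)$, which is the second equivalence.

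For the last equivalence I would show $\cls^\alpha(C)=\cls^\infty(C)$ whenever $\alpha\geq 1/\epsilon^{\pub}(C)$, since the quoted result of~\cite{lss08} already identifies $\cls^\infty(C)=1/\epsilon^{\pub}(C)$. Monotonicity of $\cls^\alpha$ in $\alpha$ gives $\cls^\alpha(C)\geq\cls^\infty(C)$. For the reverse direction, let $S^\ast\in\L_0$ attain $\epsilon^{\pub}(C)$, and set $C'=S^\ast/\epsilon^{\pub}(C)$. Then $1\leq C(x,y)C'(x,y)\leq 1/\epsilon^{\pub}(C)\leq\alpha$, so $C'$ is feasible for $\cls^\alpha(C)$, and $\cls(C')=\cls(S^\ast)/\epsilon^{\pub}(C)\leq 1/\epsilon^{\pub}(C)$ since $S^\ast\in\L_0$ forces $\cls(S^\ast)\leq 1$. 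Hence $\cls^\alpha(C)\leq 1/\epsilon^{\pub}(C)=\cls^\infty(C)$, closing the loop.

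For Part 2, the same argument transfers verbatim: the TV-distance reduction is unchanged, $\cquasi^\epsilon(C)=1$ now corresponds to existence of $C'\in\Q_0$ with $C(x,y)C'(x,y)\geq 1-2\epsilon$ (equivalently $\epsilon^\ent(C)\geq 1-2\epsilon$), and the construction of the witness $C'=S^\ast/\epsilon^\ent(C)$ uses $S^\ast\in\Q_0$; together with the $\qls^\infty(C)=1/\epsilon^\ent(C)$ half of the result from~\cite{lss08}, this yields the quantum chain. I do not expect any serious obstacle; the only subtle point is checking that the variational formula for $\epsilon^{\pub}(C)$ does yield an actual optimizer $S^\ast\in\L_0$ (so that the witness construction makes sense) — this follows from compactness of $\L_0$ and continuity of $\min_{x,y} C(x,y)S(x,y)$ in $S$, so the $\max$ is attained.
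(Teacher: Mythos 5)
Your proof is correct, and it takes a genuinely different route from the paper's for the final equivalence. The first part — showing $\cquasi^\epsilon(C)=1 \Leftrightarrow \epsilon\geq 1-\omega^{\pub}(C) \Leftrightarrow \alpha\geq 1/\epsilon^{\pub}(C)$ — is essentially identical in both proofs: both unwind the TV-distance condition using the sign-matrix identity $|C(x,y)-C'(x,y)| = 1 - C(x,y)C'(x,y)$ and apply von Neumann's minmax formula. The paper phrases this by rewriting $\epsilon^{\pub}(C)=\max_{S\in\L_0}\min_{x,y}(1-|C(x,y)-S(x,y)|)$ rather than directly in terms of $C(x,y)S(x,y)\geq 1-2\epsilon$, but this is cosmetic.

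Where you diverge is the last step. The paper invokes Lemma~\ref{lem:epsilon-alpha} for $\epsilon<\frac{1-\epsilon^{\pub}(C)}{2}$ to get $\cls^\alpha(C)=\cquasi^\epsilon(C)/(1-2\epsilon)$, then passes to the limit $\epsilon\to\frac{1-\epsilon^{\pub}(C)}{2}$ and appeals to continuity of $\cls^\alpha(C)$ in $\alpha$ and of $\cquasi^\epsilon(C)$ in $\epsilon$; this needs a (not explicitly justified) argument that these optimization values are indeed continuous. You instead produce an explicit optimal witness: take $S^\ast\in\L_0$ attaining $\epsilon^{\pub}(C)$ (attainment is fine by compactness of $\L_0$, as you note), set $C'=S^\ast/\epsilon^{\pub}(C)$, check feasibility for $\cls^\alpha$ when $\alpha\geq 1/\epsilon^{\pub}(C)$, and bound $\cls(C')\leq 1/\epsilon^{\pub}(C)$ by positive homogeneity of the Minkowski functional $\cls$. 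Combined with $\cls^\alpha(C)\geq\cls^\infty(C)$ from monotonicity and the quoted identity $\cls^\infty(C)=1/\epsilon^{\pub}(C)$ from~\cite{lss08}, this closes the argument without any limiting process. This is a more elementary and self-contained route; it also makes the optimal witness for $\cls^\alpha$ explicit, which the paper's continuity argument does not. One caveat that applies equally to both proofs: the last $\Longleftrightarrow$ in the lemma is really being established as a one-way implication (conditions one through three imply condition four); neither proof gives the converse $\cls^\alpha(C)=\cls^\infty(C)\Rightarrow\alpha\geq 1/\epsilon^{\pub}(C)$, and in fact that direction can fail (e.g.\ when $\cls(C)=1/\epsilon^{\pub}_=(C)$ already equals $1/\epsilon^{\pub}(C)$, which happens exactly when the optimal strategy wins all inputs with equal bias). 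Since the paper glosses over the same point, your proof is on equal footing there.
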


\begin{proof}[Proof]
By von Neumann's minmax principle~\cite{vonneumann28},
\begin{eqnarray*}
\epsilon^{\pub}(C)& =&  \max_{S\in \L_0} \min_{x,y}  C(x,y)S(x,y) \\
& =&  \max_{S\in \L_0} \min_{x,y}  1- |C(x,y) - S(x,y)| \\
\end{eqnarray*}
where we used the fact that $C$ is a sign matrix. This implies that $\cquasi^\epsilon(C)=1\Leftrightarrow\epsilon\geq \frac{1-\epsilon^{\pub}(C)}{2}\Leftrightarrow\alpha\geq\frac{1}{\epsilon^{\pub}(C)}$.

By Lemma~\ref{lem:epsilon-alpha}, this in turn implies that $\cls^\alpha(C)=\frac{\cquasi^\epsilon(C)}{1-2\epsilon}$ for all $\epsilon<\frac{1-\epsilon^{\pub}(C)}{2}$. By continuity, taking the limit $\epsilon\to\frac{1-\epsilon^{\pub}(C)}{2}$ yields $\cls^\alpha(C)=\frac{1}{\epsilon^{\pub}(C)}$ for $\alpha=\frac{1}{\epsilon^{\pub}(C)}$. From~\cite{lss08}, $\cls^\infty(C)=\frac{1}{\epsilon^{\pub}(C)}$, and the lemma follows by the monotonicity of $\cls^\alpha(C)$ as a function of $\alpha$.
\end{proof}

\section{Bounding the violation of Bell inequalities}
\label{sec:gamma-vs-nu}
In this section, we give bounds on the maximal violations of Bell inequalities. By Theorem~\ref{thm:lp-bell}, this is equivalent to bounding the ratio between $\qquasi$ and  $\cquasi$.
In the case of distributions over binary outcomes with uniform marginals (correlations),
the theorems of Tsirelson (Theorem~\ref{thm:tsirelson})
and Grothendieck (Proposition~\ref{prop:grothendieck}) imply that
$\qls$ and $\cls$ differ by at most a constant.
This is bad news for anyone trying to find a Boolean function
with high randomized communication complexity and considerably smaller
quantum communication complexity, since it means that any randomized lower
bound obtained by using $\nu$ will yield a similar quantum lower bound.
Although neither of these theorems are known to hold beyond the Boolean setting with uniform marginals, we show in this section that this surprisingly also extends to
non-signaling distributions.
This is also bad news for anyone looking for
large Bell inequality violations by quantum distributions, since in this
case, $\qquasi(\mathbf{p})=1$, and the maximum Bell inequality we can hope
for will be bounded above by the expressions below.

\begin{thm}\label{thm:nu-gamma2}
For any distribution $\mathbf{p}\in\C$, with inputs in
$\eks \times  \wye$ and outcomes in $\A \times \B$ with $A=|\A|, B=|\B|$,
\begin{enumerate}
 \item $\cquasi(\mathbf{p})\leq (2K_G+1)\qquasi(\mathbf{p})$ when $A=B=2$,
 \item $\cquasi(\mathbf{p})\leq[2AB(K_G+1)-1]\qquasi(\mathbf{p})$ for any $A,B$.
\end{enumerate}
\end{thm}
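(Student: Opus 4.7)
The plan combines a subadditivity property of $\cquasi$ under affine combinations with a ``noise-robustness'' bound for quantum distributions.

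For subadditivity: if $\mathbf{p}=\sum_i q_i\mathbf{p}_i$ is an affine combination with $\sum_i q_i=1$, then composing the affine local decompositions of the $\mathbf{p}_i$ gives $\cquasi(\mathbf{p})\leq\sum_i|q_i|\cquasi(\mathbf{p}_i)$. Applied to a near-optimal affine decomposition of $\mathbf{p}$ into $\mathbf{p}_i\in\Q$, this reduces the theorem to bounding $\sup_{\mathbf{p}'\in\Q}\cquasi(\mathbf{p}')$. The main tool for doing so is noise-robustness: if $\mathbf{p}_L\in\L$ and $\lambda\mathbf{p}+(1-\lambda)\mathbf{p}_L\in\L$ for some $\lambda\in(0,1]$, then $\mathbf{p}=(1/\lambda)[\lambda\mathbf{p}+(1-\lambda)\mathbf{p}_L]-((1-\lambda)/\lambda)\mathbf{p}_L$ yields $\cquasi(\mathbf{p})\leq 2/\lambda-1$.

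For Part~1 ($A=B=2$), I would take $\mathbf{p}_L=(-C/K_G,0,0)$, which is local because $(C,0,0)\in\Q_0$ (by Proposition~\ref{prop:uniform-marginals}) gives $C/K_G\in\L_0$ by Tsirelson's theorem and Grothendieck's inequality (Proposition~\ref{prop:grothendieck}), and $\L_0$ is closed under sign-flipping. With $\lambda=1/(K_G+1)$, the mixture $\lambda\mathbf{p}+(1-\lambda)\mathbf{p}_L$ has correlations $\lambda C-(1-\lambda)C/K_G=0$ and marginals $M_A/(K_G+1),M_B/(K_G+1)$; since $|M_A/(K_G+1)|+|M_B/(K_G+1)|\leq 2/(K_G+1)\leq 1$ (as $K_G\geq 1$), the resulting distribution $(0,M_A/(K_G+1),M_B/(K_G+1))$ is valid and local, exhibited by the explicit convex combination of the four deterministic strategies $(u,v)\in\{\pm 1\}^2$ with weights $(1+u M_A/(K_G+1)+v M_B/(K_G+1))/4$. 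Noise-robustness yields $\cquasi(\mathbf{p})\leq 2K_G+1$.

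For Part~2 (general $A,B$), I would extend Part~1 via coarse-graining over each outcome pair. For $(a^*,b^*)\in\A\times\B$, the binarization $\mathbf{q}_{a^*b^*}$ (Alice outputs $+1$ iff her outcome is $a^*$, Bob analogously) lies in $\Q$ with binary outcomes, so Part~1 furnishes a local binary $\mathbf{q}_L^{a^*b^*}$ with $\mathbf{q}_0^{a^*b^*}:=(1/(K_G+1))\mathbf{q}_{a^*b^*}+(K_G/(K_G+1))\mathbf{q}_L^{a^*b^*}\in\L$. Lifting each binary distribution to $\A\times\B$ by replacing ``$-1$'' outputs with uniform elements of the corresponding complement preserves locality and is linear. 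Averaging the $AB$ local lifts $\tilde{\mathbf{q}}_0^{a^*b^*}$ then gives a local $\bar{\tilde{\mathbf{q}}}_0=(1/(K_G+1))\bar{\tilde{\mathbf{q}}}+(K_G/(K_G+1))\bar{\tilde{\mathbf{q}}}_L$, where $\bar{\tilde{\mathbf{q}}}=\frac{1}{AB}\sum\tilde{\mathbf{q}}_{a^*b^*}$ and $\bar{\tilde{\mathbf{q}}}_L$ is local. A direct combinatorial computation shows that $\bar{\tilde{\mathbf{q}}}$ decomposes as a convex combination of $\mathbf{p}$ (with coefficient $1/[(A-1)(B-1)]$) and a local distribution built from the marginal-product distributions $p_A/B,p_B/A$ and the uniform $1/(AB)$. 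Combined with the local $\bar{\tilde{\mathbf{q}}}_L$, this exhibits $\bar{\tilde{\mathbf{q}}}_0$ as a mixture of $\mathbf{p}$ with coefficient $\lambda=1/[(K_G+1)(A-1)(B-1)]$ and a local distribution. Noise-robustness yields $\cquasi(\mathbf{p})\leq 2(K_G+1)(A-1)(B-1)-1$, which is at most $2AB(K_G+1)-1$ since $(A-1)(B-1)\leq AB$.

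The hardest step is the combinatorial computation in Part~2 verifying that the averaged lift $\bar{\tilde{\mathbf{q}}}$ decomposes as $\frac{1}{(A-1)(B-1)}\mathbf{p}+(1-\frac{1}{(A-1)(B-1)})\mathbf{r}$ with $\mathbf{r}$ a convex combination of local product distributions. This requires expanding the lift formula over all $AB$ outcome pairs and carefully grouping the terms by their dependence on $p(a,b|x,y)$, $p_A(a|x)$, $p_B(b|y)$, and the constant, with non-negative coefficients appearing correctly thanks to $A,B\geq 2$.
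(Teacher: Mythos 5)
Your overall strategy (reduce to bounding $\cquasi$ on $\Q$ via the subadditivity lemma, then use a ``noise-robustness'' decomposition) is a valid alternative to the paper's route, which instead writes an explicit four-term affine model for binary quantum distributions and an explicit $(AB+2)$-term model in the general case via an auxiliary $\varnothing$ outcome. But there is a concrete gap in your Part~1, and Part~2 leaves its crucial step undone.

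\textbf{Gap in Part~1.} You assert that the mixture $(0,\,M_A/(K_G+1),\,M_B/(K_G+1))$ is local, ``exhibited by the explicit convex combination of the four deterministic strategies $(u,v)\in\{\pm1\}^2$ with weights $(1+uM_A/(K_G+1)+vM_B/(K_G+1))/4$.'' This is not a convex combination of local deterministic distributions: the purported weights depend on $x$ and $y$ through $M_A(x)$ and $M_B(y)$, whereas a local model requires a single probability measure over strategy pairs that is \emph{independent} of the inputs. Note also that $(0,u,v)$ is not even a pointwise product distribution when $u,v\not\equiv 0$, since $p(a|x)p(b|y)=\tfrac14(1+au+bv+abuv)\ne\tfrac14(1+au+bv)$. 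The statement $(0,u,v)\in\L$ (for $|u(x)|+|v(y)|\le 1$) does happen to be true, but it requires a genuine construction --- e.g.\ Alice and Bob reading $\{\pm1\}$ off arcs of a shared random point on the unit circle, with Alice's arc centred at $0$ of length $(1+u(x))/2$ and Bob's centred at $1/4$ of length $(1+v(y))/2$, which one checks gives $E[ab|x,y]=0$ exactly. The paper dodges this entirely by instead writing $(C,M_A,M_B)=K_G(C/K_G,0,0)+(M_AM_B,M_A,M_B)-(M_AM_B,0,0)-(K_G-1)(0,0,0)$: the last three ingredients are a product distribution, a symmetrized product, and the uniform distribution, all manifestly local with no need for the $(0,u,v)$ lemma, and the absolute coefficients sum to $2K_G+1$. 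You should either adopt that decomposition or actually prove $(0,u,v)\in\L$.

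\textbf{Part~2.} Your binarize-and-lift strategy is genuinely different from the paper's, which introduces a dummy outcome $\varnothing$, restricts each $\mathbf{p}_{\alpha\beta}$ to $\{\alpha,\varnothing\}\times\{\beta,\varnothing\}$, and writes $\mathbf{p}'=\sum_{\alpha\beta}\mathbf{p}'_{\alpha\beta}-(B-1)\mathbf{p}_A-(A-1)\mathbf{p}_B-(AB-A-B+1)\mathbf{p}_\varnothing$. Your approach works and in fact gives the marginally sharper bound $2(K_G+1)(A-1)(B-1)-1\le 2AB(K_G+1)-1$. However, you flag ``a direct combinatorial computation'' as the hardest step and do not carry it out. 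Having checked it, the averaged lift does decompose as $\bar{\tilde{\mathbf q}}=\tfrac{1}{(A-1)(B-1)}\mathbf{p}+\bigl(1-\tfrac{1}{(A-1)(B-1)}\bigr)\mathbf{r}$ with $\mathbf{r}$ a convex mixture of $p_A(a|x)\tfrac1B$, $\tfrac1A p_B(b|y)$ and $\tfrac1{AB}$ with weights $\tfrac{B-2}{AB-A-B}$, $\tfrac{A-2}{AB-A-B}$, $\tfrac{(A-2)(B-2)}{AB-A-B}$, all non-negative for $A,B\ge 2$. This should appear explicitly, since it is indeed where the content lies; as written the proof is incomplete. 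Note also that Part~2 inherits the gap of Part~1, since it invokes Part~1 for each binarized distribution.
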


Therefore, one cannot hope to prove
separations between classical and quantum communication
using this method, except in the case where the number of
outcomes is large.  For binary outcomes at least, this says
that arguments based on analyzing the distance to the quantum set
only, without taking into account the particular structure of
the distribution, will not suffice to prove large separations;
and other techniques, such as information theoretic arguments,
may be necessary.

For example, Brassard \textit{et al.}~\cite{bct99} give a (promise) distribution
based on the Deutsch-Jozsa problem,
which can be obtained exactly with entanglement and no communication,
but which requires linear communication to simulate exactly.
The lower bound is proven using a corruption bound~\cite{bcw98},
which is closely related to the information theoretic subdistribution
bound~\cite{jkn08}.
For this problem, $\eks=\wye=\{0,1\}^n$ and $\A=\B=[n]$,
therefore our method can only prove a lower bound logarithmic in $n$.
This is the first example of a problem for which the
corruption bound gives an exponentially better lower bound than
the Linial and Shraibman family of methods.

On the positive side, this is very interesting
for quantum information, since (by Theorem~\ref{thm:lp-bell}),
it tells us that the set of quantum distributions
cannot be much larger than the local polytope, for any number
of inputs and outcomes.  For binary correlations, this
follows from the theorems of Tsirelson (Theorem~\ref{thm:tsirelson})
and Grothendieck (Proposition~\ref{prop:grothendieck}),
but no extensions are known for these results in
the more general setting.

The proof of Theorem~\ref{thm:nu-gamma2} proceeds by showing that an arbitrary quantum distribution may be written as an affine combination of quantum distributions over binary outcomes with uniform marginals. We can then conclude using Grothendieck's inequality. For the details of the proof, we will need
 two rather straightforward lemmas.
The first is a subadditivity-type property for $\cquasi$, and the second
allows us to extend the support of a distribution without affecting the
value of $\cquasi$.
\begin{lemma}\label{lem:composition}
 If $\mathbf{p}=\sum_{i\in[I]}q_i\mathbf{p}_i$, where $\mathbf{p}_i\in\C$ and $q_i\in\Real$ for all $i\in[I]$, then $\cquasi(\mathbf{p})\leq \sum_{i\in[I]}|q_i|\cquasi(\mathbf{p}_i)$.
\end{lemma}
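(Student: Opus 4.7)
The plan is to expand each $\mathbf{p}_i$ into its own optimal (or near-optimal) affine decomposition over local distributions, substitute these into the given decomposition of $\mathbf{p}$, and check that the resulting combined decomposition is a valid affine combination over $\L$ whose absolute-value sum is bounded as claimed.

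Concretely, I would first fix $\eta > 0$ and, for each $i\in[I]$, invoke the definition of $\cquasi(\mathbf{p}_i)$ to obtain local distributions $\mathbf{p}_{i,j}\in\L$ and real coefficients $q_{i,j}$ with $\mathbf{p}_i = \sum_j q_{i,j}\mathbf{p}_{i,j}$ and $\sum_j |q_{i,j}| \leq \cquasi(\mathbf{p}_i) + \eta$ (the $\eta$ absorbs the fact that the infimum in the definition of $\cquasi$ might not be attained; if one knows attainment by compactness, one can simply take $\eta=0$). Substituting gives
\[
\mathbf{p} \;=\; \sum_{i\in[I]} q_i \sum_j q_{i,j}\mathbf{p}_{i,j} \;=\; \sum_{i,j} (q_i q_{i,j})\,\mathbf{p}_{i,j},
\]
which is an affine combination of local distributions.

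I would then bound the cost of this new decomposition directly:
\[
\cquasi(\mathbf{p}) \;\leq\; \sum_{i,j} |q_i q_{i,j}| \;=\; \sum_i |q_i| \sum_j |q_{i,j}| \;\leq\; \sum_i |q_i|\bigl(\cquasi(\mathbf{p}_i)+\eta\bigr),
\]
and let $\eta \to 0$ to conclude.

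I do not expect any real obstacle here; the only subtle point worth mentioning explicitly is that the new decomposition is a \emph{valid} affine combination in the sense that its coefficients sum to one. This follows because $\mathbf{p}$ and each $\mathbf{p}_i$ are probability distributions: summing $\mathbf{p}_i = \sum_j q_{i,j}\mathbf{p}_{i,j}$ over outcomes $(a,b)$ at any fixed $(x,y)$ yields $\sum_j q_{i,j}=1$, and likewise $\sum_i q_i = 1$, so $\sum_{i,j} q_i q_{i,j} = 1$ automatically. No additional technical machinery is needed.
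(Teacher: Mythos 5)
Your proof is correct and takes essentially the same approach as the paper: expand each $\mathbf{p}_i$ into its (near-)optimal affine decomposition over $\L$, substitute into the outer decomposition of $\mathbf{p}$, and bound the total absolute-value sum by the triangle inequality. The paper states the inner decomposition in the form $\mathbf{p}_i = q_i^+\mathbf{p}_i^+ - q_i^-\mathbf{p}_i^-$ with $q_i^+ + q_i^- = \cquasi(\mathbf{p}_i)$ and assumes attainment outright, whereas you allow an $\eta$-slack and let $\eta \to 0$; this is a cosmetic difference (attainment does hold here since $\L$ is a polytope, so the minimization is over a compact set), and your extra care is harmless.
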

\begin{proof}
 By definition, for each $\mathbf{p}_i$, there exists $\mathbf{p}_i^+,\mathbf{p}_i^-\in\L$ and $q_i^+,q_i^-\geq 0$ such that $\mathbf{p}_i=q_i^+\mathbf{p}_i^+-q_i^-\mathbf{p}_i^-$, and $q_i^++q_i^-=\cquasi(\mathbf{p}_i)$. Therefore, $\mathbf{p}=\sum_{i\in[I]}q_i(q_i^+\mathbf{p}_i^+-q_i^-\mathbf{p}_i^-)$ and $\sum_{i\in[I]}(|q_iq_i^+|+|q_iq_i^-|)=\sum_i|q_i|(q_i^++q_i^-)=\sum_i|q_i|\cquasi(\mathbf{p}_i)$.
\end{proof}

\begin{lemma}\label{lem:extension}
 Let $\mathbf{p},\mathbf{p}'\in\C$ be non-signaling distributions with inputs in
$\eks \times  \wye$ for both distributions, outcomes in $\A \times \B$ for $\mathbf{p}$, and outcomes in $\A' \times \B'$ for $\mathbf{p}'$, such that $\A\subseteq\A'$ and $\B\subseteq\B'$. If, for any $(a,b)\in\A\times\B$ $p'(a,b|x,y)=p(a,b|x,y)$, then $\cquasi(\mathbf{p}')=\cquasi(\mathbf{p})$.
\end{lemma}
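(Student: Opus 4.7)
The plan is to establish the equality by proving both inequalities through explicit affine constructions, exploiting the fact that the extra outcomes in $\A' \setminus \A$ and $\B' \setminus \B$ carry no probability mass in $\mathbf{p}'$. Note first that the hypothesis, combined with normalization, forces $p'(a,b|x,y)=0$ whenever $(a,b)\notin\A\times\B$, since the probabilities on $\A\times\B$ already sum to 1.

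For the easy direction $\cquasi(\mathbf{p}')\leq\cquasi(\mathbf{p})$, take an optimal decomposition $\mathbf{p}=\sum_i q_i\mathbf{p}_i$ with $\mathbf{p}_i\in\L$ and $\sum_i|q_i|=\cquasi(\mathbf{p})$. Since $\A\subseteq\A'$ and $\B\subseteq\B'$, any local deterministic strategy $\lambda_A:\eks\to\A$ is also a strategy into $\A'$, so each $\mathbf{p}_i$, padded with zeros, is automatically a local distribution over $\A'\times\B'$. The affine combination $\sum_i q_i\mathbf{p}_i$ then agrees with $\mathbf{p}'$ on $\A\times\B$ by assumption and vanishes outside, matching $\mathbf{p}'$.

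For the reverse direction $\cquasi(\mathbf{p})\leq\cquasi(\mathbf{p}')$, I would build a linear ``folding'' map $\phi$ from distributions over $\A'\times\B'$ to distributions over $\A\times\B$. Fix any $a^*\in\A$ and $b^*\in\B$ and define retractions $\pi_A:\A'\to\A$ by $\pi_A(a')=a'$ if $a'\in\A$ and $\pi_A(a')=a^*$ otherwise, and similarly $\pi_B$. Then set
\[
\phi(\mathbf{v})(a,b|x,y)=\sum_{\pi_A(a')=a,\,\pi_B(b')=b} v(a',b'|x,y).
\]
The key properties to check are: (i) $\phi$ preserves non-signaling (a direct marginalization calculation), (ii) $\phi$ sends any local deterministic distribution with strategies $(\lambda'_A,\lambda'_B)$ to the local deterministic distribution with strategies $(\pi_A\circ\lambda'_A,\pi_B\circ\lambda'_B)$, hence by linearity $\phi(\L')\subseteq\L$, and (iii) $\phi(\mathbf{p}')=\mathbf{p}$, which uses that $\mathbf{p}'$ is supported on $\A\times\B$ so the fibers $\pi_A^{-1}(a)\times\pi_B^{-1}(b)$ contribute only through the single representative $(a,b)$ itself.

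Given an optimal decomposition $\mathbf{p}'=\sum_i q_i\mathbf{p}'_i$, applying $\phi$ gives $\mathbf{p}=\sum_i q_i\phi(\mathbf{p}'_i)$, an affine combination of local distributions over $\A\times\B$ with identical coefficient sum $\sum_i|q_i|=\cquasi(\mathbf{p}')$. The only mildly delicate step is verifying $\phi(\mathbf{p}')=\mathbf{p}$, but this is immediate from the vanishing of $\mathbf{p}'$ off $\A\times\B$; everything else is routine linearity and the observation that affine combinations commute with the linear map $\phi$.
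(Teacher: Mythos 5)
Your proof is correct and follows essentially the same approach as the paper: the easy direction pads local distributions with zeros, and the hard direction applies a local post-processing map to fold the extra outcomes back into $\A\times\B$, which preserves locality and recovers $\mathbf{p}$ because $\mathbf{p}'$ vanishes off $\A\times\B$. The only cosmetic difference is that you collapse each extra outcome to a fixed representative $a^*$ (resp.\ $b^*$), whereas the paper replaces it by a uniformly random element of $\A$ (resp.\ $\B$); both are valid local deterministic-preserving retractions and yield the same bound.
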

\begin{proof}
Let $\ee=(\A'\times\B')\setminus(\A\times\B)$.
First, note that since $p'(a,b|x,y)=p(a,b|x,y)$ for any $(a,b)\in\A\times\B$, we have, by normalization of $\mathbf{p}$, $p'(a,b|x,y)=0$ for any $(a,b)\in\ee$.

[$\cquasi(\mathbf{p}')\leq\cquasi(\mathbf{p})$] Let $\mathbf{p}=q_+\mathbf{p}^+-q_-\mathbf{p}^-$ be an affine model for $\mathbf{p}$.
Obviously, this implies an affine model for $\mathbf{p}'$ by extending the local distributions $\mathbf{p}^+,\mathbf{p} ^-$ from $\A \times \B$ to $\A' \times \B'$, by setting $p^+(a,b|x,y)=p^-(a,b|x,y)=0$ for any $(a,b)\in\ee$, so $\cquasi(\mathbf{p}')\leq\cquasi(\mathbf{p})$.

[$\cquasi(\mathbf{p}')\geq\cquasi(\mathbf{p})$] Let $\mathbf{p}'=q_+\mathbf{p}'^+-q_-\mathbf{p}'^-$ be an affine model for $\mathbf{p}'$. We may not immediately derive an affine model for $\mathbf{p}$ since it could be the case that $p'^+(a,b|x,y)$ or $p'^-(a,b|x,y)$ is non zero for some $(a,b)\in\ee$. However, we have $q_+p'^+(a,b|x,y)-q_-p'^-(a,b|x,y)=p'(a,b|x,y)=0$ for any $(a,b)\in\ee$, so we may define an affine model $\mathbf{p}=q_+\mathbf{p}^+-q_-\mathbf{p}^-$, where $\mathbf{p}^+$ and $\mathbf{p}^-$ are distributions on $\A\times\B$ such that
$$
p^+(a,b|x,y)=p'^+(a,b|x,y)
+\frac{1}{A}\sum_{a'\notin\A}p'^+(a',b|x,y)
+\frac{1}{B}\sum_{b'\notin\B}p'^+(a,b'|x,y)
+\frac{1}{AB}\sum_{a'\notin\A,b'\notin\B}p'^+(a',b'|x,y),
$$
and similarly for $\mathbf{p}^-$. These are local since it suffices for Alice and Bob to use the local protocol for $\mathbf{p}'^+$ or  $\mathbf{p}'^-$ and for Alice to replace any output $a\notin\A$ by a uniformly random output $a'\in\A$ (similarly for Bob).
Therefore, we also have $\cquasi(\mathbf{p}')\geq\cquasi(\mathbf{p})$.
\end{proof}

Before proving Theorem~\ref{thm:nu-gamma2}, we first consider the special case of quantum distributions, for which $\qquasi(\mathbf{p})=1$.
As we shall see in Section~\ref{sec:smp}, this special case
implies the constant upper bound of Shi and Zhu on
approximating any quantum distribution~\cite{shi05}, which they
prove using diamond norms.  This also immediately gives an
upper bound on maximum Bell inequality violations for quantum
distributions, by
Theorem~\ref{thm:lp-bell},
which may be of independent interest in quantum information theory.

\begin{proposition}\label{prop:nu-quantum-dist}
 For any quantum distribution $\mathbf{p}\in\Q$, with inputs in
$\eks \times  \wye$ and outcomes in $\A \times \B$ with $A=|\A|, B=|\B|$,
\begin{enumerate}
 \item $\cquasi(\mathbf{p})\leq 2K_G+1$ when $A=B=2$,
 \item $\cquasi(\mathbf{p})\leq2AB(K_G+1)-1$ for any $A,B$.
\end{enumerate}
\end{proposition}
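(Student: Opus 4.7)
The plan is to prove part 1 (binary case) directly by an explicit three-term affine decomposition, and then reduce part 2 (general outcomes) to the binary case via outcome coarse-graining.

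For the binary case ($A = B = 2$), let $\mathbf{p} = (C, M_A, M_B) \in \Q$. I will construct three distributions. First, the independent product distribution $\mathbf{p}_{\mathrm{prod}} = (M_A M_B, M_A, M_B) \in \L$, in which each party outputs according to her own marginal independently. Second, the symmetrized quantum correlation $(C, 0, 0) \in \Q$ with uniform marginals, obtained from $\mathbf{p}$ by having both parties flip their outputs according to a shared random bit (as in Proposition~\ref{prop:uniform-marginals}). Third, the local distribution $(M_A M_B, 0, 0) \in \L$, achieved by sharing a random sign bit $s \in \{\pm 1\}$ and having each party output with conditional expectations $s M_A(x)$ and $s M_B(y)$ independently. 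A direct component-wise check gives $\mathbf{p} = \mathbf{p}_{\mathrm{prod}} + (C, 0, 0) - (M_A M_B, 0, 0)$ with coefficients $(1, 1, -1)$ summing to $1$. Using Proposition~\ref{prop:grothendieck} together with Lemma~\ref{lemma:cmargvscquasi} to bound $\cquasi((C, 0, 0)) \leq K_G$, and applying Lemma~\ref{lem:composition}, I obtain $\cquasi(\mathbf{p}) \leq 1 + K_G + 1 = K_G + 2 \leq 2K_G + 1$.

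For general $A, B \geq 2$, I reduce to the binary case by coarse-graining each outcome pair. For each $(a^*, b^*) \in \ay \times \bee$, define the binary-outcome distribution $\mathbf{p}^{(a^*, b^*)}_{\mathrm{bin}}$ in which Alice's output is $+1$ iff her original output equals $a^*$, and similarly for Bob with $b^*$; since this arises by local classical post-processing, $\mathbf{p}^{(a^*, b^*)}_{\mathrm{bin}} \in \Q$, and the binary case gives $\cquasi(\mathbf{p}^{(a^*, b^*)}_{\mathrm{bin}}) \leq K_G + 2$. Lift each such binary distribution back to an $\ay \times \bee$-valued distribution $\tilde{\mathbf{p}}^{(a^*, b^*)}$ by replacing a $-1$ output for Alice with a uniformly random element of $\ay \setminus \{a^*\}$ (and analogously for Bob); by the same argument used in the proof of Lemma~\ref{lem:extension}, this local operation preserves the $\cquasi$ bound. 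A short inclusion--exclusion calculation then establishes the reconstruction identity
\[
 \mathbf{p} = \frac{(A-1)(B-1)}{AB} \sum_{(a^*, b^*)} \tilde{\mathbf{p}}^{(a^*, b^*)} - (B-2)\, \mathbf{p}_A - (A-2)\, \mathbf{p}_B - (A-2)(B-2)\, \mathbf{p}_U,
\]
where $\mathbf{p}_A, \mathbf{p}_B, \mathbf{p}_U \in \L$ are, respectively, the distribution in which Alice plays her marginal against a uniform Bob, its analogue with the roles swapped, and the fully uniform distribution. Applying Lemma~\ref{lem:composition} and simplifying via the identity $(A-2) + (B-2) + (A-2)(B-2) = (A-1)(B-1) - 1$ yields $\cquasi(\mathbf{p}) \leq (A-1)(B-1)(K_G + 3) - 1$, which a short algebraic comparison shows is at most $2AB(K_G + 1) - 1$ for all $A, B \geq 2$ and $K_G \geq 1$.

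The main technical obstacle is verifying the reconstruction identity for general $A, B$. This requires expanding the per-outcome probabilities of $\sum_{(a^*, b^*)} \tilde{\mathbf{p}}^{(a^*, b^*)}$ by splitting each $\tilde{\mathbf{p}}^{(a^*, b^*)}$ into its four ``cells'' ($(a^*, b^*)$, $(a^*, b)$ with $b \neq b^*$, the symmetric case, and the off-$(a^*, b^*)$ case), summing over $(a^*, b^*)$, and showing that the result decomposes as a linear combination of $p(a, b|x, y)$, $p(a|x)$, $p(b|y)$, and the constant $1$ with coefficients that are precisely compensated by the three auxiliary local distributions $\mathbf{p}_A, \mathbf{p}_B, \mathbf{p}_U$. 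The bookkeeping is elementary but must be carried out carefully, after which the final bound follows immediately.
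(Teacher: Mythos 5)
Your proof is correct, and it takes a closely related but cleaner route than the paper's, giving strictly tighter constants in both parts. For the binary case, the paper expands $(C,0,0)$ inline as $K_G\,(C/K_G,0,0)-(K_G-1)(0,0,0)$, obtaining the four-term decomposition and the bound $2K_G+1$; you instead apply Lemma~\ref{lem:composition} to the three-term identity $\mathbf p=(M_AM_B,M_A,M_B)+(C,0,0)-(M_AM_B,0,0)$ and invoke $\cquasi(C,0,0)\le K_G$ directly (via Proposition~\ref{prop:grothendieck} and Lemma~\ref{lemma:cmargvscquasi}), which yields $K_G+2\le 2K_G+1$. For general $A,B$, the paper extends the output alphabet with a dummy symbol $\varnothing$ and uses Lemma~\ref{lem:extension}, with auxiliary corrections weighted by $B{-}1$, $A{-}1$, and $AB{-}A{-}B{+}1$; you avoid $\varnothing$ entirely by coarse-graining each $(a^*,b^*)$ to $\{\pm1\}$, lifting back to $\A\times\B$ with uniform redistribution of the $-1$ outcome, and your reconstruction identity carries corrections $B{-}2$, $A{-}2$, $(A{-}2)(B{-}2)$. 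I verified the identity: summing $\tilde{\mathbf p}^{(a^*,b^*)}$ over all $(a^*,b^*)$ and collecting coefficients of $p(a,b|x,y)$, $p(a|x)$, $p(b|y)$, and $1$ gives $\frac{AB}{(A-1)(B-1)}$, $\frac{(B-2)A}{(A-1)(B-1)}$, $\frac{(A-2)B}{(A-1)(B-1)}$, $\frac{(A-2)(B-2)}{(A-1)(B-1)}$ respectively, and multiplying by $\frac{(A-1)(B-1)}{AB}$ and subtracting the three product-distribution corrections recovers $\mathbf p$ exactly, with coefficients summing to $1$. The resulting bound $(A-1)(B-1)(K_G+3)-1$ is indeed $\le 2AB(K_G+1)-1$ whenever $K_G\ge 1$, and it is strictly tighter than the paper's. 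Two minor points: the phrase ``by the same argument used in the proof of Lemma~\ref{lem:extension}'' slightly overstates the analogy — the relevant fact is the more general (and easily proved) observation that applying a fixed local stochastic post-processing to both sides of an affine model in $\L$ preserves locality and coefficient weights, hence cannot increase $\cquasi$; and you could state explicitly that in the degenerate case $A=B=2$ all four lifted distributions $\tilde{\mathbf p}^{(a^*,b^*)}$ coincide with $\mathbf p$ and the identity reduces to $\mathbf p=\frac{1}{4}\cdot 4\,\mathbf p$, so the general-case formula consistently recovers your binary bound $K_G+2$.
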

\begin{proof}
 \begin{enumerate}
  \item Since $A=B=2$, we may write the distribution as correlations and marginals, $\mathbf{p}=(C,M_A,M_B)$. Since $(C,M_A,M_B)\in\Q$, we also have $(C,0,0)\in\Q$, and by Tsirelson's theorem, $(C/K_G,0,0)\in\L$. Moreover, it is immediate that $(M_AM_B,M_A,M_B),(M_AM_B,0,0)$ and $(0,0,0)$ are local distributions as well, so that we have the following affine model for $(C,M_A,M_B)$
$$
(C,M_A,M_B)=K_G(C/K_G,0,0)+(M_AM_B,M_A,M_B)-(M_AM_B,0,0)-(K_G-1)(0,0,0).
$$
This implies that $\cquasi(C,M_A,M_B)\leq 2K_G+1$.

\item For the general case, we will reduce to the binary case. Let us introduce an additional output $\varnothing$, and set $\A'=\A\cup\{\varnothing\}$ and $\B'=\B\cup\{\varnothing\}$.
We first extend the distribution $\mathbf{p}$ to a distribution $\mathbf{p}'$ on $\A'\times\B'$ by setting $p'(a,b|x,y)=p(a,b|x,y)$ for any $(a,b)\in\A\times\B$, and $p'(a,b|x,y)=0$ otherwise. By Lemma~\ref{lem:extension}, we have $\cquasi(\mathbf{p})=\cquasi(\mathbf{p}')$.

For each $(\alpha,\beta)\in\A\times\B$, we also define a probability distribution $\mathbf{p}_{\alpha\beta}$ on $\A'\times\B'$:
$$
p_{\alpha\beta}(a,b|x,y)=
\begin{cases}
 p(\alpha , \beta|x,y) & \textrm{if } (a,b)=(\alpha,\beta),\\
 p(\alpha|x)-p(\alpha,\beta|x,y) & \textrm{if } (a,b)=(\alpha, \varnothing),\\
 p(\beta|y) -p(\alpha,\beta|x,y) & \textrm{if } (a,b)=(\varnothing, \beta),\\
1-p(\alpha|x)-p(\beta|y)+p(\alpha, \beta|x,y) & \textrm{if } (a,b)=(\varnothing,\varnothing),\\
 0 & \textrm{otherwise}.
\end{cases}
$$
Notice that $p_{\alpha \beta} \in \Q$, since a protocol for $p_{\alpha \beta}$
can be obtained from a protocol for $p$: Alice outputs $\varnothing$ whenever
her outcome is not $\alpha$, similarly for Bob.
Let $\A_\alpha=\{\alpha,\varnothing\}$ and $\B_\beta=\{\beta,\varnothing\}$. Since $p_{\alpha\beta}(a,b|x,y)=0$ when $(a,b)\notin\A_\alpha\times\B_\beta$, we may define distributions $\mathbf{p}'_{\alpha\beta}$ on $\A_\alpha\times\B_\beta$ such that $p_{\alpha\beta}'(a,b|x,y)=p_{\alpha\beta}(a,b|x,y)$ for all $(a,b)\in\A_\alpha\times\B_\beta$. By Lemma~\ref{lem:extension}, these are such that $\cquasi(\mathbf{p}'_{\alpha\beta})=\cquasi(\mathbf{p}_{\alpha\beta})$, and since these are binary distributions, $\cquasi(\mathbf{p}'_{\alpha\beta})\leq 2K_G+1$.
Let us define three distributions
$\mathbf{p_A}, \mathbf{p_B},\mathbf{p}_\varnothing$ on $\A'\times\B'$ as follows.
We let $\mathbf{p_A}(a,\varnothing|x,y)=p(a|x),
	\mathbf{p_B}(\varnothing,b|x,y)=p(b|y)$, and 0 everywhere else; and
$p_\varnothing(a,b|x,y)=1$ if $(a,b)=(\varnothing,\varnothing)$, and $0$ otherwise. These are product distributions, so
$\mathbf{p_A},\mathbf{p_B},\mathbf{p}_\varnothing\in\L$ and
$\cquasi=1$ for all three distributions.

We may now build the following affine model for $\mathbf{p}'$
$$
\mathbf{p}'=
	\sum_{(\alpha,\beta)\in\A\times\B} \mathbf{p}'_{\alpha\beta}
	-(B{-}1)\mathbf{p_A}- (A{-}1)\mathbf{p_B}
	-(AB{-}A{-}B{+}1)\mathbf{p}_\varnothing.
$$
From Lemma~\ref{lem:composition}, we conclude that $\cquasi(\mathbf{p}')\leq AB(2K_G+2)-1$.
\end{enumerate}
\end{proof}

The proof of Theorem~\ref{thm:nu-gamma2} immediately follows.
\begin{proof}[Proof of Theorem~\ref{thm:nu-gamma2}]
 By definition of $\qquasi(\mathbf{p})$, there exists $\mathbf{p}^+,\mathbf{p}^-\in\Q$ and $q_+,q_-\geq 0$ such that $\mathbf{p}=q_+\mathbf{p}^+-q_-\mathbf{p}^-$ and $q_++q_-=\qquasi(\mathbf{p})$. From Lemma~\ref{lem:composition}, $\cquasi(\mathbf{p})\leq q_+\cquasi(\mathbf{p}^+)+q_- \cquasi(\mathbf{p}^-)$, and Proposition~\ref{prop:nu-quantum-dist} immediately concludes the proof.
\end{proof}

\section{Upper bounds for non-signaling distributions} \label{sec:smp}

We have seen that  if a distribution can be simulated
using $t$ bits of communication, then it may be represented by an
affine model with coefficients exponential in $t$ (Lemma~\ref{lm:decreased-correlations-pi}).
In this section, we consider the converse: how much communication
is sufficient to simulate a distribution, given an affine model?
This approach allows us to show that any (shared randomness or entanglement-assisted)
communication protocol can be simulated with simultaneous messages, with
an exponential cost to the simulation, which was previously known only
in the case of Boolean functions~\cite{yaofinger03,shi05,gkr06}.
Our results imply for example that for
any quantum distribution $\mathbf{p}\in \Q$,
$Q_\eps^\parallel(\mathbf{p})= O(\log(n))$, where $n$ is the input size.
This in effect replaces arbitrary entanglement
in the state being measured,
with logarithmic quantum communication (using no additional
resources such as shared randomness).
We use the
superscript $\parallel$ to indicate the simultaneous messages model, where
Alice and Bob each send a message to the referee, who without knowing
the inputs, outputs the value of the function, or more generally,
outputs $a,b$ with the correct probability distribution conditioned
on the inputs $x,y$.

\begin{thm}\label{thm:smp}
For any distribution $\mathbf{p}\in \C$ with inputs in
$\eks \times  \wye$ with $|\eks \times \wye| \leq 2^n$,
and outcomes in $\A \times \B$ with $A=|\A|, B=|\B|$,
and any $\epsilon, \delta < 1/2$,
\begin{enumerate}
\item $R_{\epsilon + \delta}^{\parallel,\pub}(\mathbf{p}) \leq
	16 \left[\frac{AB\cquasi^\epsilon(\mathbf{p})}{\delta}\right]^2
	\ln\left[\frac{4AB}{\delta}\right] \log(AB)$,
\item $Q_{\epsilon+\delta}^{\parallel}(\mathbf{p}) \leq
O\left((AB)^5\left[\frac{\cquasi^\epsilon(\mathbf{p})}{\delta}\right]^4\ln\left[\frac{AB}{\delta}\right]\log(n)\right)$.
\end{enumerate}
\end{thm}

The general idea of the proof is to build a communication protocol for $\mathbf{p}$ based on an affine combination $\mathbf{p}=q_+\mathbf{p}^+-q_-\mathbf{p}^-$, where $\mathbf{p}^+$ and $\mathbf{p}^-$ are local (or quantum) distributions. By sending sufficiently many samples of $\mathbf{p}^+$ and $\mathbf{p}^-$ to the referee (which does not require any communication between Alice and Bob), the referee can estimate these distributions and therefore simulate their affine combination $\mathbf{p}$. To quantify the number of samples that are necessary to achieve some precision, we use
Hoeffding's inequality~\cite{mcdiarmid}.
\begin{proposition}[Hoeffding's inequality]
\label{prop:hoeffding}
Let $X$ be a random variable with values in $[a,b]$. Let $X_t$ be the $t$-th of
$T$ independent trials of $X$, and $S=\frac{1}{T}\sum_{t=1}^T X_t$.

Then,
$\Pr[S-E(X) \geq \beta]\leq e^{-\frac{2T\beta^2}{(b-a)^2}}$,
and $\Pr[E(X)-S \geq \beta]\leq e^{-\frac{2T\beta^2}{(b-a)^2}}$, for any $\beta\geq0$.
\end{proposition}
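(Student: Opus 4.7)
The plan is to derive Hoeffding's inequality via the Chern\-off--Cram\'er method: bound the moment generating function of each $X_t$, apply the exponential Markov inequality, and optimize the free parameter $\lambda$. The second inequality follows from the first by applying it to the variable $-X$, which takes values in $[-b,-a]$, an interval of the same length $b-a$. So I focus on proving only the upper-tail bound $\Pr[S - E(X) \geq \beta]\leq e^{-2T\beta^2/(b-a)^2}$, and may assume $\beta > 0$ since $\beta = 0$ is trivial.

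First, I would center the variables: let $Y_t = X_t - E(X)$, so the $Y_t$ are i.i.d., zero-mean, and bounded in an interval of length $b-a$. For any $\lambda > 0$, applying Markov's inequality to $e^{\lambda \sum_t Y_t}$ and using independence gives
\[\Pr[S - E(X) \geq \beta] = \Pr\!\bigl[e^{\lambda \sum_t Y_t} \geq e^{\lambda T \beta}\bigr] \leq e^{-\lambda T \beta} \prod_{t=1}^T E[e^{\lambda Y_t}].\]

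The key technical ingredient is Hoeffding's lemma: if $Y$ is zero-mean with $Y \in [a',b']$, then $E[e^{\lambda Y}] \leq e^{\lambda^2(b'-a')^2/8}$. I would prove this by convexity of $y\mapsto e^{\lambda y}$: for $y\in[a',b']$, $e^{\lambda y} \leq \frac{b'-y}{b'-a'}e^{\lambda a'} + \frac{y-a'}{b'-a'}e^{\lambda b'}$, and taking expectations with $E[Y]=0$ yields $E[e^{\lambda Y}] \leq e^{\varphi(u)}$ where $u=\lambda(b'-a')$, $p=-a'/(b'-a')\in[0,1]$, and $\varphi(u) = -pu + \log(1-p+pe^u)$. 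A direct computation gives $\varphi(0)=\varphi'(0)=0$ and $\varphi''(u) = q(1-q)$ with $q = pe^u/(1-p+pe^u)\in[0,1]$, hence $\varphi''(u)\leq 1/4$; by Taylor's theorem with remainder, $\varphi(u) \leq u^2/8$, which gives the lemma. Applied to each $Y_t$, this yields $\prod_t E[e^{\lambda Y_t}] \leq e^{T\lambda^2(b-a)^2/8}$.

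Finally, combining these two steps yields $\Pr[S-E(X)\geq\beta] \leq \exp\bigl(-\lambda T\beta + T\lambda^2(b-a)^2/8\bigr)$, and I would optimize in $\lambda > 0$. The exponent is a quadratic in $\lambda$ minimized at $\lambda^* = 4\beta/(b-a)^2 > 0$, where it evaluates to $-2T\beta^2/(b-a)^2$, giving the claimed bound. The main obstacle, and the only place where the analysis is not mechanical, is obtaining the sharp constant $1/8$ in Hoeffding's lemma; it rests on the uniform bound $\varphi''(u)\leq 1/4$, which is tight precisely when the auxiliary Bernoulli parameter $q$ equals $1/2$.
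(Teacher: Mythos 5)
Your proof is correct and complete: the centering, the exponential Markov bound, Hoeffding's lemma with the sharp constant $1/8$ via $\varphi''(u)=q(1-q)\leq 1/4$, the optimization $\lambda^*=4\beta/(b-a)^2$, and the reduction of the lower tail to the upper tail applied to $-X$ are all carried out properly. The paper itself gives no proof of this proposition --- it is stated as a standard result with a citation --- and your argument is exactly the classical Chernoff--Cram\'er derivation that the cited reference contains, so there is nothing to reconcile between the two.
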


We will also use the following lemma.
\begin{lemma}\label{lem:estimated-distribution}
 Let $\mathbf{p}$ be a probability distribution on $\V$ with $V=|\V|$, and $e:\Real^+\to\Real^+$.
For each $v\in \V$, let $Q_v$ be a random variable such that
$\forall \beta\geq 0$,
$\Pr[{Q}_v\geq p(v) + \beta ]\leq e(\beta)$
and $\Pr[{Q}_v \leq p(v) - \beta ]\leq e(\beta)$.

Then, given samples $\{Q_v:v\in \V\}$, and without knowing $\mathbf{p}$, we may simulate a probability distribution $\mathbf{p'}$ such that $\delta(\mathbf{p'},\mathbf{p})\leq 2V[\beta+e(\beta)]$.
\end{lemma}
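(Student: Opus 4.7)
The plan is to use a truncate-and-normalize procedure. From the samples, define $\hat Q_v = \max(Q_v,0)$ and $S = \sum_{v\in\V} \hat Q_v$; then sample the output $v$ with probability $\hat Q_v/S$ (using any fixed default distribution if $S=0$). This is computable from the samples alone, without any knowledge of $\mathbf{p}$. Write $p'(v) = \mathbb{E}[\hat Q_v/S]$, where the expectation is taken over the joint distribution of the samples $\{Q_v\}_{v\in\V}$.

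The core analysis is a union-bound argument. Let $G$ denote the good event that $|Q_v - p(v)|\le \beta$ holds simultaneously for every $v\in\V$. By the hypothesis of the lemma together with the union bound, $\Pr[\bar G] \le 2V e(\beta)$. Under $G$, since $p(v)\ge 0$ the truncation only brings $\hat Q_v$ closer to $p(v)$, so $|\hat Q_v - p(v)| \le \beta$ for every $v$, and consequently $|1-S| = |\sum_v (p(v)-\hat Q_v)| \le V\beta$.

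Under $G$, I would bound the $\ell_1$ distance between the renormalized vector and $\mathbf{p}$ by adding and subtracting $Sp(v)$:
\[
\sum_v \Bigl|\tfrac{\hat Q_v}{S} - p(v)\Bigr| \;\le\; \tfrac{1}{S}\Bigl(\sum_v |\hat Q_v - p(v)| + |1-S|\sum_v p(v)\Bigr) \;\le\; \tfrac{2V\beta}{1 - V\beta}.
\]
(If $V\beta\ge 1$ the claim is trivial since total variation is at most $1\le 2V\beta$.) Under $\bar G$ one uses the trivial bound that two distributions differ by at most $1$ in total variation. Averaging over the samples,
\[
\sum_v |p'(v)-p(v)| \;\le\; \mathbb{E}\Bigl[\sum_v\bigl|\tfrac{\hat Q_v}{S}-p(v)\bigr|\,\Bigm|\,G\Bigr]\Pr[G] + 2\Pr[\bar G] \;\le\; \tfrac{2V\beta}{1-V\beta} + 4V e(\beta),
\]
which yields $\delta(\mathbf{p'},\mathbf{p}) \le 2V[\beta + e(\beta)]$ after dividing by $2$ and absorbing the denominator into the looser stated constant (or treating the $V\beta\ge 1$ case separately).

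The main obstacle is the renormalization step: after truncation, the total mass $S$ is close to but not exactly $1$, so dividing by $S$ introduces a multiplicative distortion that must be controlled by $|1-S|$. Everything else is a routine union bound and triangle inequality, but getting a clean constant in front of $V[\beta+e(\beta)]$ requires treating the $S\approx 0$ regime with the trivial bound rather than trying to push the multiplicative estimate there.
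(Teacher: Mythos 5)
Your proof is correct, up to one small slip in the case split: after bounding $\delta(\mathbf{p}',\mathbf{p}) \le \frac{V\beta}{1-V\beta} + 2Ve(\beta)$, the dichotomy you need is $V\beta \le \tfrac{1}{2}$ (which gives $\frac{1}{1-V\beta}\le 2$ and hence the stated bound) versus $V\beta > \tfrac{1}{2}$ (where $2V\beta > 1 \ge \delta$ trivially); your parenthetical treats $V\beta \ge 1$ only, leaving $\tfrac{1}{2}\le V\beta<1$ formally unhandled, though the fix is immediate.

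The route is genuinely different from the paper's in two respects, both worth noting. First, the paper's renormalization is asymmetric: it divides by $R_\V = \sum_v R_v$ only when $R_\V > 1$, and otherwise keeps $S_v=R_v$ and pads the deficit onto a dummy output $\varnothing$. This avoids ever multiplying the estimates up by a factor $1/R_\V>1$, which is exactly the denominator you have to tame by conditioning on the good event $G$ and using $S\ge 1-V\beta$. Second, the paper never introduces a global good event or the $\ell_1$ formulation of total variation: it works event-by-event, showing $|E(S_\ee)-p(\ee)|\le (E+V)[\beta+e(\beta)]$ for each $\ee\subseteq\V$ directly from per-event tail bounds $\Pr[R_\ee \ge p(\ee)+E\beta]\le Ee(\beta)$ (themselves a union bound over $v\in\ee$), and handles events containing $\varnothing$ via their complements. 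Your approach is a bit shorter and leans on the standard $\ell_1$ characterization of $\delta$; the paper's matches the max-over-events definition it uses elsewhere and, by never dividing by a small number, sidesteps the near-degenerate $S\approx 0$ regime entirely rather than discarding it into $\bar G$. Both give the stated bound.
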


\begin{proof}
 In order to use the variables $Q_v$ as estimations for $p(v)$, we must first make them positive, and then renormalize them so that they sum up to $1$.
Let $R_v = \max\{0,Q_v\}$.
Then we may easily verify that
\begin{eqnarray*}
 \Pr[R_v \geq p(v) +  \beta ]&\leq& e(\beta),\\
\Pr[R_v \leq p(v) - \beta ]&\leq& e(\beta).
\end{eqnarray*}
%

For any subset $\ee\subseteq\V$ of size $E=|\ee|$, we also define the estimates $R_{\ee}=\sum_{v\in \ee} R_v$ for $p(\ee)$. For any $v$, we have $R_{v}-p(v) \geq \beta$ with probability at least $1-e(\beta)$. Therefore, with probability at least $1-E e(\beta)$, we have $R_{v}-p(v) \geq \beta$ simultaneously for all $v\in \ee$, and therefore by summation also $R_\ee-p(\ee) \geq E\beta$. Similarly, with probability at least $1-E e(\beta)$, we have $p(v)-R_{v} \geq \beta$ simultaneously for all $v\in \ee$, and therefore also $p(\ee)-R_\ee \geq E\beta$. Hence, we have the following bounds for $R_\ee$
For any subset $\ee\subseteq\V$ of size $E=|\ee|$, we also define the estimates $R_{\ee}=\sum_{v\in \ee} R_v$ for $p(\ee)$. By summing,
\begin{eqnarray*}
 \Pr[R_\ee \geq p(\ee) + E\beta ]&\leq& Ee(\beta),\\
\Pr[R_\ee \leq p(\ee) - E\beta ]&\leq& Ee(\beta).
\end{eqnarray*}

In order to renormalize the estimated probabilities, let $R_{\V}=\sum_{v\in{\V}} R_{v}$. If $R_{\V}>1$, we use as final estimates $S_{v}=R_{v}/R_{\V}$. On the other hand, if $R_{\V}\leq 1$, we keep $S_{v}=R_{v}$ and introduce a dummy output $\varnothing\notin\V$ with estimated probability $S_\varnothing=1-R_{\V}$ (we extend the original distribution to $\V\cup\{\varnothing\}$, setting $p(\varnothing)=0$). By outputting $v$ with probability $S_{v}$, we then simulate some distribution $p'(v)=E(S_{v})$, and it suffices to show that $|E(S_{\ee})-p(\ee)|\leq2V[\beta+e(\beta)]$ for any $\ee\subseteq\V\cup\{\varnothing\}$.

We first upper bound $E(S_\ee)$ for $\ee\in\V$. Since $S_\ee\leq R_\ee$, we obtain from the bounds on $R_\ee$ that $\Pr[S_\ee\geq p(\ee) + E\beta ]\leq Ee(\beta)$. Therefore, we have $S_\ee<p(\ee)+E\beta$ with probability at least $1-Ee(\beta)$, and $S_\ee\leq 1$ with probability at most $Ee(\beta)$. This implies that
$E(S_\ee)\leq p(\ee)+E\left[\beta+e(\beta)\right]$.

To lower bound $E(S_\ee)$, we note that with probability at least $1-Ee(\beta)$, we have $R_\ee>p(\ee)-E\beta$, and with probability at least $1-Ve(\beta)$, we have $R_{\V}<1+V\beta$. Therefore, with probability at least $1-(E+V)e(\beta)$, both these events happen at the same time, so that $S_\ee=R_\ee/R_{\V}>(p(\ee)-E\beta)(1-V\beta)\geq p(\ee)-(E+V)\beta$. This implies that
$E(S_\ee)\geq p(\ee)-(E+V)\left[\beta+e(\beta)\right]$.
Since $S_\varnothing=1-S_{\V}$, this also implies that
$E(S_\varnothing)\leq 2V\left[\beta+e(\beta)\right]$.
\end{proof}

\begin{proof}[Proof of Theorem~\ref{thm:smp}]
\noindent
{ 1.} Let $\Lambda=\cquasi(\mathbf{p})$, $\mathbf{p}= q_+ \mathbf{p}^+ - q_- \mathbf{p}^-$,
with $q_+, q_-\geq 0$, $q_+ + q_- = \Lambda$ and
$\mathbf{p}^+, \mathbf{p}^- \in \L$.
Let $P^+,P^-$ be protocols for $\mathbf{p}^+$ and $\mathbf{p}^-$,
respectively.  These protocols use shared randomness but no
communication.

To simulate $\mathbf{p}$, Alice and Bob make $T$ independent
runs of $P^+$, where we label the outcome of the $t$-th run
$(a_t^+,b_t^+)$.  Similarly, let $(a_t^-,b_t^-)$ be the
outcome of the $t$-th run of $P^-$.  They send the list
of outcomes to the referee.

The idea is for the referee to estimate $p(a,b|x,y)$
based on the $2T$ samples,
and output according to the estimated distribution.
Let $P^+_{t,a,b}$ be an indicator variable which equals 1 if
$a_t^+=a$ and $b_t^+=b$, and 0 otherwise.
Define $P^-_{t,a,b}$ similarly. Furthermore, let $P_{t,a,b}=q_+P^+_{t,a,b}-q_-P^-_{t,a,b}$.
Then $E(P_{t,a,b})=p(a,b|x,y)$ and $P_{t,a,b}\in [-q_-,q_+]$.

Let $P_{a,b}= \frac{1}{T}\sum_{t=1}^{T}P_{t,a,b}$ be the referee's estimate
for $p(a,b|x,y)$. By Hoeffding's inequality,
\begin{eqnarray*}
 \Pr[P_{a,b}\geq p(a,b|x,y) + \beta ]&\leq& e^{-\frac{2T\beta^2}{\Lambda^2}},\\
\Pr[P_{a,b} \leq p(a,b|x,y)-\beta ]&\leq& e^{-\frac{2T\beta^2}{\Lambda^2}}.
\end{eqnarray*}

Lemma~\ref{lem:estimated-distribution} with $\V=\A\times \B$, ${Q}_{a,b}=P_{a,b}$ and $e(\beta)=e^{-\frac{2T\beta^2}{\Lambda^2}}$ then implies that the referee may simulate a probability distribution $\mathbf{p'}$ such that
$\delta(\mathbf{p'},\mathbf{p})\leq 2AB(\beta+e^{-\frac{2T\beta^2}{\Lambda^2}})$.
It then suffices to set $\beta=\frac{\delta}{4AB}$, and $T=8 \left[\frac{AB\Lambda}{\delta}\right]^2\ln\left[\frac{4AB}{\delta}\right]$ to conclude the proof, since Alice sends $2T\log A$ and Bob sends $2T\log B$ bits to the referee.

For $\cquasi^\epsilon$, apply this proof to the distribution
$\mathbf{p''}$ with statistical distance $\delta(\mathbf{p},\mathbf{p''})\leq \epsilon$
and $\cquasi(\mathbf{p''})=\cquasi^\epsilon(\mathbf{p})$.

Note that the same proof gives an upper bound on  $R_{\epsilon + \delta}^{\parallel,\ent}$ in terms of $\qquasi$.

\noindent
{ 2.}
If shared randomness is not available but quantum
messages are, then we can use  quantum fingerprinting~\cite{bcwdw01,yaofinger03} to send the
results of the repeated protocol to the referee.
Let $(a^+(r),b^+(r))$ be the outcomes of $P^+$
using $r$ as shared randomness.
We use the random variable $A^+_{a}(r)$ as an indicator variable for
$a^+(r)=a$; similarly $B^+_{b}$, and $P^+_{\ee}=\sum_{(a,b)\in\ee}A^+_{a}B^+_{b}$.

We can easily adapt the proof of Newman's Theorem~\cite{newman91},
to show that there exists a set of $L$ random strings
${\cal{R}} = \{r_1,\ldots r_L\}$
such that
$\forall x,y,
\abs{E_{r_i\in {\cal R}}(\tilde{P}_\ee^+(r_i))
	- {E(P_\ee^+)}  }
\leq \alpha $ provided $L \geq \frac{4n}{\alpha^2}$,
where $n$ is the input length, and $\tilde{P}_\ee^+$ is the random variable
where randomness is taken from $\cal R$. In other words, by taking the randomness from $\cal R$, we may simulate a probability distribution $\tilde{\mathbf{p}}^+$ such that $\delta(\tilde{\mathbf{p}}^+,\mathbf{p}^+)\leq\alpha$.

For each $a,b\in \A\times \B$, Alice and Bob send $T$ copies of the states
$\ket{\phi_a^+}=\frac{1}{\sqrt{L}}
	\sum_{1\leq i\leq L}
	   \ket{A^+_a(r_i)}\ket{1}\ket{i}$ and
$\ket{\phi_b^+} =\frac{1}{\sqrt{L}}
	\sum_{1\leq i\leq L}
	    \ket{1}\ket{B^+_a(r_i)}\ket{i}$ to the referee.
The inner product is
$$\braket{\phi_a^+}{\phi_b^+}=\frac{1}{{L}}
	\sum_{1\leq i\leq L} \braket{A^+_a(r_i)}{1}\braket{1}{B^+_b(r_i)}
		= \tilde{p}^+(a,b|x,y),$$
where the expectation is taken over the random choices $r_1,\ldots r_L$.

The referee then uses inner product estimation~\cite{bcwdw01}:
for each copy, he performs a measurement on $\ket{\phi_a^+}\otimes\ket{\phi_b^+}$ to obtain a random variable $Z^+_{t,a,b}\in\{0,1\}$ such that $\Pr[Z^+_{t,a,b}=1]=\frac{1-\abs{\braket{\phi_b^+}{\phi_a^+}}^2}{2}$, then he sets $Z^+_{a,b}=\frac{1}{T}\sum_{t=1}^TZ^+_{t,a,b}$. Let ${Q}^+_{a,b}=\sqrt{1-2Z^+_{a,b}}$ if $Z^+_{a,b}\leq 1/2$ and ${Q}^+_{a,b}=0$ otherwise. This serves as an approximation for $\tilde{p}^+(a,b|x,y)=\abs{\braket{\phi_b^+}{\phi_a^+}}$, and Hoeffding's inequality then yields
\begin{eqnarray*}
 \Pr[Q^+_{a,b} \geq  \tilde{p}^+(a,b|x,y) +  \beta ]&\leq& e^{-\frac{T\beta^4}{2}},\\
\Pr[ Q^+_{a,b} \leq \tilde{p}^+(a,b|x,y)-\beta ]&\leq& e^{-\frac{T\beta^4}{2}}.
\end{eqnarray*}

Let $Q^-_{a,b}$ be an estimate for $\tilde{p}^-(a,b|x,y)$ obtained using the same method. The referee then obtains an estimate for $\tilde{p}(a,b|x,y)=q_+\tilde{p}^+(a,b|x,y)-q_-\tilde{p}^-(a,b|x,y)$, by setting $Q_{a,b}=q_+Q^+_{a,b}+q_-Q^-_{a,b}$, such that
\begin{eqnarray*}
 \Pr[Q_{a,b}\geq \tilde{p}(a,b|x,y) + \beta ]&\leq& 2e^{-\frac{T\beta^4}{2\Lambda^4}},\\
\Pr[ Q_{a,b} \leq \tilde{p}(a,b|x,y)- \beta ]&\leq& 2e^{-\frac{T\beta^4}{2\Lambda^4}}.
\end{eqnarray*}

Lemma~\ref{lem:estimated-distribution} with $e(\beta)=2e^{-\frac{T\beta^4}{2\Lambda^4}}$ then implies that the referee may simulate a probability distribution $\mathbf{p}^s$ such that
$\delta(\mathbf{p}^s,\tilde{\mathbf{p}})\leq 2AB(\beta+2e^{-\frac{T\beta^4}{2\Lambda^4}})$.
Since $\delta(\tilde{\mathbf{p}},\mathbf{p})\leq\Lambda\alpha$,
we need to pick $T,L=\frac{4n}{\alpha}$ large enough so that
$\Lambda\alpha+2AB\left[\beta+2e^{-T\beta^4/2\Lambda^4}\right]\leq \delta$.
Setting $\alpha=\frac{\delta}{2\Lambda}$,
$\beta=\frac{\delta}{8AB}$,
$T=2\frac{\Lambda^4}{\beta^4}\ln(\frac{16AB}{\delta})
	=  2^{13}\left[\frac{AB\Lambda}{\delta}\right]^4\ln(\frac{16AB}{\delta})$ and
$L=\frac{4n}{\alpha^2}=  \frac{16n\Lambda^2}{\delta^2}$,
the total complexity of the protocol is
$4ABT(\log(L)+2) = O((AB)^5\left[\frac{\Lambda}{\delta}\right]^4\ln\left[\frac{AB}{\delta}\right]\log(n))$,
(we may assume that $\frac{\Lambda}{\delta}\leq n^{1/4}$, otherwise
this protocol performs worse than the trivial protocol).
\end{proof}

In the case of Boolean functions, corresponding to correlations $C_f(x,y)\in\{\pm 1\}$ (see Def.~\ref{def:boolean-functions}), the referee's job is made easier
by the fact that he only needs to determine the sign of the correlation
with probability $1-\delta$. This allows us to get some improvements
in the upper bounds.  Similar improvements can be
obtained for other types of promises on the distribution.
\begin{thm}
\label{thm:smp-boolean}
 Let $f:\{0,1\}^n \times \{0,1\}^n \rightarrow \{0,1\}$, with associated sign matrix $C_f$, and $\epsilon, \delta < 1/2$.
\begin{enumerate}
\item
$R_{\delta}^{\parallel,\pub}(f) \leq 4\left[\frac{\cquasi^\epsilon(C_f)}{1-2\epsilon}\right]^2\ln(\frac{1}{\delta})
$,
\item
$Q_{\delta}^{\parallel}(f)
\leq O\left(\log(n) \left[\frac{\cquasi^\epsilon(C_f)}{1-2\epsilon}\right]^4\ln(\frac{1}{\delta})\right)
$.
\end{enumerate}
\end{thm}

From Lemmas~\ref{lem:epsilon-alpha} and~\ref{lem:gamma2-infinity}, these bounds may also be expressed in terms of $\qls^\alpha$, and the best upper bounds are obtained from $\qls^\infty(C_f)=\frac{1}{\epsilon^\ent(C_f)}$. The first item then coincides with the upper bound of~\cite{ls07}.

Together with the bound between $\cquasi$ and $\qquasi$ from Section~\ref{sec:gamma-vs-nu}, and the lower bounds on communication complexity from Section~\ref{sec:lower-bounds}, Theorems~\ref{thm:smp} and~\ref{thm:smp-boolean} immediately imply the following corollaries.

\begin{cor}
\label{cor:smp}
Let $f:\{0,1\}^n \times \{0,1\}^n \rightarrow \{0,1\}$.
For any $\epsilon, \delta < 1/2$, if $Q_\epsilon^\ent(f) \leq q$, then
\begin{enumerate}
\item $R_{\delta}^{\parallel,\pub}(f) \leq K_G^2 \cdot 2^{2q+2} \ln(\frac{1}{\delta})\frac{1}{(1{-}2\epsilon)^2} $,
\item $Q_{\delta}^{\parallel}(f)\leq O\left(\log(n) 2^{4q}\ln(\frac{1}{\delta})\frac{1}{(1{-}2\epsilon)^4}\right)$.
\end{enumerate}
Let $\mathbf{p}\in \C$ be a distribution  with inputs in
$\eks \times  \wye$ with $|\eks \times \wye| \leq 2^n$, and outcomes in $\A \times \B$ with $A=|\A|, B=|\B|$. For any $\epsilon, \delta < 1/2$, if $Q_\epsilon^{\ent}(\mathbf{p}) \leq q$, then
\begin{enumerate}
\addtocounter{enumi}{2}
\item $R_{\epsilon + \delta}^{\parallel,\pub}(\mathbf{p}) \leq
	O\left(2^{4q}\frac{(AB)^4}{\delta^2}
	\ln^2\left[\frac{AB}{\delta}\right] \right)$,
\item $Q_{\epsilon+\delta}^{\parallel}(\mathbf{p}) \leq
O\left(2^{8q}\ \frac{(AB)^9}{\delta^4}\ln\left[\frac{AB}{\delta}\right]\log(n)\right)$.
\end{enumerate}
\end{cor}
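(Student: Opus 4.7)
The plan is to assemble the corollary from three ingredients that are already in place: (i) the communication lower bounds of Corollary~\ref{cor:lower-bound-quasi}, used in contrapositive to convert an upper bound on $Q_\epsilon^\ent$ into an upper bound on the appropriate quasi-probability measure; (ii) the $\qquasi$-versus-$\cquasi$ comparisons from Section~\ref{sec:gamma-vs-nu}; and (iii) the simultaneous-message upper bounds of Theorems~\ref{thm:smp} and~\ref{thm:smp-boolean}, applied with $\cquasi^\epsilon$ as the driver. No fresh estimates are needed: the proof is a chain of substitutions, and the only subtlety is choosing the right $\qquasi\to\cquasi$ inequality in each regime.

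For parts (1) and (2), I start from Corollary~\ref{cor:lower-bound-quasi}(3): the hypothesis $Q_\epsilon^\ent(f)\leq q$ yields $\qquasi^\epsilon(C_f)\leq 2^q$. Since $\mathbf{p}_f=(C_f,0,0)$ has binary outcomes and uniform marginals, I can invoke the Grothendieck inequality stated just after Theorem~\ref{thm:nu-gamma2} (i.e.\ via Proposition~\ref{prop:grothendieck}) to conclude $\cquasi^\epsilon(C_f)\leq K_G\,\qquasi^\epsilon(C_f)\leq K_G\,2^q$. Plugging this into Theorem~\ref{thm:smp-boolean}(1) gives $R_\delta^{\parallel,\pub}(f)\leq 4[K_G 2^q/(1-2\epsilon)]^2\ln(1/\delta)=K_G^2\cdot 2^{2q+2}\ln(1/\delta)/(1-2\epsilon)^2$, which is exactly (1). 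Substituting into Theorem~\ref{thm:smp-boolean}(2) yields (2), with the factor $K_G^4$ absorbed into the $O(\cdot)$.

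For parts (3) and (4), I again start from Corollary~\ref{cor:lower-bound-quasi}, this time part~(2): the hypothesis $Q_\epsilon^\ent(\mathbf{p})\leq q$ gives $\qquasi^\epsilon(\mathbf{p})\leq 2^{2q+2}$. Now Grothendieck alone is not available, since the marginals need not be uniform and the outcomes need not be binary; instead, Theorem~\ref{thm:nu-gamma2}(2) yields $\cquasi^\epsilon(\mathbf{p})\leq[2AB(K_G+1)-1]\cdot 2^{2q+2}=O(AB\cdot 2^{2q})$. Inserting this into Theorem~\ref{thm:smp}(1) and (2) respectively, and consolidating logs via $\log(AB)=O(\ln[AB/\delta])$, produces $O(2^{4q}(AB)^4\delta^{-2}\ln^2[AB/\delta])$ and $O(2^{8q}(AB)^9\delta^{-4}\ln[AB/\delta]\log n)$, matching (3) and (4).

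There is essentially no genuine obstacle here beyond bookkeeping: the four statements are the four combinations (Boolean vs.\ general distribution, classical vs.\ quantum messages) of applying an already-proved upper bound template to an already-proved lower bound template. The one point requiring care is recognising where the tight Grothendieck constant $K_G$ suffices (the uniform-marginal Boolean case, giving the $AB$-free bounds in (1)--(2)) versus where one must pay the looser $2AB(K_G+1)-1$ factor from Theorem~\ref{thm:nu-gamma2}(2) (the general case, giving the polynomial $AB$ dependence in (3)--(4)); this accounts precisely for the differing shapes of the four inequalities.
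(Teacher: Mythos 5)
Your proposal is correct and follows exactly the route the paper indicates: it is the straightforward composition of Corollary~\ref{cor:lower-bound-quasi} (to bound $\qquasi^\epsilon$), the $\qquasi$-to-$\cquasi$ comparison (Proposition~\ref{prop:grothendieck} in the Boolean/uniform-marginal case, Theorem~\ref{thm:nu-gamma2}(2) in general), and the simultaneous-messages upper bounds of Theorems~\ref{thm:smp} and~\ref{thm:smp-boolean}. You also correctly identified why parts~(1)--(2) use Corollary~\ref{cor:lower-bound-quasi}(3) with the tighter bound $\qquasi^\epsilon(C_f)\leq 2^q$ and the constant $K_G$, whereas parts~(3)--(4) must fall back to Corollary~\ref{cor:lower-bound-quasi}(2) (giving $\qquasi^\epsilon(\mathbf{p})\leq 2^{2q+2}$) and the $O(AB)$ factor, which is exactly what produces the differing exponents and $AB$-dependence in the four bounds.
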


The first two items can be compared to results of Yao, Shi and Zhu, and
Gavinsky \textit{et al.}~\cite{yaofinger03,shi05,gkr06},
who show how to simulate any (logarithmic) communication protocol for Boolean functions
in the simultaneous messages
model, with an exponential blowup in communication.
The last two items extend these results to arbitrary non-signaling distributions.

In particular, Item~3 gives in the special case $q=0$, that is, $\mathbf{p}\in \Q$,
a much simpler proof of the constant upper bound on approximating
quantum distributions, which Shi and Zhu prove using
sophisticated techniques based on diamond norms~\cite{shi05}. Moreover, Item~3 is
much more general as it also allows to simulate protocols requiring quantum communication in addition to entanglement.
As for Item~4, it also has new interesting consequences.
For example, it implies that quantum distributions ($q=0$)
can be approximated with logarithmic quantum communication
in the simultaneous messages model, using no additional resources such as shared randomness,
and regardless of the amount of entanglement
in the bipartite state measured by the two parties.

\section{Conclusion and open problems}

By studying communication complexity in the framework
provided by the study of quantum non-locality (and beyond),
we have given very natural and intuitive interpretations of
the otherwise very abstract lower bounds of Linial and
Shraibman.  Conversely, bridging this gap has allowed us to
port these very strong and mathematically elegant
lower bound methods to the much more general problem
of simulating non-signaling distributions.

Since many communication problems may be reduced to the task
of simulating a non-signaling distribution, we hope to see
applications of this lower bound method to concrete problems
for which standard techniques do not apply, in particular for
cases that are not Boolean functions, such as non-Boolean
functions, partial functions or relations. Let us also note
that our method can be generalized to multipartite non-signaling
distributions, and will hopefully lead to applications in
the number-on-the-forehead model, for which quantum
lower bounds seem hard to prove.

In the case of binary distributions with uniform marginals
(which includes in particular Boolean functions), Tsirelson's
theorem (Theorem~\ref{thm:tsirelson}) and
the existence of Grothendieck's constant (Proposition~\ref{prop:grothendieck})
imply that there is at most a constant gap between $\cls$ and $\qls$.
For this reason, it was known that Linial and Shraibman's
factorization norm lower bound technique give lower bounds
of the same of order for classical and quantum communication
(note that this is also true for the related discrepancy method).
Despite the fact that Tsirelson's theorem and Grothendieck's inequality
are not known to extend beyond the case of Boolean outcomes with
uniform marginals, we
have shown that in the general case of distributions, there
is also a constant gap between $\cquasi$ and $\qquasi$.
While this may be seen as a negative result, this also reveals
interesting information about the structure of the sets of local
and quantum distributions. In particular, this
could have interesting consequences for the study of non-local games.


\section*{Acknowledgements}
We are grateful to Benjamin Toner for pointing us towards the existing
literature on non-signaling distributions as well as very useful discussions
of the Linial and Shraibman lower bound on communication complexity.
We also thank
Peter H{\o}yer,
Troy Lee,
Oded Regev,
Mario Szegedy,
and Dieter van Melkebeek
with whom we had many stimulating discussions.
Part of this work was done while J. Roland was
affiliated with FNRS Belgium and U.C. Berkeley.
The research was supported by the EU 7th framework program QCS, and ANR D\'efis QRAC.

\bibliography{biblioCausal}

\newpage
\appendix

\section{Proof of Lemma~\ref{lemma:quantum-communication}\label{appendix:quantum-communication}}

We give the proof of  Lemma~\ref{lemma:quantum-communication}, which relates
the outcome of communication protocols to vectors of bounded norm.
\setcounter{lemma}{2}
\begin{lemma}[\cite{kremer, yao, ls07}]
Let $(C, M_A, M_B)$ be a distribution simulated by a
quantum protocol with shared entanglement
using $q_A$ qubits of communication from Alice to Bob and $q_B$ qubits from Bob to Alice.
There exist vectors $\vec{a}(x),\vec{b}(y)$ with
$\norm{\vec{a}(x)}\leq 2^{q_B}$ and $\norm{\vec{b}(y)} \leq 2^{q_A}$ such that
$C(x,y) = \vec{a}(x)\cdot\vec{b}(y)$.
\end{lemma}

The proof relies on the following observation:
\begin{claim}\label{claim:psit}
Let $\ket{\psi_t}$ be the entangled state shared by Alice and Bob after the first $t=t_A+t_B$ qubits of communication ($t_A$ bits from Alice to Bob, and $t_B$ bits from Bob to Alice). This state may be written as $\ket{\psi_t}=\sum_{i\in I}\mu_i\sum_{T\in\{0,1\}^t} A_T\ket{\alpha^{(i)}}B_T\ket{\beta^{(i)}}$, where $\sum_i |\mu_i|^2=1$, $\{\ket{\alpha^{(i)}}:\forall i\in I\}$ and $\{\ket{\beta^{(i)}:\forall i\in I}\}$ are orthonormal bases for Alice and Bob's initial registers respectively and $A_T,B_T$ are linear operators such that:
\begin{myitemize}
\item $A_0$,$B_0$ are the identity operators on Alice and Bob's initial registers, respectively,
\item $A_T$ are linear operators acting on Alice's initial register and depending on her input only, satisfying
$$\sum_{T\in\{0,1\}^t}\norm{A_T\ket{\psi_A}}^2=2^{t_B}$$ for all (unit) state $\ket{\psi_A}$ of Alice's register.
\item $B_T$ are linear operators depending on Bob's input only, satisfying $\sum_{T\in\{0,1\}^t}\norm{B_T\ket{\psi_B}}^2=2^{t_A}$ for all (unit) state $\ket{\psi_B}$ of Bob's register.
\end{myitemize}
\end{claim}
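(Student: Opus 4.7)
Proof plan for Claim~\ref{claim:psit}.

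The plan is to proceed by induction on $t=t_A+t_B$, the total number of communicated qubits. For the base case $t=0$, I would invoke the Schmidt decomposition of the initial shared entangled state $\ket{\psi_0}$, writing it as $\sum_i\mu_i\ket{\alpha^{(i)}}\ket{\beta^{(i)}}$ with $\sum_i|\mu_i|^2=1$ and orthonormal Schmidt bases, and take $A_0=I_A$ and $B_0=I_B$. The norm conditions then reduce to $\norm{\ket{\psi_A}}^2=\norm{\ket{\psi_B}}^2=1=2^0$, which hold trivially.

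For the inductive step, assume the decomposition holds after $t$ qubits have been exchanged, and consider the $(t{+}1)$-st qubit. Without loss of generality, suppose Alice sends it (the case of Bob sending is symmetric, with the roles of $A$ and $B$, and of $t_A$ and $t_B$, swapped). Model this step in the standard way: Alice appends a fresh ancilla qubit in state $\ket{0}_M$ to her register, applies an input-dependent unitary $U_A$ on her register (including $M$), and then physically transfers $M$ to Bob. To bring the resulting state back into the required form I would insert the resolution of the identity $I_M=\sum_{c\in\{0,1\}}\ket{c}\bra{c}_M$ on the message qubit and define the new operators
\[
A'_{Tc}\ket{\psi_A}\;=\;\bigl(I_A\otimes\bra{c}_M\bigr)\,U_A\bigl(A_T\ket{\psi_A}\otimes\ket{0}_M\bigr),\qquad B'_{Tc}\ket{\psi_B}\;=\;B_T\ket{\psi_B}\otimes\ket{c}_M,
\]
for each extended transcript $Tc\in\{0,1\}^{t+1}$. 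A direct expansion then shows that the state after the transmission equals $\sum_i\mu_i\sum_{T,c}A'_{Tc}\ket{\alpha^{(i)}}\otimes B'_{Tc}\ket{\beta^{(i)}}$, which matches the required form (using the same Schmidt bases).

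The norm conditions are then checked as follows. For Alice, $t_B$ is unchanged, and unitarity of $U_A$ together with $\sum_c\ket{c}\bra{c}_M=I_M$ gives
\[
\sum_{T,c}\norm{A'_{Tc}\ket{\psi_A}}^2\;=\;\sum_T\norm{U_A(A_T\ket{\psi_A}\otimes\ket{0}_M)}^2\;=\;\sum_T\norm{A_T\ket{\psi_A}}^2\;=\;2^{t_B},
\]
which is the required $2^{t_B'}$ since $t_B'=t_B$. For Bob, $t_A$ increases by one, and the factor $\ket{c}_M$ being orthonormal in $c$ doubles the sum:
\[
\sum_{T,c}\norm{B'_{Tc}\ket{\psi_B}}^2\;=\;2\sum_T\norm{B_T\ket{\psi_B}}^2\;=\;2\cdot 2^{t_A}\;=\;2^{t_A'}.
\]

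The main issue is not depth but bookkeeping: one must be careful with how registers grow and shrink when the message qubit is transferred, and with the convention that $A_T,B_T$ act on Alice's and Bob's current registers after the partial exchange. Once this bookkeeping is fixed, the induction is mechanical. The only subtlety worth flagging is that the Schmidt bases $\{\ket{\alpha^{(i)}}\},\{\ket{\beta^{(i)}}\}$ and the coefficients $\mu_i$ are fixed once and for all at $t=0$; all input-dependence of the protocol is absorbed into the operators $A_T$ and $B_T$, which is exactly what is needed to apply this decomposition in the proof of Lemma~\ref{lemma:quantum-communication}.
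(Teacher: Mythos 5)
Your proof is correct and follows essentially the same inductive strategy as the paper's: Schmidt decomposition at $t=0$, then split the sender's unitary into message-conditioned branch operators, and track the norm sums by unitarity (sender) and orthogonality of the message basis (receiver). The only cosmetic difference is the model of a communication round: you append a fresh ancilla $\ket{0}_M$ and then send it, while the paper applies $U_t$ directly and isolates one of the qubits already in the sender's register; both bookkeeping conventions lead to the same branch operators and norm counts. One small thing you do better: you carry the norm identity $\sum_T\norm{A_T\ket{\psi_A}}^2=2^{t_B}$ through the induction for an arbitrary unit $\ket{\psi_A}$, whereas the paper's inductive hypothesis is only stated for the Schmidt basis states $\ket{\alpha^{(i)}}$ (which, though sufficient since the same computation works verbatim for any $\ket{\psi_A}$, does not literally match the quantifier in the claim).
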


\begin{proof}[Proof of Claim~\ref{claim:psit}]
We prove this by induction over $t$. This is true for $t=0$, since using Schmidt decomposition, we may write the initial entangled state shared by Alice and Bob, before the quantum communication protocol is initiated, as $\ket{\psi_0}=\sum_{i\in I}\mu_i \ket{\alpha^{(i)}}\ket{\beta^{(i)}}$, where $\sum_i |\mu_i|^2=1$ and $\{\ket{\alpha^{(i)}}:\forall i\in I\}$ and $\{\ket{\beta^{(i)}:\forall i\in I}\}$ are orthonormal bases for Alice and Bob's registers respectively (as is, these are actually just orthonormal, but we can always obtain a basis by setting $\mu_i=0$ for the missing basis vectors).

If this is true for $t-1$, then we have $\ket{\psi_{t-1}}=\sum_{i\in I}\mu_i\sum_{T\in\{0,1\}^{t-1}} A_T\ket{\alpha^{(i)}}B_T\ket{\beta^{(i)}}$, where\linebreak $\sum_{T\in\{0,1\}^{t-1}}\norm{A_T\ket{\alpha^{(i)}}}^2=2^{t_B}$ and $\sum_{T\in\{0,1\}^{t-1}}\norm{B_T\ket{\beta^{(i)}}}^2=2^{t_A-1}$ for all $i\in I$ (we assume without loss of generality that the $t$'s qubit is sent by Alice to Bob). Alice's operation at turn $t$ will be to apply some unitary operation $U_t$ on her register, then send one of the qubits in her register to Bob. By isolating this qubit, we define the linear operators $A_{T0}$ and $A_{T1}$ to be such that $U_tA_T\ket{\alpha^{(i)}}=A_{T0}\ket{\alpha^{(i)}}\ket{0}+A_{T1}\ket{\alpha^{(i)}}\ket{1}$ for all $i\in I$. Unitarity then implies that
$\norm{A_{T0}\ket{\alpha^{(i)}}}^2+\norm{A_{T1}\ket{\alpha^{(i)}}}^2=\norm{A_T\ket{\alpha^{(i)}}}^2$, and as a consequence $\sum_{T\in\{0,1\}^t}\norm{A_T\ket{\alpha^{(i)}}}^2=2^{t_B}$. We then have
\begin{eqnarray}
\ket{\psi_t}&=&\sum_{i\in I}\mu_i\sum_{T\in\{0,1\}^{t-1}} \left[A_{T0}\ket{\alpha^{(i)}}\ket{0}B_T\ket{\beta^{(i)}}+A_{T1}\ket{\alpha^{(i)}}\ket{1}B_T\ket{\beta^{(i)}}\right]\\
&=&\sum_{i\in I}\mu_i\sum_{T\in\{0,1\}^{t}} A_T\ket{\alpha^{(i)}}B_T\ket{\beta^{(i)}},
\end{eqnarray}
where, for all $T\in\{0,1\}^{t-1}$, we have defined linear operators $B_{T0},B_{T1}$ such that $B_{T0}\ket{\beta^{(i)}}=\ket{0}B_T\ket{\beta^{(i)}}$ and $B_{T1}\ket{\beta^{(i)}}=\ket{1}B_T\ket{\beta^{(i)}}$ for all $i\in I$, considering that the additional qubit is in Bob's hands at the end of turn $t$. Furthermore, we have $\norm{B_{T0}\ket{\beta^{(i)}}}^2+\norm{B_{T1}\ket{\beta^{(i)}}}^2=2\norm{B_T\ket{\beta^{(i)}}}^2$, and as a consequence $\sum_{T\in\{0,1\}^t}\norm{B_T\ket{\beta^{(i)}}}^2=2^{t_A}$, which completes the proof of our claim.
\end{proof}

\begin{proof}[Proof of Lemma~\ref{lemma:quantum-communication}]
At the end of the quantum communication protocol, Alice and Bob share a quantum state $\ket{\psi_q}$ satisfying Claim~\ref{claim:psit} for $t=q$. Alice and Bob then perform binary ($\{+1,-1\}$-valued) measurements $A$ and $B$ on their respective parts of the state. By orthonormality of the states $\ket{\psi_q^{(i)}}$, we have for the correlation
\begin{eqnarray}
C&=&\bra{\psi_q}AB\ket{\psi_q}\\
&=&\sum_{i,j\in I}\mu_i^*\mu_j \sum_{T,U\in\{0,1\}^q} \bra{\alpha^{(i)}}A_T^\dagger A A_U\ket{\alpha^{(j)}} \bra{\beta^{(i)}}B_T^\dagger B B_U\ket{\beta^{(j)}}.
\end{eqnarray}
We may now define the vectors $\vec{a}(x)$ and $\vec{b}(y)$ in a $2^{2t}|I|^2$-dimensional complex vector space, with coordinates
\begin{eqnarray}
a_{TUij}(x)&=&\mu_i \bra{\alpha^{(j)}}A_U^\dagger A A_T \ket{\alpha^{(i)}},\\
b_{TUij}(x)&=&\mu_j \bra{\beta^{(i)}}B_T^\dagger B B_U \ket{\beta^{(j)}},\quad\forall\ T,U\in\{0,1\}^{q},i,j\in I,
\end{eqnarray}
so that $C=\vec{a}(x)\cdot\vec{b}(y)$. Moreover, using the fact that the $\ket{\alpha^{(j)}}$'s define an orthonormal basis for Alice's register and the property on the norms of the operators $A_T$, we have
\begin{eqnarray}
\norm{\vec{a}(x)}^2&=&\sum_{i,j\in I}|\mu_i|^2 \sum_{T,U\in\{0,1\}^q} |\bra{\alpha^{(j)}}A_U^\dagger A A_T \ket{\alpha^{(i)}}|^2\\
&=&\sum_{i\in I}|\mu_i|^2 \sum_{T,U\in\{0,1\}^q} \norm{A_U^\dagger A A_T \ket{\alpha^{(i)}}}^2\\
&\leq&\sum_{i\in I}|\mu_i|^2 \sum_{T,U\in\{0,1\}^q} \norm{A_U^\dagger\ket{\phi_T^{(i)}}}^2 \norm{A_T\ket{\alpha^{(i)}}}^2=2^{2q_B},
\end{eqnarray}
where $\ket{\phi_T^{(i)}}$ is the renormalized state $A A_T \ket{\alpha^{(i)}}$. So, we have $\norm{\vec{a}(x)}\leq2^{q_B}$, and similarly $\norm{\vec{b}(y)}\leq2^{q_A}$.
\end{proof}

\end{document}